\newcounter{remark}
\newenvironment{remark}[1][]{\refstepcounter{remark} \ifthenelse{\equal{#1}{}}{\noindent\textbf{Remark~\theremark. }}{\noindent \textbf{Remark~\theremark~(#1).}}}{\medskip}
\title{Consistent Probabilistic Social Choice}
\author{Florian Brandl \quad Felix Brandt \quad Hans Georg Seedig\\
Technische Universit\"at M\"unchen\\\texttt{\small \{brandlfl,brandtf,seedigh\}@in.tum.de}
}
\date{}
\newcommand{\pref}{\succcurlyeq}
\newcommand{\supp}{\mathrm{supp}}
\newcommand{\fone}{\mathcal{F}}
\newcommand{\aff}{\mathrm{aff}}
\newcommand{\conv}{\mathrm{conv}}
\newcommand{\lin}{\mathrm{lin}}
\renewcommand{\int}{\mathrm{int}}
\newcommand{\uni}{\mathrm{uni}}
\newcommand{\cl}{\mathrm{cl}}
\renewcommand{\gcd}{\kappa}
\renewcommand{\epsilon}{\varepsilon}
\newcommand{\ml}[1][]{\ifthenelse{\equal{#1}{}}{\mathit{ML}}{\mathit{ML}(#1)}}
\newcommand{\rd}[1][]{\ifthenelse{\equal{#1}{}}{\mathit{RD}}{\mathit{RD}(#1)}}
\newcommand{\rsd}[1][]{\ifthenelse{\equal{#1}{}}{\mathit{RSD}}{\mathit{RSD}(#1)}}
\colorlet{shadecolor}{orange!25}
		\newcommand{\subfigureCMD}[4][]{\ifthenelse{\equal{#3}{}}{\subfloat[#2]{#4}}{\subfloat[#2\label{#3}]{#4}}}
		\newcommand{\subfigureCMD}[4][]{\ifthenelse{\equal{#3}{}}{\subfigure[#2]{#4}}{\subfigure[#2\label{#3}]{#4}}}
		\newcommand{\subfigureCMD}[4][]{\ifthenelse{\equal{#3}{}}{\subcaptionbox{#2}{#4}}{\subcaptionbox{#2\label{#3}}{#4}}}
\newcolumntype{P}[1]{>{\centering\arraybackslash}p{#1}}
\newcolumntype{Y}{>{\centering\arraybackslash}X}
\begin{document}

\maketitle

\begin{abstract}
Two fundamental axioms in social choice theory are consistency with respect to a variable electorate and consistency with respect to components of similar alternatives. In the context of traditional non-probabilistic social choice, these axioms are incompatible with each other. 
We show that in the context of \emph{probabilistic} social choice, these axioms uniquely characterize a function proposed by Fishburn (Rev. Econ. Stud., 51(4), 683--692, 1984). 
Fishburn's function returns so-called \emph{maximal lotteries}, \ie lotteries that correspond to optimal mixed strategies in the symmetric zero-sum game induced by the pairwise majority margins. Maximal lotteries are guaranteed to exist due to von Neumann's Minimax Theorem, are almost always unique, and can be efficiently computed using linear programming.
\end{abstract}

\section{Introduction}

Many important properties in the theory of social choice concern the consistency of aggregation functions under varying parameters. 
What happens if two electorates are merged? 
How should an aggregation function deal with components of similar alternatives? 
How should choices from overlapping agendas be related to each other?
These considerations have led to a number of consistency axioms that these functions should ideally satisfy.\footnote{Consistency conditions have found widespread acceptance well beyond social choice theory and feature prominently in the characterizations of various concepts in mathematical economics such as proportional representation rules \citep{BaYo78a}, Nash's bargaining solution \citep{Lens88a}, the Shapley value \citep{HaMa89a}, and Nash equilibrium \citep{PeTi96a}. \citet{Youn94a} and \citet{Thom14a} provide excellent overviews and give further examples.
} 
Unfortunately, social choice theory is rife with impossibility results which have revealed the incompatibility of many of these properties. \citet{YoLe78a}, for example, have pointed out that every social choice function that satisfies Condorcet-consistency violates consistency with respect to variable electorates. On the other hand, it follows from results by \citet{Youn75a} and \citet{Lasl96a} that all Pareto-optimal social choice functions that are consistent with respect to variable electorates are inconsistent with respect to components of similar alternatives.

The main result of this paper is that, in the context of \emph{probabilistic} social choice, consistency with respect to variable electorates and consistency with respect to components of similar alternatives uniquely characterize an appealing probabilistic social choice function, which furthermore satisfies Condorcet-consistency.
Probabilistic social choice functions yield lotteries over alternatives (rather than sets of alternatives) and were first formally studied by \citet{Zeck69a}, \citet{Fish72b}, and \citet{Intr73a}. Perhaps one of the best known results in this context is \citeauthor{Gibb77a}'s characterization of strategyproof probabilistic social choice functions \citep{Gibb77a}. 
An important corollary of Gibbard's characterization, attributed to Hugo Sonnenschein, is that \emph{random dictatorships} are the only strategyproof and \emph{ex post} efficient probabilistic social choice functions. In random dictatorships, one of the voters is picked at random and his most preferred alternative is implemented as the social choice.
While Gibbard's theorem might seem as an extension of classic negative results on strategyproof non-probabilistic social choice functions \citep{Gibb73a,Satt75a}, it is in fact much more positive \citep[see also][]{Barb79b}. In contrast to deterministic dictatorships, the uniform random dictatorship (henceforth, random dictatorship), in which every voter is picked with the same probability, enjoys a high degree of fairness and is in fact used in many subdomains of social choice that are concerned with the fair assignment of objects to agents \citep[see, \eg][]{AbSo98a,BoMo04a,ChKo10a,BCKM12a}.

\subsection*{Summary of Results}

In this paper, we consider two consistency axioms, non-probabilistic versions of which have been widely studied in the literature. The first one, \emph{population-consistency}, requires that, whenever two disjoint electorates agree on a lottery, this lottery should also be chosen by the union of both electorates.
The second axiom, \emph{composition-consistency}, requires that the probability that an alternative receives is unaffected by introducing new variants of another alternative. Alternatives are variants of each other if they form a component, \ie they bear the same relationship to all other alternatives.
On top of that, composition-consistency prescribes that the probability of an alternative within a component should be directly proportional to the probability that the alternative receives when the component is considered in isolation.
Apart from their intuitive appeal, these axioms can be motivated by the desire to 
prevent a central planner from strategically partitioning the electorate into subelectorates or by deliberately introducing similar variants of alternatives, respectively.
Our first result shows that there is no non-probabilistic social choice function that satisfies both axioms simultaneously. We then move to probabilistic social choice and prove that the only probabilistic social choice function satisfying these properties is the function that returns all \emph{maximal lotteries} for a given preference profile. 
Maximal lotteries, which were proposed independently by \citet{Fish84a} and other authors, are equivalent to mixed maximin strategies of the symmetric zero-sum game given by the pairwise majority margins. Whenever there is an alternative that is preferred to any other alternative by some majority of voters (a so-called Condorcet winner), the lottery that assigns probability~1 to this alternative is the unique maximal lottery. In other words, maximal lotteries satisfy \emph{Condorcet-consistency}.
At the same time, maximal lotteries satisfy population-consistency which has been identified by \citet{Youn74a}, \citet{NiRu81a}, \citet{Saar90b}, and others as \emph{the} defining property of Borda's scoring rule. As such, the characterization can be seen as one possible resolution of the historic dispute between the founding fathers of social choice theory, the Chevalier de Borda and the Marquis de Condorcet, which dates back to the 18th century.\footnote{In this sense, our main theorem is akin to the characterization of Kemeny's rule by \citet{YoLe78a}. \citeauthor{YoLe78a} considered \emph{social preference functions}, \ie functions that return sets of \emph{rankings} of alternatives, and showed that Kemeny's rule is characterized by strong versions of population-consistency and Condorcet-consistency.\\
Interestingly, all three rules---Borda's rule, Kemeny's rule, and maximal lotteries---maximize aggregate score in a well-defined sense. For maximal lotteries, this is the case because they maximize social welfare according to the canonical skew-symmetric bilinear (SSB) utility functions representing the voters’ ordinal preferences \citep{BBH15c}. SSB utility theory goes back to \citet{Fish82c} and is a generalization of von Neumann-Morgenstern utility theory.}

Random dictatorship, the most prevalent probabilistic social choice function, satisfies population-consistency, but fails to satisfy composition-consistency. For this reason, we also consider a weaker version of composition-consistency called cloning-consistency, which is satisfied by random dictatorship, and provide an alternative characterization of maximal lotteries using population-consistency, cloning-consistency, and Condorcet-consistency (see \remref{rem:cloning}). 

\subsection*{Acceptability of Social Choice Lotteries}

Clearly, allowing lotteries as outcomes for high-stakes political elections such as those for the U.S.~presidency would be highly controversial and considered by many a failure of deliberative democracy. 
If, on the other hand, a small group of people repeatedly votes on where to hold their next meeting, randomization would likely be more acceptable and perhaps even desirable.
The use of lotteries for the selection of officials interestingly goes back to the world's first democracy in Athens where it was widely regarded as a principal characteristic of democracy \citep{Head33a}.
It has also been early observed in the social choice literature that ``unattractive social choices may result whenever lotteries are not allowed to compete. [\dots] Refusal to entertain lotteries on alternatives can lead to outcomes that to many appear to be inequitable and perhaps even inefficient'' \citep{Zeck69a}.\footnote{It is interesting to note that the ``intransitivity difficulties'' that \citet{Zeck69a} examines in the context of probabilistic social choice disappear when replacing majority rule with \emph{expected} majority rule. This directly leads to Fishburn's definition of maximal lotteries.}
 In contemporary research, probabilistic social choice has gained increasing interest in both social choice \citep[see, \eg][]{EPS02a,BMS05a,CSZ14a} and political science \citep[see, \eg][]{Good05a,Dowl09a,Ston11a}.

Whether lotteries are socially acceptable depends on many factors, only some of which are based on formal arguments. In our view, two important factors are \emph{(i) the effective degree of randomness} and \emph{(ii) risk aversion on behalf of the voters}. 

As to \emph{(i)}, it is easily seen that certain cases call for randomization or other means of tie-breaking. For example, if there are two alternatives, $a$ and $b$, and exactly half of the voters prefer $a$ while the other half prefers $b$, there is no deterministic way of selecting a single alternative without violating basic fairness conditions. There are several possibilities to extend the notion of a tied outcome to three or more alternatives. An important question in this context is whether ties are a rare exception or a common phenomenon.
A particularly rigorous and influential extension due to Condorcet declares a tie in the absence of a pairwise majority winner. According to Condorcet, it is the intransitivity of social preferences, as exhibited in the Condorcet paradox, that leads to situations in which there is no unequivocal winner. As it turns out, maximal lotteries are degenerate if and only if there is a Condorcet winner. Our main result thus establishes that the degree of randomness entailed by our axioms is precisely in line with Condorcet's view of equivocality.
Interestingly, there is strong empirical and experimental evidence that most real-world preference profiles for \emph{political} elections do admit a Condorcet winner \citep[see, \eg][]{RGMT06a,Lasl10a,GeLe11a}.\footnote{
Analytical results for the likelihood of Condorcet winners are typically based on the simplistic ``impartial culture'' model, which assumes that every preference relation is equally likely. According to this model, a Condorcet winner, for example, exists with a probability of at least 63\% when there are seven alternatives \citep{Fish73a}. 
The impartial culture model is considered highly unrealistic and \citet{RGMT06a} argued that it significantly underestimates the probability of Condorcet winners. \citet{GeLe11a} summarized 37 empirical studies from 1955 to 2009 and concluded that ``there is a possibility that Condorcet's Paradox might be observed, but that it probably is not a widespread phenomenon.''
\citet{Lasl10a} and \citet{BrSe14a} reported concrete probabilities for the existence of Condorcet winners under various distributional assumptions using computer simulations.
A common observation in these studies is that the probability of a Condorcet winner generally decreases with increasing number of alternatives.
For example, \citeauthor{BrSe14a} found that, for 15 voters and a spatial distribution of preferences that is commonly used in political science, the probability of a Condorcet winner ranges from 98\% (for three alternatives) to 59\% (for 50 alternatives).
} Maximal lotteries only randomize in the less likely case of cyclical majorities. \citet{BrSe14a} specifically analyzed the support of maximal lotteries and found that the average support size is less than four under various distributional assumptions and up to 30 alternatives.
By contrast, random dictatorship randomizes over all alternatives in almost all elections.

As to \emph{(ii)}, risk aversion is strongly related to the frequency of preference aggregation.
If an aggregation procedure is not frequently repeated, the law of large numbers does not apply and risk-averse voters might prefer a sure outcome to a lottery whose expectation they actually prefer to the sure outcome. Hence, probabilistic social choice seems particularly suitable for novel preference aggregation settings that have been made possible by technological advance. The Internet, in particular, allows for much more frequent preference aggregation than traditional paper-and-pencil elections. 
In recurring randomized elections with a fixed set of alternatives, voters need not resubmit their preferences for every election; rather preferences can be stored and only changed if desired. For example, maximal lotteries could help a group of coworkers with the daily decision of where to have lunch without requiring them to submit their preferences every day. 
Another example are automatic music broadcasting systems, such as Internet radio stations or software DJs, that decide which song should be played next based on the preferences of the listeners.
In contrast to traditional deterministic solutions to these problems such as sequential dictatorships, the repeated execution of lotteries is a memoryless process that guarantees \emph{ex ante} fairness after any number of elections.

Finally, it should be noted that the lotteries returned by probabilistic social choice functions do not necessarily have to be interpreted as probability distributions. They can, for instance, also be seen as fractional allocations of divisible objects such as time shares or monetary budgets. The axioms considered in this paper are equally natural for these interpretations as they are for the probabilistic interpretation.

\section{Preliminaries}
\label{sec:prelims}

Let $U$ be an infinite universal set of alternatives.
The set of \emph{agendas} from which alternatives are to be chosen is the set of finite and non-empty subsets of~$U$, denoted by~$\fone(U)$.
The set of all linear (\ie complete, transitive, and antisymmetric) \emph{preference relations} over some set $A\in\fone(U)$ will be denoted by $\mathcal{L}(A)$.

For some finite set $X$, we denote by $\Delta(X)$ the set of all probability distributions with rational values over $X$.
A (fractional) preference profile $R$ for a given agenda $A$ is an element of $\Delta(\mathcal{L}(A))$, which can be associated with the $(|A|!-1)$-dimensional (rational) unit simplex.
We interpret $R({\pref})$ as the fraction of voters with preference relation ${\pref}\in\mathcal{L}(A)$. Preference profiles are depicted by tables in which each column represents a preference relation $\pref$ with $R({\pref})>0$.
The table below shows an example profile on three alternatives.\footnote{Our representation of preference profiles implicitly assumes that aggregation functions are anonymous (\ie all voters are treated identically) and homogeneous (\ie duplication of the electorate does not affect the outcome). 
Similar models (sometimes even assuming a continuum of voters) have for example been considered by \citet{Youn74b,Youn75a}, \citet{YoLe78a}, \citet{Saar95a}, \citet{DaMa08a}, \citet{ChKo10a}, and \citet{BuCa12a}.}
\[
\begin{array}{ccc}
\nicefrac{1}{2} & \nicefrac{1}{3} & \nicefrac{1}{6} \\\midrule
a & a & b \\
b & c & c \\
c & b & a
\end{array}
\tag{Example 1}
\]
The set of all preference profiles for a fixed agenda $A$ is denoted by $\mathcal{R}|_A$ and  $\mathcal{R}$ is defined as the set of all preference profiles, \ie $\mathcal{R} = \bigcup_{A\in\fone(U)} \mathcal{R}|_A$.
For $B\subseteq A$ and $R\in\mathcal{R}|_A$, $R|_B$ is the restriction of $R$ to alternatives in $B$, \ie for all ${\pref}\in\mathcal{L}(B)$,

\[R|_B({\pref}) = \sum_{{\pref}'\in \mathcal{L}(A)\colon{\pref}\subseteq{\pref}'} R({\pref}')\text.\]

For all $x,y\in A$, $R(x,y) = R|_{\{x,y\}}({\{(x,y)\}})$ is the fraction of voters who prefer $x$ to $y$ (the set $\{(x,y)\}$ represents the preference relation on two alternatives with $x\succ y$). In Example~1, $R(a,b) = \nicefrac{5}{6}$.

Elements of $\Delta(A)$ are called \emph{lotteries} and will be written as convex combinations of alternatives.
If $p$ is a lottery, $p_x$ is the probability that $p$ assigns to alternative $x$.

A \emph{probabilistic social choice function (PSCF)} $f$ is an (upper hemi-) continuous function that, for any agenda $A\in\mathcal{F}(U)$, maps a preference profile $R\in \mathcal{R}|_A$ to a non-empty convex subset of $\Delta(A)$.\footnote{\citet[][pp.~248--249]{Fish73a} argued that the set of lotteries returned by a probabilistic social choice function should be convex because it would be unnatural if two lotteries are socially acceptable while a randomization between them is not \citep[see also][p.~201]{Fish72b}.} 
A PSCF is thus a collection of mappings from high-dimensional simplices to low-dimensional simplices.
Two further properties that we demand from any PSCF are \emph{unanimity} and \emph{decisiveness}. Unanimity states that in the case of one voter and two alternatives, the preferred alternative should be chosen with probability~1.\footnote{This is the only condition we impose that actually interprets the preference relations. It is equivalent to \emph{ex post} efficiency for agendas of size~2 and is slightly weaker than Young's faithfulness \citep{Youn74a}.
Our results still hold when replacing unanimity with the even less controversial condition that merely requires that $f(R)\ne\{y\}$ whenever $R(x,y)=1$.} Since we only consider fractional preference profiles, this amounts to for all $x,y\in U$ and $R\in\mathcal{R}|_{\{x,y\}}$,
\begin{align*}
f(R) = \{x\} \text{ whenever } R(x,y) = 1\text{.}\tag{unanimity}
\intertext{Decisiveness requires that the set of preference profiles where $f$ is multi-valued is negligible in the sense that for all $A\in \mathcal{F}(U)$,}
\{R\in\mathcal{R}|_A \colon |f(R)|=1\} \text{ is dense in } \mathcal{R}|_A\text{.}\tag{decisiveness}
\end{align*}
In other words, for every preference profile that yields multiple lotteries, there is an arbitrarily close preference profile that only yields a single lottery.

Probabilistic social choice functions considered in the literature usually satisfy these conditions and are therefore well-defined PSCFs. For example, consider \emph{random dictatorship ($\rd$)}, in which one voter is picked uniformly at random and his most-preferred alternative is returned. Formally, $\rd$ returns the unique lottery, which is determined by multiplying fractions of voters with their respective top choices, \ie for all $A\in\mathcal{F}(U)$ and $R\in\mathcal{R}|_A$,
\[
\rd(R) = \left\{\sum_{{\pref}\in \mathcal{L}(A)} R(\pref) \cdot \max_\pref(A)\right\}\text{,}\tag{random dictatorship}
\]
where $\max_\pref(A)$ denotes the unique alternative $x$ such that $x\pref y$ for all $y\in A$.
For the preference profile $R$ given in Example~1,
\[
\rd[R] = \{\nicefrac{5}{6}\,a+\nicefrac{1}{6}\,b\}\text{.}
\]
$\rd$ is single-valued and therefore trivially decisive and convex-valued.
It is also easily seen that $\rd$ satisfies unanimity and continuity and thus constitutes a PSCF.

A useful feature of our definition of PSCFs is that traditional \emph{set-valued} social choice functions (SCFs) (also known as \emph{social choice correspondences}) can be seen as a special case, namely as those PSCFs that map every preference profile $R\in\mathcal{R}|_A$ on some agenda $A\in\mathcal{F}(U)$ to $\Delta(X)$ for some $X\subseteq A$.
Such PSCFs will be called \emph{(non-probabilistic) SCFs}.

\section{Population-consistency and Composition-consistency}
\label{sec:consistency}

The consistency conditions we consider are generalizations of the corresponding conditions for SCFs, \ie the axioms coincide with their non-probabilistic counterparts.

The first axiom relates choices from varying electorates to each other. More precisely, it requires that whenever a lottery is chosen simultaneously by two electorates, this lottery is also chosen by the union of both electorates. For example,
consider the two preference profiles $R'$ and $R''$ given below.
\[
\begin{array}{cc}
\nicefrac{1}{2} & \nicefrac{1}{2} \\\midrule
a & b \\
b & c \\
c & a \\[2ex]
\multicolumn{2}{c}{R'}\\
\end{array}
\qquad\qquad
\begin{array}{cc}
\nicefrac{1}{2} & \nicefrac{1}{2} \\\midrule
a & b \\
c & c \\
b & a \\[2ex]
\multicolumn{2}{c}{R''}\\
\end{array}
\qquad\qquad
\begin{array}{lcccr}
&\nicefrac{1}{4} & \nicefrac{1}{4} & \nicefrac{1}{2} & \\\cmidrule{2-4}
&a & a & b &\\
&b & c & c &\\
&c & b & a &\\[2ex]
\multicolumn{5}{c}{\nicefrac{1}{2}\, R' + \nicefrac{1}{2}\, R''}\\
\end{array}
\tag{Example 2}
\]
Population-consistency then demands that any lottery that is chosen in both $R'$ and $R''$ (say, $\nicefrac{1}{2}\, a + \nicefrac{1}{2}\, b$) also has to be chosen when both preference profiles are merged.
Formally, a PSCF satisfies population-consistency if for all $A\in\mathcal{F}(U)$, $R',R''\in\mathcal{R}|_A$, and any convex combination $R$ of $R'$ and $R''$, \ie $R=\lambda R' +(1-\lambda) R''$ for some $\lambda\in [0,1]$,

\[f(R') \cap f(R'') \quad\subseteq\quad 
f(R)\text{.}
\tag{population-consistency}\]

Population-consistency is arguably one of the most natural axioms for variable electorates and is usually considered in a slightly stronger version, known as \emph{reinforcement} or simply \emph{consistency}, where the inclusion in the equation above is replaced with equality whenever the left-hand side is non-empty (see also \remref{rem:pcons}).
Note that population-consistency is merely a statement about abstract sets of outcomes, which makes no reference to lotteries whatsoever.
 It was first considered independently by \citet{Smit73a}, \citet{Youn74a}, and \citet{FiFi74a} and features prominently in the characterization of scoring rules by \citet{Smit73a} and \citet{Youn75a}. Population-consistency and its variants have found widespread acceptance in the social choice literature \citep[see, \eg][]{Youn74b,Fish78d,YoLe78a,Saar90a,Saar95a,Myer95b,CoMe12a}.

The second axiom prescribes how PSCFs should deal with \emph{decomposable} preference profiles. 
For two agendas $A,B\in\fone(U)$, $B\subseteq A$ is a component in $R\in\mathcal{R}|_{A}$ if the alternatives in $B$ are \emph{adjacent} in all preference relations that appear in $R$, \ie for all $a\in A \setminus B$ and $b,b'\in B$, $a \pref b$ if and only if $a \pref b'$ for all ${\pref}\in\mathcal{L}(A)$ with $R({\pref})>0$. Intuitively, the alternatives in $B$ can be seen as variants or clones of the same alternative because they have exactly the same relationship to all alternatives that are not in $B$. For example, consider the following preference profile $R$ in which $B=\{b,b'\}$ constitutes a component. 

\[
\begin{array}{ccc}
\nicefrac{1}{3} & \nicefrac{1}{6} & \nicefrac{1}{2} \\\midrule
a & a & b \\
b' & b & b' \\
b & b' & a \\[2ex]
\multicolumn{3}{c}{R}\\
\end{array}
\qquad\qquad
\begin{array}{cc}
\nicefrac{1}{2} & \nicefrac{1}{2} \\\midrule
a & b \\
b & a \\ \\[2ex]
\multicolumn{2}{c}{R|_{A'}}\\
\end{array}
\qquad\qquad
\begin{array}{cc}
\nicefrac{1}{3} & \nicefrac{2}{3} \\\midrule
b' & b \\
b & b' \\ \\[2ex]
\multicolumn{2}{c}{R|_B}\\ 
\end{array}
\tag{Example 3}
\]
The `essence' of $R$ is captured by $R|_{A'}$, where $A'=\{a,b\}$ contains only one of the cloned alternatives. It seems reasonable to demand that a PSCF should assign the same probability to $a$ (say, $\nicefrac{1}{2}$) independently of the number of clones of $b$ and the internal relationship between these clones. This condition will be called cloning-consistency and was first proposed by \citet{Tide87a} \citep[see also][]{ZaTi89a}. Its origins can be traced back to earlier, more general, decision-theoretic work by \citet{ArHu72a} and \citet{Mask79a} where it is called \emph{deletion of repetitious states} as well as early work on majoritarian SCFs by \citet{Moul86a}.
For a formal definition of cloning-consistency, let $A',B\in\fone(U)$ and $A = A'\cup B$ such that $A'\cap B = \{b\}$. Then, a PSCF $f$ satisfies cloning-consistency if, for all $R\in\mathcal{R}|_A$ such that $B$ is a component in $R$, 
\[\left\{(p_x)_{x\in A\setminus B}\colon p\in f(R)\right\} = \left\{(p_x)_{x\in A\setminus B}\colon p\in f(R|_{A'})\right\}\text{.}\]

When having a second look at Example~3, it may appear strange that cloning-consistency does not impose any restrictions on the probabilities that $f$ assigns to the clones. While clones behave completely identical with respect to uncloned alternatives, they are not indistinguishable from \emph{each other}. It seems that the relationships between clones ($R|_B$) should be taken into account as well. For example, one would expect that $f$ assigns more probability to $b$ than to $b'$ because two thirds of the voters prefer $b$ to $b'$. An elegant and mathematically appealing way to formalize this intuition is to require that the probabilities of the clones $b$ and $b'$ are directly proportional to the probabilities that $f$ assigns to these alternatives when restricting the preference profile to the component $\{b,b'\}$. This condition, known as \emph{composition-consistency}, is due to \citet{LLL96a} and was studied in detail for majoritarian SCFs \citep[see, \eg][]{Lasl96a,Lasl97a,Bran11b,BBS11a,Hora13a}.\footnote{More generally, modular decompositions of discrete structures have found widespread applications in operations research and combinatorial optimization \citep[see, \eg][]{Moeh85a}.}

For a formal definition of composition-consistency, let $p\in\Delta(A')$ and $q\in\Delta(B)$ and define
	\[(p\times_b q)_x =
	\begin{cases}
		p_x &\text{if } x\in A\setminus B,\\
		p_b q_x\quad &\text{if } x \in B.
	\end{cases}\]
	The operator $\times_b$ is extended to sets of lotteries $X\subseteq\Delta(A')$ and $Y\subseteq\Delta(B)$ by applying it to all pairs of lotteries in $X\times Y$, \ie
	$X\times_b Y = \{p\times_b q\in\Delta(A)\colon p\in X \text{ and } q\in Y\}$.
	
	Then, a PSCF $f$ satisfies composition-consistency if for all $R\in\mathcal{R}|_A$ such that $B$ is a component in $R$, 
\[f(R|_{A'})\times_b f(R|_B) \quad=\quad f(R) \text{.} \tag{composition-consistency}\]
In Example~3 above, $\nicefrac{1}{2}\, a+\nicefrac{1}{2}\, b\in f(R|_{A'})$,  $\nicefrac{2}{3}\, b+\nicefrac{1}{3}\, b'\in f(R|_B)$, and composition-consistency would imply that $\nicefrac{1}{2}\, a+\nicefrac{1}{3}\, b+\nicefrac{1}{6}\, b'\in f(R)$.

\section{Non-probabilistic Social Choice}

In the context of SCFs (\ie PSCFs that only return the convex hull of degenerate lotteries), there is some friction between population-consistency and composition-consistency. 
In fact, the conflict between these notions can be traced back to the well-documented dispute between the pioneers of social choice theory, the Chevalier de Borda and the Marquis de Condorcet
\citep[see, \eg][]{Blac58a,Youn88a,Youn95a,McHe94a}. Borda proposed a score-based voting rule---\emph{Borda's rule}---that can be axiomatically characterized using population-consistency \citep{Youn74a}. It then turned out that the entire class of scoring rules (which apart from Borda's rule also includes plurality rule) is characterized by population-consistency \citep{Smit73a,Youn74b,Youn75a}. Condorcet, on the other hand, advocated \emph{Condorcet-consistency}, which requires that an SCF selects a Condorcet winner whenever one exists. As Condorcet already pointed out, Borda's rule fails to be Condorcet-consistent.  Worse, \citet{YoLe78a} even showed that no Condorcet-consistent SCF satisfies population-consistency (the defining property of scoring rules).\footnote{Theorem 2 by \citet{YoLe78a} actually uses the strong variant of population-consistency, but their proof also holds for population-consistency as defined in this paper.} 
\citet{Lasl96a}, on the other hand, showed that no Pareto-optimal rank-based rule---a generalization of scoring rules---satisfies composition-consistency while this property is satisfied by various Condorcet-consistent SCFs \citep{LLL96a}.
One of the few SCFs that satisfies both properties is the rather indecisive Pareto rule (which returns all alternatives that are not Pareto-dominated). Since our definition of PSCFs already incorporates a certain degree of decisiveness, we obtain the following impossibility. (The proofs of all theorems are deferred to the Appendix.)

\begin{theorem}\label{thm:scf}
	There is no SCF that satisfies population-consistency and composition-consistency.\footnote{\thmref{thm:scf} still holds when replacing composition-consistency with the weaker condition of cloning-consistency.}
\end{theorem}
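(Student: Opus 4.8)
The plan is to exploit the fact that an SCF returns $\Delta(X)$ for some $X\subseteq A$, so I may identify $f(R)$ with its support $X(R)=\{x\in A\colon p_x>0 \text{ for some }p\in f(R)\}$. Under this identification the two axioms become purely combinatorial. Population-consistency reads: for $R=\lambda R'+(1-\lambda)R''$ one has $X(R')\cap X(R'')\subseteq X(R)$, i.e.\ along any line of profiles the set $\{R\colon x\in X(R)\}$ is convex. Composition-consistency, applied to $\Delta(X(R|_{A'}))\times_b\Delta(X(R|_B))$, reduces to the set rule $X(R)=(X(R|_{A'})\setminus\{b\})\cup X(R|_B)$ when $b\in X(R|_{A'})$, and $X(R)=X(R|_{A'})$ otherwise. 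Finally decisiveness says $\{R\colon|X(R)|=1\}$ is dense; I would note at the outset that this already excludes the Pareto rule, the one classical SCF satisfying both axioms, since it is multi-valued on an open set of profiles.

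First I would settle the two-alternative case. On an agenda $\{x,y\}$ parameterise profiles by $t=R(x,y)\in[0,1]$. Population-consistency makes $\{t\colon x\in X(R_t)\}$ and $\{t\colon y\in X(R_t)\}$ intervals, upper hemicontinuity makes them closed, and unanimity fixes their endpoints ($t=1$ gives $\{x\}$, $t=0$ gives $\{y\}$). Since the two intervals cover $[0,1]$, their overlap is exactly where $X(R_t)=\{x,y\}$; decisiveness forbids this overlap from having nonempty interior, so there is a single threshold $\theta_{xy}\in(0,1)$ with $X(R_t)=\{x\}$ for $t>\theta_{xy}$, $X(R_t)=\{y\}$ for $t<\theta_{xy}$, and a tie only at $t=\theta_{xy}$; moreover $\theta_{xy}+\theta_{yx}=1$.

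The crux is to show every $\theta_{xy}=\nicefrac12$, i.e.\ that $f$ is majority rule on pairs. For distinct $p,m,q$ I would use the reversal family $R_t=t\,(p\succ m\succ q)+(1-t)\,(q\succ m\succ p)$. Its support keeps \emph{both} $\{p,m\}$ and $\{m,q\}$ as components, and $R_t(p,q)=R_t(p,m)=R_t(m,q)=t$. Computing $X(R_t)$ two ways — through the component $\{p,m\}$ with representative $p$ and reduced agenda $\{p,q\}$, and independently through $\{m,q\}$ with representative $q$ and the same reduced agenda — and equating the results via the threshold description of the previous step, I obtain, at every $t$ where the two-alternative values are single-valued, the inequalities $\theta_{pm}\le\theta_{pq}\le\theta_{mq}$. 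Because this holds for every labelling of the triple, combining a few instances forces all pairwise thresholds to collapse to $\nicefrac12$. Extracting this rigid inequality from two distinct component-decompositions of a single profile is the main technical step, and the place where composition-consistency does the real work.

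It then remains to convert this two-alternative majority property into a global contradiction with population-consistency. I would promote majority rule to full Condorcet-consistency — arguing that when $R$ has a Condorcet winner $a$, iterated composition-consistency on decomposable profiles together with population-consistency forces $X(R)=\{a\}$ — and then invoke the impossibility that no Condorcet-consistent SCF is population-consistent (the Young--Levenglick result, whose proof applies to population-consistency as defined here). Alternatively, following Young and Laslier, population-consistency pins $f$ down to rank-based/scoring behaviour, which is clone-manipulable and hence violates composition-consistency, with decisiveness eliminating the lone indecisive exception. I expect the main obstacle to be precisely this bridge from two to many alternatives: neutrality is only \emph{earned} after the reversal argument, so the reasoning must scrupulously avoid assuming it beforehand, and one must carefully transport the ties that are unavoidable on cyclic profiles into a genuine contradiction with either decisiveness or the Young--Levenglick incompatibility.
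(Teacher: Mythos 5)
Your first two steps are sound and in fact trace a legitimate alternative to what the paper does: the paper derives neutrality directly from composition-consistency (\lemref{lem:neutrality}, by pairing every alternative with a fresh clone), after which the swap symmetry of the tied two-alternative profile gives $\theta_{xy}=\nicefrac{1}{2}$ immediately; your reversal-family argument reaches the same conclusion without ever invoking neutrality, and the chain $\theta_{pm}\le\theta_{pq}\le\theta_{mq}$, cycled through the three labellings of the triple and combined with $\theta_{yx}=1-\theta_{xy}$, does collapse all thresholds to $\nicefrac{1}{2}$. So up to ``$f$ is majority rule on pairs'' the proposal is correct, if more laborious than necessary.

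The gap is in your third step, and it is structural rather than cosmetic. Population-consistency and composition-consistency only ever force alternatives \emph{into} the choice set; they never exclude anything. Hence the strongest Condorcet statement you can hope to derive is the membership claim $a\in X(R)$ --- and even that the paper obtains only in a neighborhood of the uniform profile, via a delicate convex decomposition into decomposable profiles (\lemref{lem:pccond}) --- not the equality $X(R)=\{a\}$ that the Young--Levenglick impossibility requires. The membership version is harmless on its own: the Pareto rule contains every weak Condorcet winner and is population-consistent, so no contradiction is available at that level. Your fallback route through scoring rules fares no better, since Young's and Laslier's characterizations need the strong reinforcement axiom (equality of the two sides), not the inclusion used here. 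The only source of upper bounds on $X(R)$ is decisiveness, so the contradiction must come from exhibiting a \emph{full-dimensional} set of profiles on which two fixed alternatives are simultaneously forced into the choice set. That is precisely what the paper's proof does on $A=\{a,b,c\}$: it writes down six affinely independent profiles (the Condorcet cycle, three profiles consisting of a ranking together with its reverse, and two profiles in which $\{b,c\}$ respectively $\{a,c\}$ is a unanimously ordered component), shows via neutrality, unanimity and composition-consistency that $\Delta(\{a,b\})\subseteq f(R^i)$ for each of them, and then lets population-consistency propagate $\Delta(\{a,b\})$ to their five-dimensional convex hull, contradicting decisiveness at the barycenter. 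If you want to salvage your development, redirect the pairwise-majority result toward constructing such a full-dimensional tie set rather than toward an exact Condorcet-consistency statement that these axioms cannot deliver.
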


In light of this result, it is perhaps surprising that, for probabilistic social choice, both axioms are not only mutually compatible but even \emph{uniquely} characterize a PSCF.

\section{Characterization of Maximal Lotteries}
\label{sec:ml}

Maximal lotteries were first considered by \citet{Krew65a} and independently proposed and studied in more detail by \citet{Fish84a}. Interestingly, maximal lotteries or variants thereof have been rediscovered again by economists \citep{LLL93b},\footnote{\citet{LLL93b,LLL96a}, \citet{DuLa99a}, and others have explored the \emph{support} of maximal lotteries, called the \emph{bipartisan set} or the \emph{essential set}, in some detail.  
\citet{Lasl00a} has characterized the essential set using monotonicity, Fishburn's C2, regularity, inclusion-minimality, the strong superset property, and a variant of composition-consistency.} 
mathematicians \citep{FiRy95a}, political scientists \citep{FeMa92a}, and computer scientists \citep{RiSh10a}.\footnote{\citet{FeMa92a} and \citet{RiSh10a} also discussed whether maximal lotteries are suitable for real-world political elections. \citeauthor{RiSh10a} concluded that
``[the maximal lotteries system] is not only theoretically interesting and optimal, but simple to use in practice; it is probably easier to implement than, say, IRV [instant-runoff voting]. We feel that it can be recommended for practical use.''  \citeauthor{FeMa92a} wrote that ``an inherent special feature of [maximal lotteries] is its extensive and essential reliance on probability in selecting the winner [\dots] Without sufficient empirical evidence it is impossible to say whether this feature of [maximal lotteries] makes it socially less acceptable than other majoritarian procedures. It is not at all a question of fairness, for nothing could be fairer than the use of lottery as prescribed by [maximal lotteries]. The problem is whether society will accept such an extensive reliance on chance in public decision-making. Different societies may have differing views about this. For example, it is well known that the free men of ancient Athens regarded it as quite acceptable to select holders of public office by lot. Clearly, before [the maximal lotteries system] can be applied in practice, public opinion must first be consulted, and perhaps educated, on this issue.''
}

In order to define maximal lotteries, we need some notation. For $A\in \mathcal{F}(U)$, $R \in \mathcal{R}|_A$, $x,y \in A$, the entries $M_R(x,y)$ of the \emph{majority margin matrix} $M_R$ denote the difference between the fraction of voters who prefer~$x$ to~$y$ and the fraction of voters who prefer~$y$ to~$x$, \ie
\[M_R(x,y)= R(x,y) - R(y,x)\text.\]
Thus, $M_R$ is skew-symmetric and $M_R\in [-1,1]^{A\times A}$. 
A \emph{(weak) Condorcet winner} is an alternative $x$ that is \emph{maximal} in $A$ according to $M_R$ in the sense that $M_R(x,y) \ge 0$ for all $y \in A$. If $M_R(x,y)>0$ for all $y\in A\setminus\{x\}$, $x$ is called a \emph{strict Condorcet winner}.
A PSCF is \emph{Condorcet-consistent} if $x\in f(R)$ whenever $x$ is a Condorcet winner in $R$.

It is well known from the Condorcet paradox that maximal elements may fail to exist. 
As shown by \citet{Krew65a} and \citet{Fish84a}, this drawback can, however, be remedied by considering lotteries over alternatives. For two lotteries $p,q \in \Delta(A)$, the majority margin can be extended to its bilinear form $p^T M_R q$, the \emph{expected} majority margin.
The set of maximal lotteries is then defined as the set of ``probabilistic Condorcet winners.'' Formally, for all $A\in\mathcal{F}(U)$ and $R\in\mathcal{R}|_A$,\footnote{Several authors apply the signum function to the entries of $M_R$ before computing maximal lotteries. This is, for example, the case for \citet{Krew65a}, \citet{FeMa92a}, \citet{LLL93a}, and \citet{FiRy95a}. Maximal lotteries as defined in this paper were considered by \citet{DuLa99a}, \citet{Lasl00a}, and \citet{RiSh10a}. \citet{Fish84a} allowed the application of any odd function to the entries of $M_R$, which covers both variants as special cases.\label{fn:c1c2}}
\[
\ml[R]=\{p\in \Delta(A) \colon p^T M_R q\ge 0 \text{ for all } q\in \Delta(A)\}\text.
\tag{maximal lotteries}
\]
As an example, consider the preference profile given in Example~1 of \secref{sec:prelims}. Alternative $a$ is a strict Condorcet winner and $\ml(R) = \{a\}$.
This is in contrast to $\rd(R) = \{\nicefrac{5}{6}\,a+\nicefrac{1}{6}\,b\}$, which puts positive probability on any first-ranked alternative no matter how small the corresponding fraction of voters.

The Minimax Theorem implies that $\ml[R]$ is non-empty for all $R \in \mathcal{R}$ \citep{vNeu28a}. In fact, $M_R$ can be interpreted as the payoff matrix of a symmetric zero-sum game and maximal lotteries as the mixed maximin strategies (or Nash equilibrium strategies) of this game. Hence, maximal lotteries can be efficiently computed via linear programming. Interestingly, $\ml(R)$ is a singleton in almost all cases. In particular, this holds if there is an odd number of voters \citep{LLL97a,LeBr05a}. Moreover, we point out in Appendix~\ref{sec:mlsat} that the set of preference profiles that yield a unique maximal lottery is open and dense, which implies that the set of profiles with multiple maximal lotteries is nowhere dense and thus negligible. As a consequence, $\ml$ satisfies decisiveness as well as the other properties we demand from a PSCF (such as unanimity, continuity, and convex-valuedness) and therefore constitutes a well-defined PSCF.

In contrast to the non-probabilistic case where majority rule is known as the only reasonable and fair SCF on two alternatives \citep{May52a,DaMa08a}, there is an infinite number of such PSCFs even when restricting attention to only two alternatives \citep[see, \eg][]{Saun10a}. Within our framework of fractional preference profiles, a PSCF on two alternatives can be seen as a convex-valued continuous correspondence from the unit interval to itself. Unanimity fixes the function values at the endpoints of the unit interval, decisiveness requires that the points where the function is multi-valued are isolated, and population-consistency implies that the function is monotonic.
Two natural extreme cases of functions that meet these requirements are a probabilistic version of simple majority rule and the proportional lottery (\figref{fig:twoalts}).\footnote{\citet{FiGe77a} compared these two-alternative PSCFs on the basis of expected voter satisfaction and found that the simple majority rule outperforms the proportional rule.} Interestingly, these two extreme points are taken by maximal lotteries and random dictatorship as for all $x,y\in U$ and $R\in\mathcal{R}|_{\{x,y\}}$,
\[
\ml(R) = 
\begin{cases}
\{x\} & \text{if }R(x,y)>\nicefrac{1}{2}\text{,} \\
\{y\} & \text{if }R(x,y)<\nicefrac{1}{2}\text{,} \\
\Delta(\{x,y\}) & \text{otherwise,}
\end{cases}
\qquad\text{and}\qquad
	\rd(R) = \{R(x,y)\, x + R(y,x)\,y\}\text.
\]

\begin{figure}[htb]
  \centering
	\subfigureCMD{Maximal lotteries}{fig:ml}{
	\begin{tikzpicture}[scale=3.1415926]
\draw[gray,very thin,step=1] (0,0) grid (1,1);
\draw (0,0) -- (1,0);
  \foreach \x/\xtext in {0/0, 0.5/\nicefrac{1}{2}, 1/1}
    \draw[shift={(\x,0)}] (0pt,1pt) -- (0pt,-1pt) node[below] {$\xtext$};
\draw (0,0) -- (0,1);
  \foreach \y/\ytext in {0/0, 1/1}
    \draw[shift={(0,\y)}] (1pt,0pt) -- (-1pt,0pt) node[left] {$\ytext$};
\draw (0.5,-0.35) node {$R(x,y)$};
  \draw (-0.1,0.5) node[left] {$p_x$};
\draw[shorten >=-.1pt,very thick,rounded corners=0.3, butt cap-butt cap] (0,0) -- (0.5,0) -- (0.5,1) -- (1,1);
\end{tikzpicture}
\label{subfig:ml}
}
\qquad\qquad
	\subfigureCMD{Random dictatorship}{fig:rd}{
\begin{tikzpicture}[scale=3.1415926]
\draw[gray,very thin,step=1] (0,0) grid (1,1);
\draw (0,0) -- (1,0);
  \foreach \x/\xtext in {0/0, 0.5/\nicefrac{1}{2}, 1/1}
    \draw[shift={(\x,0)}] (0pt,1pt) -- (0pt,-1pt) node[below] {$\xtext$};
\draw (0,0) -- (0,1);
  \foreach \y/\ytext in {0/0, 1/1}
    \draw[shift={(0,\y)}] (1pt,0pt) -- (-1pt,0pt) node[left] {$\ytext$};
\draw (0.5,-0.35) node {$R(x,y)$};
  \draw (-0.1,0.5) node[left] {$p_x$};
\draw[shorten >=-.15pt,very thick,triangle 90 cap-triangle 90 cap] (0,0) -- (1,1);
\end{tikzpicture}
\label{subfig:rd}
}
\caption{Maximal lotteries and random dictatorship on two-element agendas.}
  \label{fig:twoalts}
\end{figure}
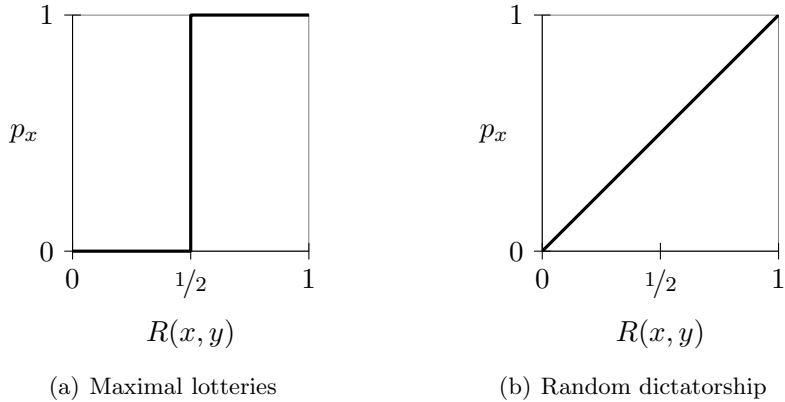

When considering up to three alternatives and additionally taking composition-consistency into account, any such PSCF coincides with majority rule on two-element agendas.\footnote{This also shows that $\rd$ violates composition-consistency, which can be seen in Example~3 in \secref{sec:consistency}. However, $\rd$ does satisfy cloning-consistency (see \remref{rem:cloning}).}
When allowing an arbitrary number of alternatives, the axioms completely characterize $\ml$.

\begin{theorem}\label{thm:fisml}
A PSCF $f$ satisfies population-consistency and composition-consistency if and only if $f=\ml$.
\end{theorem}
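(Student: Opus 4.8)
The plan is to prove the two directions separately, carrying out the soundness direction in full and reducing the uniqueness direction to a single hard core. For soundness, population-consistency is immediate from the linearity of the margin matrix in the profile: if $R=\lambda R'+(1-\lambda)R''$ then $M_R=\lambda M_{R'}+(1-\lambda)M_{R''}$, so any $p$ with $p^TM_{R'}q\ge 0$ and $p^TM_{R''}q\ge 0$ for all $q$ also satisfies $p^TM_Rq\ge 0$. For composition-consistency I would exploit the block structure that a component $B$ induces on $M_R$: for $a\in A\setminus B$ the margin $M_R(a,\cdot)$ is constant on $B$ and equals $M_{R|_{A'}}(a,b)$, while the internal margins are $M_{R|_B}$. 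Collapsing the $B$-mass of a test lottery $s$ onto $b$ to form $\hat s\in\Delta(A')$ and writing $\tilde s$ for the conditional of $s$ on $B$, a direct expansion gives the identity $(p\times_b q)^TM_R s=p^TM_{R|_{A'}}\hat s+p_b\,\sigma\,q^TM_{R|_B}\tilde s$, where $\sigma=\sum_{x\in B}s_x$. Nonnegativity of both summands yields $\ml(R|_{A'})\times_b\ml(R|_B)\subseteq\ml(R)$. For the reverse inclusion I would take $r\in\ml(R)$, set $p=\hat r$ and $q=(r_x/\beta)_{x\in B}$ with $\beta=\sum_{x\in B}r_x$ (the case $\beta=0$ being trivial), and feed two families of test lotteries into the same identity: spreading the $b$-mass of any $t\in\Delta(A')$ according to $q$ kills the second summand (as $q^TM_{R|_B}q=0$ by skew-symmetry) and shows $p\in\ml(R|_{A'})$; lotteries $s$ supported on $B$ kill the first summand, because $\beta>0$ forces $b\in\supp(p)$, so complementary slackness for the symmetric game $M_{R|_{A'}}$ gives $p^TM_{R|_{A'}}e_b=0$, and this shows $q\in\ml(R|_B)$.

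For uniqueness, let $f$ satisfy both axioms. First I would pin down $f$ on two-element agendas: unanimity fixes the endpoints, population-consistency makes it monotonic, decisiveness makes it single-valued except at isolated points, and composition-consistency applied to three-alternative clonings forces it to be exactly majority rule, so $f=\ml$ whenever $|A|=2$. Next I would argue by induction on $|A|$. If $R$ has a nontrivial component $B$, composition-consistency gives $f(R)=f(R|_{A'})\times_b f(R|_B)$; both factors live on strictly smaller agendas, so the induction hypothesis together with the soundness identity $\ml(R)=\ml(R|_{A'})\times_b\ml(R|_B)$ closes the step. This reduces everything to \emph{indecomposable} profiles.

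The crux is therefore to establish $f(R)=\ml(R)$ for indecomposable $R$, where the genuinely hard case is a profile whose maximal lottery is unique but properly mixed (no Condorcet winner), since neither axiom constrains mixed outcomes on its own and neutrality is not assumed. My plan is to obtain the inclusion $\ml(R)\subseteq f(R)$ from population-consistency by writing $R$ as a convex combination of profiles on each of which a target lottery $p\in\ml(R)$ is already forced—using the two-alternative base case and composition-consistency to build symmetric clone gadgets realizing the required proportions—and to guide this decomposition by the complementary-slackness structure of $M_R$, so that $p$ lies in the kernel and ties every alternative in its support. The reverse inclusion $f(R)\subseteq\ml(R)$, and hence equality on the whole domain, would then follow by combining decisiveness with upper hemicontinuity: both $f$ and $\ml$ are the upper-hemicontinuous convex closures of their single-valued selections, which agree on the dense set of commonly decisive profiles.

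I expect the main obstacle to be precisely this lower bound $\ml(R)\subseteq f(R)$ in the indecomposable, properly mixed case. Forcing a prescribed mixed lottery through a population-consistency decomposition is where the additive structure of population-consistency and the multiplicative structure of composition-consistency must be deployed at the same time, and where the Minimax and complementary-slackness description of $\ml(R)$ does the real work of matching $f$ to the optimal strategies of the margin game.
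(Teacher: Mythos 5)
Your soundness argument and your two-element-agenda analysis essentially coincide with the paper's (the block-matrix identity you write down for composition-consistency is exactly the computation in \lemref{lem:mlpc}, and the binary case is \lemref{lem:binary2}), so the issue is the uniqueness direction for three or more alternatives, where there are two genuine gaps. The first is your closing step: deducing $f(R)\subseteq\ml(R)$ because ``both $f$ and $\ml$ are the upper-hemicontinuous convex closures of their single-valued selections, which agree on a dense set.'' Upper hemicontinuity only gives $\lim_{n} p_n\in f(R)$ for convergent selections $p_n\in f(R_n)$ with $R_n\to R$; it never bounds $f(R)$ \emph{from above} by its values on a dense set---that would require lower hemicontinuity, which is not assumed. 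So even granting $\ml\subseteq f$ everywhere (and hence $f=\ml$ on the dense set where $f$ is single-valued), nothing in your argument rules out $f(R)\supsetneq\ml(R)$ at the remaining profiles, and a violation at isolated profiles contradicts neither decisiveness nor continuity. This is precisely why the paper works so hard in \lemref{lem:regular}, \lemref{lem:cond-dim}, and \lemref{lem:fml}: a single $p\in f(R)\setminus\ml(R)$ is propagated, via cloning gadgets, symmetrization, and a dimension count over regular and strongly regular profiles, to a \emph{full-dimensional} set of profiles around a strict Condorcet profile on which $f$ is multi-valued; only an open set of indecisive profiles contradicts decisiveness. Your outline contains no mechanism for producing such a set.

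The second gap is the step you yourself flag as the crux, $\ml(R)\subseteq f(R)$ for a properly mixed maximal lottery, which is left as a plan rather than carried out---and the plan is in tension with the logical order of the paper's proof. The paper obtains $\ml\subseteq f$ \emph{after} and \emph{from} $f\subseteq\ml$: it perturbs $R$ by a McGarvey-type cycle profile supported on $\supp(p)$ so that the target vertex $p$ becomes the \emph{unique} maximal lottery of the perturbed profile (a rank/determinant argument with separate odd- and even-support cases), and then invokes continuity together with the already-established inclusion $f\subseteq\ml$. Your proposed route---decomposing $R$ as a convex combination of profiles at each of which $p$ is already forced by the axioms---would have to force a prescribed mixed lottery using only uniform-over-component building blocks, which is exactly the combinatorial content of the paper's regularity and cycle-decomposition lemmas and is not supplied here. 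The reduction to indecomposable profiles via composition-consistency is valid but does not touch this core, so the theorem remains unproved for the case that actually matters.
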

The proof of \thmref{thm:fisml} is rather involved yet quite instructive as it rests on a number of lemmas that might be of independent interest (see Appendix~\ref{sec:fml}). The high-level structure is as follows. The fact that $\ml$ satisfies population-consistency and composition-consistency follows relatively easily from basic linear algebra. 

For the converse direction, we first show that population-consistency and composition-consistency characterize $\ml$ on two-element agendas. 
For agendas of more than two alternatives, we assume that $f$ is a population-consistent and composition-consistent PSCF and then show that $f\subseteq \ml$ and $\ml \subseteq f$. The first statement takes up the bulk of the proof and is shown by assuming for contradiction that there is a preference profile for which $f$ yields a lottery that is \emph{not} maximal. We then identify a set of preference profiles with full dimension for which $f$ returns the uniform lottery over a fixed subset of at least two alternatives and which has the uniform profile, \ie the preference profile in which every preference relation is assigned the same fraction of voters, in its interior. Along the way we show that $f$ has to be Condorcet-consistent for all preference profiles that are close to the uniform profile.
It follows that there has to be an $\epsilon$-ball of profiles around some strict Condorcet profile (close to the uniform profile), for which $f$ returns the uniform lottery over a non-singleton subset of alternatives as well as the lottery with probability~1 on the Condorcet winner. This contradicts decisiveness.
For the inclusion of $\ml$ in $f$, we take an arbitrary preference profile and an arbitrary vertex of the set of maximal lotteries for this profile and then construct a sequence of preference profiles that converges to the original profile and whose maximal lotteries converge to the specified maximal lottery. From $f\subseteq\ml$ and continuity, we obtain that $f$ has to select this lottery in the original preference profile. Finally, convexity implies that $\ml\subseteq f$.

\section{Remarks}

We conclude the paper with a number of remarks.

\medskip
\begin{remark}[Independence of axioms]\label{rem:ind}
The axioms used in \thmref{thm:fisml} are independent from each other.
$\rd$ satisfies population-consistency, but violates composition-consistency (see also \remref{rem:cloning}). The same is true for Borda's rule.
When defining $\ml^3$ via the third power of majority margins $(M_R(x,y))^3$, $\ml^3$ satisfies composition-consistency, but violates population-consistency.\footnote{Such variants of $\ml$ were already considered by \citet{Fish84a}. See also Footnote~\ref{fn:c1c2}.} 
Also the properties implicit in the definition of PSCFs are independent. The PSCF that returns all maximal lotteries for the profile in which all preference relations are reversed violates unanimity but satisfies decisiveness, population-consistency, and composition-consistency.
When not requiring decisiveness, returning all \emph{ex post} efficient lotteries is consistent with the remaining axioms.\footnote{Continuity and convexity can also be seen as implicit assumptions. Continuity is needed because the relative interior of $\ml$ satisfies all remaining axioms. Whether convexity is required is open.}
\end{remark}

\begin{remark}[Size of Universe]
The proof of \thmref{thm:fisml} exploits the infinity of the universe. As stated in the previous remark, $\ml^3$ satisfies composition-consistency and violates population-consistency. However, $\ml^3$ does satisfy population-consistency when there are only up to three alternatives. This implies that the statement of \thmref{thm:fisml} requires a universe that contains at least four alternatives.
\end{remark}

\begin{remark}[Uniqueness]\label{rem:unique}
The set of profiles in which $\ml$ is not single-valued is negligible in the sense specified in the definition of PSCFs. 
When extending the set of fractional profiles to the reals, it can also be shown that maximal lotteries are almost always unique by using an argument similar to that of \citet{Hars73a}.
\end{remark}

\begin{remark}[Strong population-consistency]\label{rem:pcons}
$\ml$ does not satisfy the stronger version of population-consistency in which the set inclusion is replaced with equality (see \secref{sec:consistency}).\footnote{Strong population-consistency is quite demanding. It is for example violated by rather basic functions such as the Pareto rule.}  
This can be seen by observing that \emph{every} lottery is maximal for the union of any two electorates whose preferences are completely opposed to each other. When there are at least three alternatives, it is possible to find two such preference profiles which yield the same unique maximal lottery and strong population-consistency is violated.
However, whenever $\ml$ is single-valued (which is almost always the case), strong population-consistency is equivalent to population-consistency and therefore satisfied by $\ml$.
\end{remark}

\begin{remark}[Cloning-consistency and Condorcet-consistency]\label{rem:cloning}	
Requiring cloning-consistency instead of composition-consistency suffices for the proof of \thmref{thm:fisml} when additionally demanding Condorcet-consistency. It is therefore possible to alternatively characterize $\ml$ using population-consistency, cloning-consistency, and Condorcet-consistency. As above, the axioms are independent from each other.
$\ml^3$, as defined in \remref{rem:ind}, satisfies all axioms except population-consistency. The PSCF that is identical to $\ml^3$ for agendas of size~3 and otherwise identical to $\ml$ satisfies all axioms except cloning-consistency. $\rd$ satisfies all axioms except Condorcet-consistency.
\end{remark}

\begin{remark}[Agenda-consistency]
$\ml$ also satisfies \emph{agenda-consistency}, which requires that the set of all lotteries that are chosen from two overlapping agendas should be identical to the set of lotteries that are chosen from the union of both agendas (and whose support is contained in both agendas). The inclusion from left to right is known as Sen's $\gamma$ or \emph{expansion}, whereas the inclusion from right to left is Sen's $\alpha$ or \emph{contraction} \citep{Sen71a}.\footnote{Sen's $\alpha$ actually goes back to \citet{Cher54a} and \citet{Nash50b}, where it is called \emph{independence of irrelevant alternatives} (not to be confused with Arrow's IIA). We refer to \citet{Monj08a} for more details.}
Numerous impossibility results, including Arrow's well-known theorem, have revealed that agenda-consistency is prohibitive in non-probabilistic social choice when paired with minimal further assumptions such as non-dictatorship and Pareto-optimality \citep[\eg][]{Sen77a,Sen86a,CaKe02a}.\footnote{\citet{PaPe86a} obtained a similar impossibility for probabilistic social choice using an interpretation of Sen's $\alpha$ that is stronger than ours.}

\end{remark}

\begin{remark}[Domains]
In contrast to $\rd$, which at least requires that every voter has a unique top choice, $\ml$ does not require the asymmetry, completeness, or even transitivity of individual preferences (and still satisfies population-consistency and composition-consistency in these more general domains).
In the restricted domains of matching and house allocation, on the other hand, maximal lotteries are known as \emph{popular mixed matchings} \citep{KMN11a} or \emph{popular random assignments} \citep{ABS13a}.
\end{remark}

\begin{remark}[Efficiency]
It has already been observed by \citet{Fish84a} that $\ml$ is \emph{ex post} efficient, \ie Pareto-dominated alternatives always receive probability zero in all maximal lotteries. \citet{ABBH12a} strengthened this statement by showing that $\ml$ even satisfies $\mathit{SD}$-efficiency (also known as ordinal efficiency) as well as the even stronger notion of $\mathit{PC}$-efficiency \citep[see][]{ABB14b}. While $\rd$ also satisfies $\mathit{SD}$-efficiency, random serial dictatorship (the canonical generalization of $\rd$ to weak preferences) violates $\mathit{SD}$-efficiency \citep{BoMo01a,BMS05a}.
\end{remark}

\begin{remark}[Strategyproofness]
$\rd$ is the only \emph{ex post} efficient strategyproof PSCF \citep{Gibb77a}. However, $\rd$ and $\ml$ violate \emph{group}-strategyproofness (in fact, there exists no \emph{ex post} efficient group-strategyproof PSCF).
$\rd$ and $\ml$ satisfy the significantly weaker notion of $\mathit{ST}$-group-strategyproofness, which is violated by probabilistic extensions of most common voting rules \citep{ABB13d}.
A very useful property of $\ml$ is that it cannot be manipulated in all preference profiles that admit a strict Condorcet winner \citep[see][]{Peyr13a}.
\end{remark}

\section*{Acknowledgements}
This material is based on work supported by the Deutsche Forschungsgemeinschaft under grants {BR~2312/7-2} and {BR~2312/10-1}. Early versions of this paper were presented at the 6th Seminar on Ordered Structures in Games and Decisions in Paris (November, 2012), the Workshop on Mathematics of Electoral Systems in Budapest (October, 2013), the 12th Meeting of the Society for Social Choice and Welfare in Boston (June, 2014), the Micro-Theory Seminar at the University of Glasgow (February, 2015), and the Algorithms Seminar at the University of Oxford (November, 2015). The authors thank Bhaskar Dutta, Christian Geist, Paul Harrenstein, Johannes Hofbauer, Olivier Hudry, Vincent Merlin, Bernard Monjardet, Herv\'{e} Moulin, Klaus Nehring, Clemens Puppe, Peyton Young, and William S. Zwicker for helpful comments. 
The authors are furthermore indebted to the editor and anonymous referees for their constructive feedback.

\def\bibfont{\small}
%\bibliography{../pamas/group}

\newpage

\setcounter{theorem}{0}

\appendix

\section*{APPENDIX}

\section{Preliminaries}

As stated in \secref{sec:prelims}, $U$ is an infinite set of alternatives.
For convenience we will assume that $\mathbb{N}\subseteq U$.
For $n\in\mathbb{N}$, $[n]$ is defined as $[n]=\{1,\dots,n\}$.
For two sets $A$ and $B$, let $\Pi(A,B)$ denote the set of all bijections from $A$ to $B$ (where $\Pi(A,B) = \emptyset$ if $|A|\neq |B|$).
Let $\Pi(A) = \Pi(A,A)$ be the set of all permutations of $A$.
We will frequently work with profiles in which alternatives are renamed according to some bijection from one set of alternatives to another. 
For all $A,B\in\fone(U)$, ${\pref}\in\mathcal{L}(A)$, and $\pi\in \Pi(A,B)$, let $\pi({\pref})=\{ (\pi(x),\pi(y)) \colon(x,y)\in {\pref}\}\in\mathcal{L}(B)$ and, for $R\in\mathcal{R}|_A$, let $\pi(R)\in\mathcal{R}|_B$ such that $R({{\pref}})= (\pi(R))(\pi({\pref}))$.
A well-known symmetry condition for PSCFs is \emph{neutrality}, which requires that all alternatives are treated equally in the sense that renaming alternatives is appropriately reflected in the outcome. Formally, a PSCF is neutral if
\[
\pi(f(R)) = f(\pi(R)) \text{ for all $A, B\in\fone(U)$, $R\in \mathcal{R}|_A$, and $\pi\in\Pi(A,B)$.} \tag{neutrality}
\]

We show that composition-consistency implies neutrality by replacing all alternatives with components of size~2.

\begin{lemma}\label{lem:neutrality}
	Every composition-consistent PSCF satisfies neutrality.
\end{lemma}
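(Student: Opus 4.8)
The plan is to reduce neutrality to the special case of relabeling a \emph{single} alternative onto a fresh one, and to obtain that case from composition-consistency by choosing the two members of a size-$2$ component alternately as the component's representative.

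First I would record two easy reductions. Neutrality is closed under composition of bijections: if $f(\sigma(R))=\sigma(f(R))$ and $f(\tau(R))=\tau(f(R))$ hold for all profiles, then $f(\tau\sigma(R))=\tau(f(\sigma(R)))=\tau\sigma(f(R))$. Moreover, since $U$ is infinite, any bijection $\pi\colon A\to B$ factors into finitely many relabelings, each of which sends a single alternative to a currently unused one: choose fresh $z_1,\dots,z_n$ disjoint from $A\cup B$, move the elements of $A$ onto the $z_i$ one at a time, and then move the $z_i$ onto $B$ one at a time, so that every individual step has a target not currently present. Hence it suffices to prove $f(\rho(R))=\rho(f(R))$ for the map $\rho$ that relabels a single $b\in A$ to some $c\in U\setminus A$ and fixes all other alternatives.

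For this case I would build the size-$2$ component. Let $\hat R\in\mathcal{R}|_{A\cup\{c\}}$ be obtained from $R$ by inserting $c$ immediately below $b$ in every preference relation of positive weight. Then $B=\{b,c\}$ is a component of $\hat R$; restricting to $A$ recovers $\hat R|_{A}=R$, and restricting to $A'=(A\setminus\{b\})\cup\{c\}$ gives $\hat R|_{A'}=\rho(R)$. Since every voter ranks $b$ above $c$ inside the component, unanimity forces $f(\hat R|_{B})=\{b\}$ (the point mass on $b$). The crucial step is then to apply composition-consistency twice to $\hat R$, choosing the representative of $B$ to be $b$ in one application (contracted profile $R$) and $c$ in the other (contracted profile $\rho(R)$). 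Because $f(\hat R|_{B})=\{b\}$ is a point mass, the operator $\times$ degenerates into a pure relabeling: the first application gives $f(\hat R)=f(R)\times_b\{b\}$, that is, the lotteries of $f(R)$ with $c$ adjoined at probability~$0$, while the second gives $f(\hat R)=f(\rho(R))\times_c\{b\}$, that is, the lotteries of $f(\rho(R))$ with the weight on $c$ transferred to $b$. Equating these two descriptions of the single set $f(\hat R)$ yields exactly $f(\rho(R))=\rho(f(R))$.

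The conceptual obstacle is spotting this argument at all: the leverage comes from the fact that composition-consistency singles out no canonical representative of a component, so a size-$2$ component may be contracted onto either of its members, and collapsing the internal lottery to a point mass via a unanimously dominated clone turns the two contractions into the original and the relabeled profile. The remaining work is routine verification, namely that $\{b,c\}$ is genuinely a component, that $\hat R|_{A}$ and $\hat R|_{A'}$ are as claimed, and that the factorization of an arbitrary bijection into fresh single relabelings is available thanks to the infinitude of $U$.
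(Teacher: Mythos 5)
Your proof is correct and rests on the same key idea as the paper's: a two-element component containing a fresh clone can be contracted onto either of its members, and routing through fresh alternatives (available since $U$ is infinite) transports $f$'s output along any relabeling. The only differences are organizational --- the paper relabels all of $A$ at once via $n$ simultaneous components $\{a_i,c_i\}$ and tracks the sums $p_{a_i}+p_{c_i}$ without pinning down the internal lottery of each component, whereas you relabel one alternative at a time and invoke unanimity to collapse the component's internal lottery to a point mass so that $\times_b$ degenerates into a pure relabeling --- but these are cosmetic.
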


\begin{proof}
	Let $f$ be a composition-consistent PSCF, $A,B\in\fone(U)$, $R\in\mathcal{R}|_A$, and $\pi\in\Pi(A,B)$. 
	We have to show that $\pi(f(R)) = f(\pi(R))$. 
	To this end, let $p^A\in f(R)$.
	First, choose $A = \{a_1,\dots,a_n\}$ and $B = \{b_1,\dots, b_n\}$ such that $b_i = \pi(a_i)$ for all $i\in[n]$.
	Since $U$ is infinite, there is $C=\{c_1,\dots,c_n\}\in\fone(U)$ such that $C\cap A = \emptyset$ and $C\cap B = \emptyset$.
	Now, let $R'\in\mathcal{R}|_{A\cup C}$ such that $R'|_A = R$ and $\{a_i,c_i\}$ is a component in $R'$ for all $i\in[n]$.
	Thus, we have that $p^A\in f(R'|_A)$.
	We now apply composition-consistency to $a_i$ and the components $\{a_i,c_i\}$ for all $i\in[n]$, which by definition implies that
	\[
	f(R'|_A)\times_{a_1} f(R'|_{\{a_1,c_1\}})\times_{a_2} f(R'|_{\{a_2,c_2\}}) \dots\times_{a_n} f(R'|_{\{a_n,c_n\}}) = f(R'|_{A\cup C})\text.
	\]
	\todo{FFX: in the equation above and a couple of times below, it was wrongly $R$ instead of $R'$ in the published version.}
	Hence, for $p^{AC}\in f(R'|_{A\cup C})$ we have $p^{AC}_{a_i} + p^{AC}_{c_i} = p^A_{a_i}$ for all $i\in[n]$.
	Applying composition-consistency analogously to $c_i$ and $\{a_i,c_i\}$ for all $i\in[n]$ yields $p^C_{c_i} = p^{AC}_{a_i} + p^{AC}_{c_i} = p^A_{a_i}$ for all $p^C\in f(R'|_C)$ and $i\in[n]$.
Finally, let $R''\in\mathcal{R}|_{B\cup C}$ such that $R''|_{C} = R'|_{C}$ and $\{b_i,c_i\}$ is a component in $R''$ for all $i\in[n]$.
	Hence, we have that $p^C\in f(R''|_C)$.
	As before, it follows from composition-consistency that $p^B\in f(R''|_B)$ where $p^B_{b_i} = p^C_{c_i}$ for all $i\in[n]$.
	Notice that $p^B = \pi(p^A)$ and $B = \pi(A)$.
	Since $R''|_B = \pi(R)$ by construction of $R''$, we have $p^B\in f(\pi(R))$.
	Hence, $\pi(f(R))\subseteq f(\pi(R))$.
	The fact that $f(\pi(R))\subseteq\pi(f(R))$ follows from application of the above to $\pi(R)$ and $\pi^{-1}$.
\end{proof}

The following notation is required for our proofs.
For some set $X$, $\uni(X)$ denotes the \emph{uniform distribution} over $X$.
In particular, for $A\in\fone(U)$, $\uni(A)$ is the uniform lottery over $A$, \ie $\uni(A) = \nicefrac{1}{|A|}\sum_{x\in A}x$.
The \emph{support} of a lottery $p$ is the set of all alternatives to which $p$ assigns positive probability, \ie $\supp(p) = \{x\in A\colon p_x > 0\}$.
The \emph{$1$-norm} of $x\in\mathbb{Q}^n$ is denoted by $\|x\|$, \ie $\|x\| = \sum_{i=1}^n |x_i|$. 
For $X\subseteq\mathbb{Q}^n$, the \emph{convex hull} $\conv(X)$ is the set of all convex combinations of elements of $X$, \ie 
\[\conv(X)= \left\{\lambda_1 a^1 + \dots + \lambda_k a^k\colon a^i\in X, \lambda_i\in \mathbb{Q}_{\geq 0}, \sum_{i=1}^{k} \lambda_i = 1\right\}\text{.}\]
$X$ is \emph{convex} if $X = \conv(X)$.
The \emph{affine hull} $\aff(X)$ is the set of all affine combinations of elements of $X$, \ie 
\[\aff(X)= \left\{\lambda_1 a^1 + \dots + \lambda_k a^k\colon a^i\in X, \lambda_i\in \mathbb{Q}, \sum_{i=1}^{k} \lambda_i = 1\right\}\text{.}\]
$X$ is an \emph{affine subspace} if $X = \aff(X)$.
We say that $a^1,\dots,a^k\in\mathbb{Q}^n$ are \emph{affinely independent} if, for all $\lambda_1,\dots,\lambda_k \in\mathbb{Q}$ with $\sum_{i=1}^k \lambda_i = 0$, $\sum_{i=1}^k \lambda_i a^i = 0$ implies $\lambda_i=0$ for all $i\in[k]$.
The \emph{dimension} of an affine subspace $X$, $\dim(X)$, is $k-1$, where $k$ is the maximal number of affinely independent vectors in $X$.
The dimension of a set $X$ is the dimension of $\aff(X)$.
The \emph{linear hull} $\lin(X)$ is the set of all linear combinations of elements of $X$, \ie 
\[\lin(X)= \left\{\lambda_1 a^1 + \dots + \lambda_k a^k\colon a^i\in X, \lambda_i\in \mathbb{Q}\right\}\text{.}\]
$B_\epsilon(x)=\{y\in\mathbb{Q}^n\colon \|x-y\|<\epsilon\}$ denotes the \emph{$\epsilon$-ball} around $x\in\mathbb{Q}^n$.
The \emph{interior} of $X\subseteq\mathbb{Q}^n$ in $Y\subseteq\mathbb{Q}^n$ is $\int_Y(X) = \{x\in X\colon B_\epsilon(x)\cap Y\subseteq X \text{ for some } \epsilon>0\}$.
The \emph{closure} of $X\subseteq \mathbb{Q}^n$ in $Y\subseteq\mathbb{Q}^n$, $\cl_Y(X)$, is the set of all limit points of sequences in $X$ which converge in $Y$, \ie $\cl_Y(X) = \{\lim_{k\rightarrow \infty} a^k\colon (a^k)_{k\in\mathbb{N}} \text{ converges in } Y \text{ and $a^k\in X$ for all $k\in\mathbb{N}$}\}$.
$X$ is \emph{dense} in $Y$ if $\cl_Y(X) = Y$.
Alternatively, $X$ is dense at $y\in\mathbb{Q}^n$ if for every $\epsilon>0$ there is $x\in X$ such that $\|x-y\|<\epsilon$.
$X$ is dense in $Y$ if $X$ is dense at $y$ for every $y\in Y$.

\section{Non-Probabilistic Social Choice}

\begin{theorem}
	There is no SCF that satisfies population-consistency and composition-consistency.
\end{theorem}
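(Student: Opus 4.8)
The plan is to turn the two axioms into a contradiction with decisiveness, by showing that population-consistency and composition-consistency force $f$ to be hopelessly indecisive on a whole neighbourhood of the fully symmetric profile. First I would pin down $f$ on two-element agendas $\{x,y\}$. Unanimity fixes $f$ at the endpoints $R(x,y)\in\{0,1\}$. Since composition-consistency implies neutrality (Lemma~\ref{lem:neutrality}), the tied profile $R(x,y)=\nicefrac12$ is invariant under the transposition of $x$ and $y$, so $f$ must return the full set $\{x,y\}$ there. Population-consistency makes $C_x=\{R\colon x\in f(R)\}$ convex (if $x\in f(R')$ and $x\in f(R'')$ then $x\in f(R')\cap f(R'')\subseteq f(R)$ for every convex combination $R$), hence an interval; together with neutrality and decisiveness, which forbids an interval of multi-valued profiles, this forces $f$ to be majority rule: $f(R)=\{x\}$ iff $R(x,y)>\nicefrac12$, with the full set occurring only at the tie.

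Next I would work on a three-element agenda $A=\{a,b,c\}$ and study the convex sets $C_x=\{R\in\mathcal R|_A\colon x\in f(R)\}$. Each is convex by population-consistency, and since $f(R)\neq\emptyset$ they cover $\mathcal R|_A$. Neutrality permutes them under permutations of $A$ and forces $f$ to return all of $\{a,b,c\}$ at the uniform profile $R_u=\uni(\mathcal L(A))$, so $R_u\in C_a\cap C_b\cap C_c$. The additional leverage comes from composition-consistency together with the two-alternative characterisation: whenever $\{b,c\}$ is a component and $a$ weakly beats it, we have $f(R|_{\{a,b\}})=\{a\}$, so $f(R)=\{a\}\times_b f(R|_{\{b,c\}})$ and hence $a\in f(R)$. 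These profiles fill (half of) the three-dimensional face of $\mathcal R|_A$ on which $b$ and $c$ are adjacent in every ranking. Using the component $\{a,c\}$ analogously produces further profiles with $a\in f(R)$, supported on the complementary rankings, so that the convex hull of all of these compositionally-determined profiles is full-dimensional and, by population-consistency, contained in $C_a$.

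The crux is then to show that $R_u$ lies in the \emph{interior} of this full-dimensional subset of $C_a$. Once that is established, a whole $\epsilon$-ball around $R_u$ lies in $C_a$; applying the transposition that swaps $a$ and $b$ (which fixes $R_u$ and maps $C_a$ onto $C_b$) shows the same ball lies in $C_b$. Hence every profile in an open neighbourhood of $R_u$ satisfies $\{a,b\}\subseteq f(R)$ and is multi-valued, which contradicts decisiveness and completes the proof.

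The main obstacle is precisely this interiority claim. It is not enough that the compositional profiles span all directions; one must choose them carefully enough that their convex hull genuinely surrounds $R_u$ rather than merely touching it along a boundary face (recall that neutrality alone allows the degenerate ``three convex sectors meeting at $R_u$'' configuration, which is perfectly compatible with decisiveness). Intuitively, the two axioms force $f$ to behave like the maximally indecisive Pareto rule near the symmetric profile; the real work is to make this qualitative statement quantitative and robust to small perturbations, turning a single symmetric tie into an open region of ties, which is exactly what decisiveness forbids.
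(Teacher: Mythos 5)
Your overall strategy---forcing multi-valuedness on a full-dimensional set of profiles and contradicting decisiveness---is the same as the paper's, and your binary step is fine (for SCFs it is even easier than the paper's general two-alternative lemma). But the gap you yourself flag as ``the crux'' is not merely unproven: it is unbridgeable with the profiles you describe. Every profile $R$ that your component arguments certify to lie in $C_a$ satisfies the affine inequality $R(a,b)+R(a,c)\ge 1$, i.e., $a$'s Borda score is at least the average score: with component $\{b,c\}$ you need $R(a,b)\ge\nicefrac{1}{2}$, and on that face $R(a,b)=R(a,c)$; with a component containing $a$ you need both $R(a,b)\ge\nicefrac{1}{2}$ and $R(a,c)\ge\nicefrac{1}{2}$. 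This half-space constraint is preserved by the convex combinations that population-consistency supplies, and the uniform profile $R_u$ satisfies it with \emph{equality}. So $R_u$ lies on a supporting hyperplane of everything you can certify for $C_a$, is never in the interior of that region, and there is no $\epsilon$-ball around $R_u$ in $C_a$ for the $a\leftrightarrow b$ transposition to act on. This is no accident: Borda's rule is population-consistent and decisive, so any contradiction must exploit profiles where composition-consistency forces an alternative with \emph{below}-average Borda score into the choice set, and your certificates never produce one for the designated alternative.

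The paper sidesteps the issue by not seeking a neighbourhood of $R_u$ inside a single $C_a$; it targets $C_a\cap C_b$ directly. It exhibits six affinely independent profiles at each of which $\Delta(\{a,b\})\subseteq f(R^i)$: the Condorcet cycle and three two-voter profiles consisting of a ranking and its reverse (where neutrality and composition-consistency give $f(R^i)=\Delta(\{a,b,c\})$), plus two profiles with a unanimously ordered component ($\{b,c\}$ with $b$ on top, resp.\ $\{a,c\}$ with $a$ on top) sitting above a majority tie on the two-element quotient (where unanimity, neutrality, and composition-consistency give $f(R^i)=\Delta(\{a,b\})$). Population-consistency then yields $\Delta(\{a,b\})\subseteq f(R)$ on the full-dimensional convex hull of these six profiles, and decisiveness fails at the barycentre of that hull---an interior point which is \emph{not} the uniform profile. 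If you wish to keep your architecture, the repair is to aim for a full-dimensional subset of $C_a\cap C_b$ rather than a ball around $R_u$ in $C_a$: the two half-spaces $\{R(a,b)+R(a,c)\ge 1\}$ and $\{R(b,a)+R(b,c)\ge 1\}$ have full-dimensional intersection, so the Borda obstruction does not rule out that weaker goal.
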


\begin{proof}
	Assume for contradiction that $f$ is an SCF that satisfies population-consistency and composition-consistency. Let $A = \{a,b,c\}$ and consider the profiles $R^1,\dots,R^6$ as depicted below.
	
	\begin{center}
		\begin{tabular}{cccccc}
			\begin{tabular}{ccc}
			$\nicefrac{1}{3}$&$\nicefrac{1}{3}$&$\nicefrac{1}{3}$\\
			\midrule
			$a$&$b$&$c$\\
			$b$&$c$&$a$\\
			$c$&$a$&$b$\\[2ex]
			\multicolumn{3}{c}{$R^1$}
			\end{tabular}
			&
			\centering
			\begin{tabular}{cc}
			$\nicefrac{1}{2}$&$\nicefrac{1}{2}$\\
			\midrule
			$a$&$c$\\
			$b$&$b$\\
			$c$&$a$\\[2ex]
			\multicolumn{2}{c}{$R^2$}
			\end{tabular}       
			&
			\centering
			\begin{tabular}{cc}
			$\nicefrac{1}{2}$&$\nicefrac{1}{2}$\\
			\midrule
			$a$&$b$\\
			$c$&$c$\\
			$b$&$a$\\[2ex]
			\multicolumn{2}{c}{$R^3$}
			\end{tabular}        
			&
			\centering
			\begin{tabular}{cc}
			$\nicefrac{1}{2}$&$\nicefrac{1}{2}$\\
			\midrule
			$b$&$c$\\
			$a$&$a$\\
			$c$&$b$\\[2ex]
			\multicolumn{2}{c}{$R^4$}
			\end{tabular}        
			&
			\centering
			\begin{tabular}{cc}
			$\nicefrac{1}{2}$&$\nicefrac{1}{2}$\\
			\midrule
			$a$&$b$\\
			$b$&$c$\\
			$c$&$a$\\[2ex]
			\multicolumn{2}{c}{$R^5$}
			\end{tabular}        
			&
			\centering
			\begin{tabular}{cc}
			$\nicefrac{1}{2}$&$\nicefrac{1}{2}$\\
			\midrule
			$b$&$a$\\
			$a$&$c$\\
			$c$&$b$\\[2ex]
			\multicolumn{2}{c}{$R^6$}
			\end{tabular}
		\end{tabular}
	\end{center}
	
	We claim that $\Delta(\{a,b\})\subseteq f(R^i)$ for all $i\in\{1,\dots,6\}$. 
	It follows from neutrality that $f(R^1) = \Delta(A)$. 
	Again, by neutrality, $f(R^2|_{\{a,b\}}) = \Delta(\{a,b\})$ and $f(R^2|_{\{a,c\}}) = \Delta(\{a,c\})$.
	Notice that $\{a,b\}$ is a component in $R^2$.
	Hence, by composition-consistency,
	\[f(R^2) = f(R^2|_{\{a,c\}})\times_a f(R^2|_{\{a,b\}}) = \Delta(\{a,c\})\times_a \Delta(\{a,b\}) = \Delta(A)\text.\] 
	A similar argument yields $f(R^i) = \Delta(A)$ for $i = 3,4$. 
	Unanimity implies that $f(R^5|_{\{b,c\}}) = \{b\}$ and, by neutrality, we have $f(R^5|_{\{a,b\}}) = \Delta(\{a,b\})$. 
	Furthermore, $\{b,c\}$ is a component in $R^5$.
	Hence, by neutrality and composition-consistency,
	\[f(R^5) = f(R^5|_{\{a,b\}})\times_b f(R^5|_{\{b,c\}}) = \Delta(\{a,b\})\times_b \{b\} = \Delta(\{a,b\})\text.\] 
	Similarly, $f(R^6) = \Delta(\{a,b\})$.
	
	Every profile $R^i$ is a vector in the five-dimensional unit simplex $\mathcal{R}|_A$ in $\mathbb{Q}^6$. The corresponding vectors are depicted below.
	\[
	\left(
        \begin{array}{c}
			R^1\\
			R^2\\
			R^3\\
			R^4\\
			R^5\\
			R^6
     	\end{array}
	\right)
	=
		\left(
	        \begin{array}{cccccc}
				\nicefrac{1}{3} & \nicefrac{1}{3} & \nicefrac{1}{3} & 0 & 0 & 0\\
				\nicefrac{1}{2} & 0 & 0 & 0 & \nicefrac{1}{2} & 0\\
				0 & \nicefrac{1}{2} & 0 & \nicefrac{1}{2} & 0 & 0\\
				0 & 0 & \nicefrac{1}{2} & 0 & 0 & \nicefrac{1}{2}\\
				\nicefrac{1}{2} & \nicefrac{1}{2} & 0 & 0 & 0 & 0\\
				0 & 0 & 0 & \nicefrac{1}{2} & 0 & \nicefrac{1}{2}
	     	\end{array}
		\right)
	\]
It can be checked that $R^1,\dots,R^6$ are affinely independent, \ie $\dim(\{R^1,\dots,R^6\}) = 5$. It follows from population-consistency that $\Delta(\{a,b\})\subseteq f(R)$ for every $R\in\conv(\{R^1,\dots,R^6\})$. 
	Hence, $\{R\in\mathcal{R}|_A\colon |f(R)| = 1\}$ is not dense in $\mathcal{R}|_A$ at $\nicefrac{1}{6}\, R^1+\dots+\nicefrac{1}{6}\, R^6$, which contradicts decisiveness of $f$.
\end{proof}

\section{Probabilistic Social Choice}
\label{sec:fml}

In this section we prove that every PSCF that satisfies population-consistency and composition-consistency has to return maximal lotteries. The high-level structure of the proof is described after \thmref{thm:fisml} in \secref{sec:ml}.

\subsection{$\ml$ Satisfies Population-Consistency and Composition-Consistency}
\label{sec:mlsat}

We first show that $\ml$ is a PSCF that satisfies population-consistency and composition-consistency. This statement is split into two lemmas.

\begin{lemma}\label{lem:mlpscf}
$\ml$ is a PSCF.
\end{lemma}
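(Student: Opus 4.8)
The goal is to verify that $\ml$ meets every requirement in the definition of a PSCF: for each agenda $A$ it must be well-defined as a map into $\Delta(A)$, and it must satisfy convex-valuedness, non-emptiness, unanimity, continuity (upper hemicontinuity), and decisiveness. The plan is to dispatch these in order of increasing difficulty, leaning on the zero-sum game interpretation of $M_R$ throughout.

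First I would establish the easy structural facts. Non-emptiness is immediate from the Minimax Theorem: since $M_R$ is skew-symmetric, the symmetric zero-sum game it induces has value~$0$, and any optimal (maximin) strategy $p$ satisfies $p^T M_R q \ge 0$ for all $q\in\Delta(A)$, so $\ml[R]\neq\emptyset$. Convex-valuedness is a direct consequence of the defining inequalities: $\ml[R]$ is the solution set of the finite system of linear inequalities $\{p^T M_R q \ge 0 : q \in A\}$ (it suffices to check the vertices $q=e_y$ of $\Delta(A)$), hence it is a (rational) polytope and in particular convex. Unanimity is a trivial two-alternative computation: if $R(x,y)=1$ then $M_R(x,y)=1$ and $M_R(y,x)=-1$, and the unique maximin strategy of this $2\times 2$ game puts probability~$1$ on~$x$, so $\ml[R]=\{x\}$.

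The two substantive requirements are continuity and decisiveness. For upper hemicontinuity, I would argue that the graph $\{(R,p) : p\in\ml[R]\}$ is closed: the entries $M_R(x,y)$ depend continuously (in fact linearly) on $R$, and each defining inequality $p^T M_R q \ge 0$ is preserved under limits, so if $R^k\to R$ and $p^k\to p$ with $p^k\in\ml[R^k]$, then $p\in\ml[R]$. Since the codomain $\Delta(A)$ is compact, a closed graph yields upper hemicontinuity. For decisiveness I must show that $\{R : |\ml[R]|=1\}$ is dense in $\mathcal{R}|_A$; the paper signals that this is handled by showing the set of profiles with a \emph{unique} maximal lottery is open and dense. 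The natural route is to invoke the characterization of symmetric zero-sum games with a unique optimal strategy and to perturb any given $R$ slightly (staying within the rational simplex) so that the perturbed majority-margin matrix lies in this generic, uniqueness-guaranteeing set; here the result of \citet{LLL97a} on uniqueness for generic (e.g.\ integer/odd) payoffs, together with a rational approximation argument, supplies the needed dense set of unique-valued profiles.

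The main obstacle I anticipate is decisiveness, since it is the only property that is not a one-line consequence of linear-algebraic or topological generalities: it requires genuinely understanding when the optimal-strategy set of a symmetric zero-sum game collapses to a singleton and then showing such payoff matrices are dense \emph{among those realizable as majority margins of rational profiles}. The realizability constraint (not every skew-symmetric matrix arises as some $M_R$) is the subtle point, and I would address it by perturbing the profile rather than the matrix directly, using that rational perturbations of $R$ induce a full-dimensional family of attainable margin matrices, combined with the genericity of uniqueness. The remaining properties (non-emptiness, convexity, unanimity, upper hemicontinuity) I expect to be routine, as sketched above.
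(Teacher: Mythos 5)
Your proposal is correct and follows essentially the same route as the paper: convexity and upper hemicontinuity from the linear-inequality/closed-graph description of the maximin set, unanimity by direct computation, and decisiveness by perturbing the profile within the rational simplex so that the scaled majority margins become odd integers and invoking the uniqueness result of \citet{LLL97a}. The realizability subtlety you flag is exactly the reason the paper also perturbs $R$ rather than $M_R$, so nothing is missing.
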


\begin{proof}
	$\ml$ is continuous, since the correspondence that maps a matrix $M$ to the set of vectors $x$ such that $Mx \ge 0$ is (upper hemi-) continuous.

	The fact that $f(R)$ is convex for every $R\in\mathcal{R}$ follows from convexity of the set of maximin strategies for all (symmetric) zero-sum games.
	
	$\ml$ obviously satisfies unanimity by definition.
	
	$\ml$ satisfies decisiveness. Let $A\in\fone(U)$ and $R\in\mathcal{R}|_A$. 
	It is easy to see that, for every $\epsilon > 0$, we can find $R'\in B_\epsilon(R)\cap\mathcal{R}|_A$ and an odd integer $k$ such that $kR'(\pref)$ is an integer for every ${\pref}\in\mathcal L(A)$. 
	Then $kM_{R'}(x,y)$ is an odd integer for all distinct $x,y\in A$, since the numbers of voters who prefer $x$ to $y$ and $y$ to $x$ add up to the odd integer $k$.
	\citet{LLL97a} have shown that every symmetric zero-sum game whose off-diagonal entries are odd integers admits a unique Nash equilibrium. Applying their result to $kM_{R'}$, which has the same maximin strategies as $M_{R'}$, yields $|\ml(R')| = 1$ and hence $\ml$ is decisive. \end{proof}

Moreover, the set of symmetric zero-sum games with a unique maximin strategy inherits openness from the set of all zero-sum games with a unique maximin strategy \citep[][pp.~56--58]{BKS50a}. 
Hence, the set of profiles with a unique maximal lottery is open and dense in the set of all profiles and the set of profiles with multiple maximal lotteries is nowhere dense.

\begin{lemma}\label{lem:mlpc}
$\ml$ satisfies population-consistency and composition-consistency.
\end{lemma}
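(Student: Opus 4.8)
The plan is to handle the two axioms separately, using throughout the description of $\ml[R]$ as the set of optimal (maximin) strategies of the symmetric zero-sum game with skew-symmetric payoff matrix $M_R$, that is, those $p\in\Delta(A)$ with $p^T M_R q\ge 0$ for all $q\in\Delta(A)$. \emph{Population-consistency} is the easy half and follows purely from the affinity of the map $R\mapsto M_R$: since each $R(x,y)$ is a linear functional of $R$, we have $M_{\lambda R'+(1-\lambda)R''}=\lambda M_{R'}+(1-\lambda)M_{R''}$. Hence if $p\in\ml[R']\cap\ml[R'']$, then for every $q$,
\[ p^T M_R q=\lambda\, p^T M_{R'} q+(1-\lambda)\, p^T M_{R''} q\ge 0, \]
so $p\in\ml[R]$, which settles this axiom.

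For \emph{composition-consistency} with component $B$, $A=A'\cup B$, and $A'\cap B=\{b\}$, the first step is to record the block structure of $M_R$. Because $B$ is a component, every $a\in A\setminus B$ has the same margin against all of $B$, namely $M_R(a,b')=M_{R|_{A'}}(a,b)$ for all $b'\in B$; the margins within $A\setminus B$ agree with $M_{R|_{A'}}$ and those within $B$ agree with $M_{R|_B}$. The crucial computational step is then an exact bilinear decomposition: writing any $s\in\Delta(A)$ as the distribution $s'\in\Delta(A')$ it projects to on the uncloned part (with $s'_b=\sum_{b'\in B}s_{b'}$) together with the normalized distribution $\tilde s\in\Delta(B)$ it induces on $B$, one verifies for $r=p\times_b q$ that
\[ r^T M_R s \;=\; p^T M_{R|_{A'}}\, s' \;+\; p_b\, s'_b\; q^T M_{R|_B}\,\tilde s, \]
with the convention that the cross term is $0$ when $s'_b=0$. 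This identity does essentially all the work. The inclusion $\ml[R|_{A'}]\times_b\ml[R|_B]\subseteq\ml[R]$ is immediate: if $p$ and $q$ are maximal in the respective subprofiles, both summands are nonnegative for every $s$, so $p\times_b q\in\ml[R]$.

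For the reverse inclusion I start from an arbitrary $r\in\ml[R]$, define $p\in\Delta(A')$ by $p_a=r_a$ for $a\in A\setminus B$ and $p_b=\sum_{b'\in B}r_{b'}$, and (when $p_b>0$) $q\in\Delta(B)$ by $q_{b'}=r_{b'}/p_b$, so that $r=p\times_b q$ by construction. To see $p\in\ml[R|_{A'}]$, I test against an arbitrary $s'\in\Delta(A')$ by lifting it to $A$ using $\tilde s=q$; since $q^T M_{R|_B}\,q=0$ by skew-symmetry, the cross term vanishes and $p^T M_{R|_{A'}}\,s'=r^T M_R s\ge 0$ for all $s'$.

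The main obstacle is the final step, $q\in\ml[R|_B]$. Testing against $\tilde s\in\Delta(B)$ by taking $s$ supported on $B$ gives $s'=e_b$, the point mass at $b$, and hence $r^T M_R s=(p^T M_{R|_{A'}})_b+p_b\, q^T M_{R|_B}\,\tilde s\ge 0$, which is not obviously sufficient because the first term need not vanish. The key observation that closes the gap is that the expected margin of an optimal strategy against any alternative in its support is zero: from $p^T M_{R|_{A'}}\,p=0$ (skew-symmetry) together with $(p^T M_{R|_{A'}})_y\ge 0$ for all $y$ (optimality of $p$), the relation $\sum_y p_y\,(p^T M_{R|_{A'}})_y=0$ forces $(p^T M_{R|_{A'}})_y=0$ whenever $p_y>0$. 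Applying this with $y=b$ in the case $p_b>0$ annihilates the first term, yielding $q^T M_{R|_B}\,\tilde s\ge 0$ for all $\tilde s$, i.e. $q\in\ml[R|_B]$. The degenerate case $p_b=0$ forces $r_{b'}=0$ for all $b'\in B$, so $r=p\times_b q$ for \emph{any} $q\in\ml[R|_B]$ (nonempty by the Minimax Theorem), completing $\ml[R]\subseteq\ml[R|_{A'}]\times_b\ml[R|_B]$ and hence the proof.
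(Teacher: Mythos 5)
Your proof is correct and follows essentially the same route as the paper: population-consistency via linearity of $R\mapsto M_R$, and composition-consistency via the same block decomposition of $M_R$ and the same bilinear identity, including the same lifting of test lotteries (distributing the mass at $b$ according to $q$) to establish $p\in\ml(R|_{A'})$. The only, equally valid, deviation is in the final step $q\in\ml(R|_B)$, where you invoke complementary slackness of $p$ at $b$, whereas the paper instead tests against the lottery $(p_C,\|p_B\|\,q')$ so that the extra terms cancel directly by skew-symmetry.
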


\begin{proof}
	To simplify notation, for every $v\in\mathbb{Q}^{n}$ and $X\subseteq [n]$, we denote by $v_{X}$ the restriction of $v$ to indices in $X$, \ie $v_{X} = (v_i)_{i\in X}$. 
	
	$\ml$ satisfies population-consistency. 
	Let $A\in\fone(U)$, $R', R''\in\mathcal{R}|_A$, and $p\in \ml(R')\cap\ml(R'')$. 
Then, by definition of $\ml$, $p^T M_{R'} q\ge 0$ and $p^T M_{R''}q\ge 0$ for all $q\in\Delta(A)$.
	Hence, for all $\lambda\in[0,1]$,
	\[
	p^T\left(\lambda M_{R'} + (1-\lambda) M_{R''}\right) q = \lambda\underbrace{p^T M_{R'} q}_{\ge 0} + (1-\lambda)\underbrace{p^T M_{R''} q}_{\ge 0}\ge 0\text,
	\]
	for all $q\in\Delta(A)$, which implies that $p\in\ml(\lambda R' + (1-\lambda)R'')$.

	$\ml$ satisfies composition-consistency. 
	Let $A',B\in\fone(U)$ such that $A'\cap B = \{b\}$, $A = A'\cup B$, and $R\in\mathcal{R}|_A$ such that $B$ is a component in $R$.
	To simplify notation, let $C = A \setminus B$ and $M = M_R, M_{A'} = M_{R|_{A'}}, M_B = M_{R|_B}$, and $M_C = M_{R|_{C}}$.
	Notice first that $M$ and $M_{A'}$ take the following form for some $v\in\mathbb{Q}^{A\setminus B}$:
		\[
		M=
			\left(
		        \begin{array}{ccc:ccc}
					\multicolumn{3}{c:}{\multirow{3}{*}{$M_C$}} & | & & |\\
					&&& v & \dots & v\\
					&&& | & & |\\
					\hdashline		
					\scalebox{2.0}[1.0]{\( - \)} & (-v^T) & \scalebox{2.0}[1.0]{\( - \)} & \multicolumn{3}{c}{\multirow{3}{*}{$M_B$}}\\
					& \vdots & \\
					\scalebox{2.0}[1.0]{\( - \)} & (-v^T) & \scalebox{2.0}[1.0]{\( - \)}\\
		       \end{array}
			\right)
			\text,\qquad\qquad
			M_{A'}=
				\left(
			        \begin{array}{ccc:c}
						\multicolumn{3}{c:}{\multirow{3}{*}{$M_C$}} & |\\
						&&& v\\
						&&& |\\
						\hdashline		
						\scalebox{2.0}[1.0]{\( - \)} & (-v^T) & \scalebox{2.0}[1.0]{\( - \)} & 0\\
			       \end{array}
				\right)\text{.}
		\]

	Let $p\in \ml(R|_{A'})\times_b \ml(R|_B)$. 
	Then, there are $p^{A'}\in \ml(R|_{A'})$ and $p^B\in \ml(R|_B)$ such that $p = p^{A'}\times_b p^B$.
	Let $q\in\Delta(A)$. Then,
	\begin{align*}
		p^T M q &= p_{C}^T M_{C} q_C + \|p_B\| (-v)^T q_C + p_C^Tv \|q_B\| + p_B^T M_B q_B\\
		&= (p_{C}, \|p_B\|)^T M_{A'} (q_{C}, \|q_B\|)^T + p_B^T M_B q_B\\
		&= \underbrace{(p^{A'})^T M_{A'} (q_{C}, \|q_B\|)^T}_{\ge 0} + \|p_B\|\underbrace{(p^B)^T M_B q_B}_{\ge 0} \geq 0\text,
	\end{align*}
		since $p^{A'}\in\ml(R|_{A'})$ and $p^B\in\ml(R|_B)$, respectively.
		Hence $p\in\ml(R)$.
	
	For the other direction, let $p\in \ml(R)$.
	We have to show that there are $p^{A'}\in\ml(R|_{A'})$ and $p^B\in\ml(R|_B)$ such that $p = p^{A'}\times_{b} p^B$.

	First, if $\|p_B\| = 0$ let $p^{A'} = p_{A'}$ and $p^B\in\ml(R|_B)$ be arbitrary.
	Let $q\in\Delta(A')$. Then,
	\begin{align*}
		(p^{A'})^T M_{A'} q = p_{C}^T M_{C} q_{C} + p_{C}^T v q_b =
		p^T M (q, 0)^T\ge 0\text,
	\end{align*}
	since $p\in\ml(R)$. Hence, $p^{A'}\in\ml(R|_{A'})$.
	
	Otherwise, let $p^{A'} = (p_C, \|p_B\|)$ and $p^B = p_B/\|p_B\|$.
	Let $q\in\Delta(A')$. Then,
	\begin{align*}
		(p^{A'})^T M_{A'} q &= p_{C}^T M_{C} q_{C} + \|p_B\| (-v)^T q_{C} + p_{C}^T v q_b\\
		&= p_{C}^T M_{C} q_{C} + \|p_B\| (-v)^T q_{C} + p_{C}^T v q_b + \frac{q_b}{\|p_B\|} \underbrace{p_B^T M_B p_B}_{=0}\\
		&= p^T M (q_{C}, \frac{q_b}{\|p_B\|} p_B)^T\ge 0\text.
	\end{align*}
	Hence, $p^{A'}\in\ml(R|_{A'})$.
	Let $q\in\Delta(B)$. 
Then,
	\begin{align*}
		\|p_B\|^2 (p^B)^T M_B q &= \|p_B\| p_B^T M_B q\\
		&= \|p_B\| p_B^T M_B q + \underbrace{p_{C}^T M_{C} p_{C}}_{=0} + \underbrace{\|p_B\| (-v)^T p_{C} + \|p_B\| p_{C}^T v}_{=0}\\
		&= (p_{C}, p_B)^T M (p_{C}, \|p_B\| q) = p^TM (p_{C}, \|p_B\| q) \geq 0\text.
	\end{align*}
	Hence, $p^B\in\ml(R|_B)$.
\end{proof}

\subsection{Binary Choice}\label{sec:binary}

The basis of our characterization of $\ml$ is the special case for agendas of size~2. 
The following lemma states that, on two alternatives, whenever a composition-consistent PSCF returns a non-degenerate lottery, it has to return all lotteries. Interestingly, the proof uses composition-consistency on three-element agendas, even though the statement itself only concerns agendas of size~2.
In order to simplify notation, define 
\[p^\lambda = \lambda a+(1-\lambda)b\text{.}\]

\begin{lemma}\label{lem:binary}
	Let $A=\{a,b\}$ and $f$ be a PSCF that satisfies composition-consistency. 
	Then, for all $R\in\mathcal{R}|_{A}$ and $\lambda\in(0,1)$, $p^{\lambda} \in f(R)$ implies $f(R) = \Delta(A)$.
\end{lemma}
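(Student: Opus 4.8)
The plan is to pass to the set $S=\{p_a\colon p\in f(R)\}\subseteq[0,1]$ of probabilities that $f(R)$ assigns to $a$. Since $f(R)$ is convex and, being the value of a PSCF, closed, it suffices to show that $S$ contains points arbitrarily close to both $0$ and $1$; convexity then gives $(0,1)\subseteq S$ and closedness gives the endpoints, so $S=[0,1]$ and $f(R)=\Delta(A)$. The hypothesis provides $\lambda\in S\cap(0,1)$, and it forces $t:=R(a,b)\in(0,1)$, since for $t\in\{0,1\}$ unanimity would make $f(R)$ a vertex and leave no interior point in $f(R)$.

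The decisive construction is a single three-element profile. I would adjoin a fresh alternative $c$ and let $M\in\mathcal{R}|_{\{a,b,c\}}$ be the profile in which every voter ranks $c$ \emph{strictly between} $a$ and $b$ in agreement with $R$: a fraction $t$ vote $a\succ c\succ b$ and a fraction $1-t$ vote $b\succ c\succ a$. The crucial feature of $M$ is that \emph{both} $\{b,c\}$ and $\{a,c\}$ are components, and in either case the reduced two-element profile is exactly $R$. Hence composition-consistency applies along each of the two components and produces two different product descriptions of the \emph{same} set $f(M)$, whose forced equality is the engine of the proof.

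I would then make both descriptions explicit, using that composition-consistency implies neutrality (Lemma~\ref{lem:neutrality}) to identify the internal values: $f(M|_{\{a,c\}})$ is $R$ with $c$ renamed for $b$, so its $a$-weights again range over $S$, while $f(M|_{\{b,c\}})$ is the $a\leftrightarrow b$ reflection of $R_{1-t}$, so its $c$-weights range over $S$ as well. Writing lotteries in $(a,b,c)$-coordinates, the $\{b,c\}$-decomposition gives $f(M)=\{(x,(1-x)(1-\gamma),(1-x)\gamma)\colon x,\gamma\in S\}$ while the $\{a,c\}$-decomposition gives $f(M)=\{(x'\delta,\,1-x',\,x'(1-\delta))\colon x',\delta\in S\}$. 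Equating these two sets, I would feed the element with $x=\gamma=\mu$ from the first form into the second, matching the $b$-coordinate $1-x'=(1-\mu)^2$ to obtain $1-(1-\mu)^2\in S$; dually, feeding the element with $x'=\delta=\mu$ from the second form into the first reads off $\mu^2\in S$. These hold for every $\mu\in S\cap(0,1)$, so $S\cap(0,1)$ is closed under $\mu\mapsto\mu^2$ and $\mu\mapsto 1-(1-\mu)^2$.

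Finally I would iterate from $\lambda$: the orbit $\lambda^{2^n}\to 0$ and the orbit given by $1-\lambda_{n+1}=(1-\lambda_n)^2$ satisfies $\lambda_n\to 1$, both remaining in $S$, which yields the desired accumulation at $0$ and $1$ and hence $f(R)=\Delta(A)$. I expect the main obstacle to be the first conceptual step: recognising that the correct gadget is a \emph{middle} clone whose two overlapping component structures share the \emph{same} reduced profile $R$, so that the tautology $f(M)=f(M)$ becomes a nontrivial self-map constraint on $S$. Once that profile is in hand, the only care needed is the neutrality bookkeeping—tracking when an internal value set equals $S$ versus $1-S$—and the routine coordinate matching; the convergence of the two iterations is then immediate.
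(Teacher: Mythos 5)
Your proof is correct and takes essentially the same route as the paper: the paper uses the isomorphic gadget ($a\succ b\succ c$ with weight $R(a,b)$ and $c\succ b\succ a$ with weight $R(b,a)$, i.e.\ your profile with $b$ and $c$ swapped), applies composition-consistency along its two overlapping components to deduce $p^{\lambda^2}\in f(R)$, iterates and uses continuity to get $b\in f(R)$, argues symmetrically for $a$, and concludes by convexity. Your write-up differs only cosmetically in that it extracts both closure maps $\mu\mapsto\mu^2$ and $\mu\mapsto 1-(1-\mu)^2$ explicitly from the single gadget rather than invoking a symmetric argument for the second endpoint.
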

\begin{proof}
	Let $R\in\mathcal{R}|_{A}$ and assume $p^{\lambda} \in f(R)$ for some $\lambda\in(0,1)$.
		Define $R'\in\mathcal{R}|_{\{a,b,c\}}$ as depicted below.
		
		\begin{center}
			\begin{tabular}{cc}
				$R(a,b)$&$R(b,a)$\\
				\midrule
				$a$&$c$\\
				$b$&$b$\\
				$c$&$a$\\[2ex]
				\multicolumn{2}{c}{$R'$}\\
			\end{tabular}        
		\end{center}	
	
	Notice that $R'|_{A} = R$ and thus, $p^{\lambda} \in f(R'|_A)$.
	Neutrality implies that $\lambda a+(1-\lambda )c \in f(R'|_{\{a,c\}})$.
	Since $A$ is a component in $R'$, we have $\lambda p^{\lambda} + (1-\lambda)c\in f(R'|_{\{a,c\}})\times_a f(R'|_A) = f(R')$.
	Since $\{b,c\}$ is also a component in $R'$, composition-consistency implies that $\lambda p^{\lambda} + (1-\lambda)c\in f(R') = f(R'|_{A})\times_b f(R'|_{\{b,c\}})$.
	Observe that $\lambda p^\lambda_a = p^{\lambda^2}_a$ and hence $p^{\lambda^2}\in f(R'|_A) = f(R)$.

	Applying this argument repeatedly, we get $p^{{\lambda^2}^k}\in f(R)$ for all $k\in \mathbb{N}$.
	Since ${\lambda^2}^k\rightarrow 0$ for $k\rightarrow\infty$ and $f$ is continuous, we get $p^0 = b\in f(R)$. 
	Similarly, it follows that $p^1 = a\in f(R)$. 
	The fact that $f$ is convex-valued implies that $f(R) = \Delta(A)$.
\end{proof}

The characterization of $\ml$ for agendas of size~2 proceeds along the following lines. By unanimity, neutrality, and \lemref{lem:binary}, we know which lotteries have to be returned by every composition-consistent PSCF for three particular profiles. Then population-consistency implies that every such PSCF has to return all maximal lotteries. Last, we again use population-consistency to show that the function is not decisive if it additionally returns lotteries that are not maximal. 

\begin{lemma}\label{lem:binary2}
	Let $f$ be a PSCF that satisfies population-consistency and composition-consistency and $A=\{a,b\}$. Then $f(R) = \ml(R)$ for every $R\in\mathcal{R}|_A$.
\end{lemma}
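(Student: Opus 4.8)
The plan is to exploit the fact that on a two-element agenda a profile $R\in\mathcal{R}|_A$ is completely determined by the single number $t=R(a,b)\in[0,1]\cap\mathbb{Q}$; write $R_t$ for this profile, and recall that $\ml(R_t)=\{a\}$ for $t>\nicefrac{1}{2}$, $\{b\}$ for $t<\nicefrac{1}{2}$, and $\Delta(A)$ for $t=\nicefrac{1}{2}$. Since $f$ is composition-consistent, \lemref{lem:neutrality} makes $f$ neutral. I would first pin down $f$ at three reference profiles. Unanimity directly gives $f(R_1)=\{a\}$ and $f(R_0)=\{b\}$. For the tie profile $R_{\nicefrac{1}{2}}$, the transposition $\pi$ of $a$ and $b$ fixes it, so neutrality forces $f(R_{\nicefrac{1}{2}})$ to be invariant under swapping $a$ and $b$; any nonempty convex swap-invariant set of lotteries contains the midpoint $p^{\nicefrac{1}{2}}$ (average any member with its mirror image), so $p^{\nicefrac{1}{2}}\in f(R_{\nicefrac{1}{2}})$, and \lemref{lem:binary} upgrades this to $f(R_{\nicefrac{1}{2}})=\Delta(A)$.

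The next step is a structural observation: combining \lemref{lem:binary} with convex-valuedness shows that for \emph{every} $t$ one has $f(R_t)\in\{\{a\},\{b\},\Delta(A)\}$. Indeed, if $f(R_t)$ contains any $p^\lambda$ with $\lambda\in(0,1)$ then \lemref{lem:binary} gives $f(R_t)=\Delta(A)$, and otherwise $f(R_t)$ is a convex subset of $\{a,b\}$, hence a singleton. With this trichotomy in hand I would establish $\ml\subseteq f$ via population-consistency: for $t>\nicefrac{1}{2}$, write $R_t=\mu R_1+(1-\mu)R_{\nicefrac{1}{2}}$ with $\mu=2t-1\in(0,1)$; since $a\in f(R_1)\cap f(R_{\nicefrac{1}{2}})$, population-consistency yields $a\in f(R_t)$, i.e.\ $\ml(R_t)\subseteq f(R_t)$. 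The case $t<\nicefrac{1}{2}$ follows by neutrality, and $t=\nicefrac{1}{2}$ is immediate.

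It then remains to prove the reverse inclusion $f\subseteq\ml$, for which the only thing to rule out is that $f(R_t)=\Delta(A)$ for some $t\neq\nicefrac{1}{2}$. Suppose toward a contradiction that $f(R_{t^*})=\Delta(A)$ for some $t^*>\nicefrac{1}{2}$. For each $t\in(\nicefrac{1}{2},t^*)$ I would write $R_t$ as a convex combination of $R_{\nicefrac{1}{2}}$ and $R_{t^*}$; since $f$ equals $\Delta(A)$ at both of these profiles, population-consistency forces $f(R_t)=\Delta(A)$ for the entire open interval $(\nicefrac{1}{2},t^*)$. But then $\{R\in\mathcal{R}|_A\colon|f(R)|=1\}$ omits a whole interval and is not dense, contradicting decisiveness. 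Hence $f(R_{t^*})=\{a\}=\ml(R_{t^*})$, and symmetrically for $t^*<\nicefrac{1}{2}$; together with the previous inclusion this gives $f=\ml$ on $\mathcal{R}|_A$.

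I expect the last step to be the main obstacle, or rather the main \emph{idea}: once one sees that a single non-maximal value propagates, through population-consistency, to an entire interval of non-maximal values and thereby collides with decisiveness, the argument closes cleanly. The earlier steps are essentially bookkeeping around \lemref{lem:binary} and neutrality; the only other point requiring care is the tie profile, where one must extract $p^{\nicefrac{1}{2}}\in f(R_{\nicefrac{1}{2}})$ from neutrality before \lemref{lem:binary} can be applied.
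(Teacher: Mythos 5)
Your proposal is correct and follows essentially the same route as the paper's proof: neutrality and convexity give $p^{\nicefrac{1}{2}}\in f(R_{\nicefrac{1}{2}})$, \lemref{lem:binary} upgrades this to $\Delta(A)$, population-consistency with the unanimous profiles yields $\ml\subseteq f$, and a second application of population-consistency propagates any extra lottery to a whole interval, contradicting decisiveness. The only cosmetic difference is that you first establish the trichotomy $f(R_t)\in\{\{a\},\{b\},\Delta(A)\}$ and propagate $\Delta(A)$ along the interval, whereas the paper propagates a single non-degenerate lottery $p^\lambda$ together with $a$; both give $|f(R_t)|>1$ on an open interval and hence the same contradiction.
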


\begin{proof}
	First, note that $R\in\mathcal{R}|_A$ is fully determined by $R(a,b)$.
	Let $R\in\mathcal{R}|_A$ be the profile such that $R(a,b) = \nicefrac{1}{2}$.
	Since $f(R)\neq\emptyset$, there is $\lambda\in[0,1]$ such that $p^\lambda\in f(R)$.
	Neutrality implies that $p^{1-\lambda}\in f(R)$ and hence, by convexity of $f(R)$, $p^{\nicefrac{1}{2}} = \nicefrac{1}{2}\,(p^\lambda + p^{1-\lambda})\in f(R)$.
	If follows from \lemref{lem:binary} that $f(R) = \Delta(A)$.
	
	Now, let $R\in\mathcal{R}|_A$ be the profile such that $R(a,b) = 1$.
	Unanimity implies that $a\in f(R)$.
	By population-consistency and the first part of the proof, we get $a\in f(R')$ for all $R'\in\mathcal{R}|_A$ with $R'(a,b)\in[\nicefrac{1}{2},1]$.
	Similarly, $b\in f(R')$ for all $R'\in\mathcal{R}|_A$ with $R'(a,b)\in [0, \nicefrac{1}{2}]$.
	This already shows that $\ml(R)\subseteq f(R)$ for every $R\in\mathcal{R}|_A$.
	
	Finally, let $R\in\mathcal{R}|_A$ be a profile such that $R(a,b) = r > \nicefrac{1}{2}$.
	If $f(R)\neq\{a\}$, there is $\lambda\in[0,1)$ such that $p^\lambda\in f(R)$.
	We have shown before that $f(R') = \Delta(A)$ if $R'(a,b) = \nicefrac{1}{2}$.
	Hence, it follows from population-consistency that $p^\lambda\in f(R')$ for every $R'\in\mathcal{R}|_A$ with $R'(a,b)\in[\nicefrac{1}{2},r]$.
	But then $\{R\in\mathcal{R}|_A\colon R(a,b)\in[\nicefrac{1}{2},r]\}\subseteq \{R\in\mathcal{R}|_A\colon |f(R)|>1\}$ and hence, $\{R\in\mathcal{R}|_A\colon |f(R)|=1\}$ is not dense in $\mathcal{R}|_A$.
	This contradicts decisiveness of $f$. 
	An analogous argument shows that $f(R) = \{b\}$ whenever $R(a,b)<\nicefrac{1}{2}$.
	
	In summary, we have that $f(R) = \{a\}$ if $R(a,b)\in (\nicefrac{1}{2},1]$, $f(R) = \{b\}$ if $R(a,b)\in [0,\nicefrac{1}{2})$, and $f(R) = \Delta(A)$ if $R(a,b)=\nicefrac{1}{2}$. Thus, $f = \ml$ (as depicted in \figref{fig:twoalts}(a)).
\end{proof}

\subsection{$f\subseteq\ml$}

The first lemma in this section shows that every PSCF that satisfies population-consistency and composition-consistency is Condorcet-consistent for profiles that are close to the uniform profile $\uni(\mathcal{L}(A))$, \ie the profile in which every preference relation is assigned the same fraction of voters. We prove this statement by induction on the number of alternatives. Every profile close to the uniform profile that admits a Condorcet winner can be written as a convex combination of profiles that have a component and admit the same Condorcet winner. For these profiles we know from the induction hypothesis that the Condorcet winner has to be chosen.

\begin{lemma}\label{lem:pccond}
	Let $f$ be a PSCF that satisfies population-consistency and composition-consistency and $A\in\fone(U)$. Then, $f$ satisfies Condorcet-consistency in a neighborhood of the uniform profile $\uni(\mathcal{L}(A))$.
\end{lemma}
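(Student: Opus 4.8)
The plan is to induct on $n=|A|$. For $n\le 2$, \lemref{lem:binary2} gives $f=\ml$ on two-element agendas, and $\ml$ is Condorcet-consistent everywhere (the weak Condorcet winner always lies in the output), so the claim holds on all of $\mathcal{R}|_A$ and in particular near $\uni(\mathcal{L}(A))$. For the inductive step I assume the statement for every agenda of size less than $n$, and I fix a Condorcet winner $x$ of a profile $R$ close to $u:=\uni(\mathcal{L}(A))$. Since the degenerate lottery on $x$ is a natural fixed point, upper hemicontinuity of $f$ together with the density of strict-Condorcet profiles with winner $x$ among the weak ones lets me reduce to the case where $x$ is a \emph{strict} Condorcet winner, i.e. $M_R(x,y)>0$ for all $y\neq x$.

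The engine of the argument is a per-piece observation. Suppose $R'$ has a two-element component $B=\{y,z\}$, that $x$ is a Condorcet winner of $R'$, and that the quotient profile $R'|_{A'}$ (with $A'=(A\setminus B)\cup\{b\}$) lies in the induction neighborhood of $\uni(\mathcal{L}(A'))$. Choosing the representative $b=x$ whenever $x\in B$, the equality of within-component margins shows that $x$ is a Condorcet winner of the $(n-1)$-alternative profile $R'|_{A'}$, so the induction hypothesis yields $x\in f(R'|_{A'})$; and when $x\in B$, \lemref{lem:binary2} applied to the binary profile $R'|_B$ gives $x\in f(R'|_B)$ because $M_{R'}(x,z)\ge 0$. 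Feeding the degenerate lottery on $x$ into composition-consistency, $f(R')=f(R'|_{A'})\times_b f(R'|_B)$, yields a lottery placing all probability on $x$; hence $x\in f(R')$. Once every piece of a convex decomposition of $R$ satisfies $x\in f(R_i)$, iterating population-consistency over the finitely many pieces gives $x\in\bigcap_i f(R_i)\subseteq f(R)$, as desired.

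It therefore remains to write $R$ as a convex combination $R=\sum_i\lambda_i R_i$ of profiles of the above kind. My candidate building blocks are small perturbations of the profiles $u_{\{y,z\}}$ that are uniform over all preference relations in which $y$ and $z$ are adjacent; each such $u_{\{y,z\}}$ has $\{y,z\}$ as a component, a uniform (hence near-uniform) quotient, and vanishing majority margins. Writing $T$ for the tangent space of $\mathcal{R}|_A$ at $u$ and $T_{\{y,z\}}$ for the tangent space of the component-$\{y,z\}$ face at $u_{\{y,z\}}$, the key spanning fact is $\sum_{\{y,z\}}T_{\{y,z\}}=T$: two preference relations related by an adjacent transposition keep the transposed pair adjacent in both, so the graph on $\mathcal{L}(A)$ joining relations that share an adjacent pair is connected, and the corresponding mass-zero differences span $T$. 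Consequently every $R$ near $u$ can be hit as $R=\tfrac{1}{\binom{n}{2}}\sum_{\{y,z\}}\bigl(u_{\{y,z\}}+t_{\{y,z\}}\bigr)$ with each $t_{\{y,z\}}\in T_{\{y,z\}}$ small.

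The delicate point, and the step I expect to be the main obstacle, is to choose the perturbations $t_{\{y,z\}}$ so that \emph{each} piece retains $x$ as a Condorcet winner while their scaled sum still equals the prescribed $R$ \emph{exactly}: population-consistency is a statement about profiles, not merely about majority margins, so reproducing $M_R$ is not enough. The plan is to split the required displacement into a part carried by pairwise \emph{internal} tilts---which move a single margin $M_R(y,z)$ in a chosen direction while leaving every other margin and the uniform quotient intact---and a margin-neutral part. Internal tilts on pairs $\{x,w\}$ realize the positive entries of the $x$-row, and internal tilts on pairs $\{y,z\}$ with $y,z\neq x$ realize the remaining margins while keeping $x$'s margins at zero, so each of these pieces has $x$ as a weak Condorcet winner automatically. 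The genuinely technical residue is to absorb the margin-neutral directions into pieces that are themselves margin-neutral on each component face, i.e. to establish $\ker M\cap T\subseteq\sum_{\{y,z\}}\bigl(\ker M\cap T_{\{y,z\}}\bigr)$; proving this decomposition, and bounding the pieces so that all quotients remain inside the induction neighborhoods, is where the real work lies.
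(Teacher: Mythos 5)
Your overall architecture matches the paper's: induct on $|A|$, decompose the profile near the uniform profile into pieces each of which has a component and retains $x$ as a Condorcet winner, dispatch each piece with composition-consistency plus the induction hypothesis (or \lemref{lem:binary2} for the binary factor), and recombine with population-consistency. Your ``engine'' step is sound. But the proposal is not a proof, because the decomposition it rests on is never constructed, and you say so yourself: the claim that the margin-neutral residue can be distributed among margin-neutral directions inside the component faces, i.e. $\ker M\cap T\subseteq\sum_{\{y,z\}}\bigl(\ker M\cap T_{\{y,z\}}\bigr)$, is exactly the content-bearing step, and it is far from automatic. For a linear map $L$ and subspaces $V_i$ with $\sum_i V_i=V$ one does \emph{not} in general have $\ker L\cap V=\sum_i(\ker L\cap V_i)$, and here the kernel of the margin map is genuinely complicated: for $n\ge 4$ its dimension $n!-1-\binom{n}{2}$ strictly exceeds the $\nicefrac{n!}{2}-1$ dimensions contributed by reversal-symmetric differences, so the decomposition would be a nontrivial combinatorial statement requiring its own proof. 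Without it you cannot guarantee that every piece of the convex combination keeps all of $x$'s majority margins nonnegative, and the argument reduces to ``population-consistency applied to a decomposition I have not exhibited.''

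For comparison, the paper sidesteps this issue by choosing summands in which the Condorcet property is built in by symmetry rather than engineered by solving a linear system: each profile $S^\pref$ places weight $\nicefrac{1}{2}$ on $\pref^{-1}$ and weight $\nicefrac{1}{2}$ split between $\pref$ and $\pref^{b\rightarrow a}$, which forces \emph{every} majority margin of $a$ in $S^\pref$ to vanish (the reversal cancels everything $a$-related), and the single residual profile $R'$ then has $\{a,b\}$ as a two-element component with $a$ still winning and a near-uniform quotient. The price is a separate inner induction showing $a\in f(S^\pref)$, which exploits the size-$(n-1)$ component $A\setminus\{x\}$ for the common top or bottom alternative $x$ of $\pref$ and $\pref^{-1}$; no kernel decomposition is ever needed. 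To rescue your route you would have to either prove the kernel-decomposition claim (and the accompanying bounds keeping all quotients in the induction neighborhoods) or redesign the margin-neutral pieces so that nonnegativity of $x$'s margins is automatic, as the paper does. Your preliminary reduction to strict Condorcet winners via upper hemicontinuity is harmless but plays no role in closing this gap.
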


\begin{proof}
	Let $f$ be a PSCF that satisfies population-consistency and composition-consistency and $A\in\fone(U)$ with $|A| = n$. 
Let, furthermore, $R\in\mathcal{R}|_A$ be such that $a\in A$ is a Condorcet winner in $R$ and $\|R-\uni(\mathcal{L}(A))\| \le \epsilon_n = (4^n\Pi_{k = 1}^n k!)^{-1}$. We show that $a\in f(R)$ by induction over $n$.
	An example for $n = 3$ illustrating the idea is given after the proof.
For $n = 2$, the claim follows directly from \lemref{lem:binary2}. 
	
	For $n > 2$, fix $b\in A\setminus\{a\}$. First, we introduce some notation. For ${\pref}\in\mathcal{L}(A)$, we denote by ${\pref}^{-1}$ the preference relation that reverses all pairwise comparisons, \ie $x\pref^{-1} y$ iff $y\pref x$ for all $x,y\in A$. By $\pref^{b\rightarrow a}$ we denote the preference relation that is identical to $\pref$ except that $b$ is moved upwards or downwards (depending on whether $a\pref b$ or $b\pref a$) until it is next to $a$ in the preference relation. Formally, let
	\[
	X_\pref = 
	\begin{cases}
		\{x\in A\colon a\succ x\succ b\}\quad\text{if } a\pref b\text{, and}\\
		\{x\in A\colon b\succ x\succ a\}\quad\text{if } b\pref a\text,\\
	\end{cases}
	\]
	and
	\[
	{\pref^{b\rightarrow a}} = 
	\begin{cases}
		{\pref}\setminus(X_\pref\times\{b\})\cup (\{b\}\times X_\pref)\quad\text{if }a\pref b\text{, and}\\
		{\pref}\setminus(\{b\}\times X_\pref)\cup (X_\pref\times\{b\})\quad\text{if }b\pref a\text.
	\end{cases}
	\]
	Notice that for every ${\pref'}\in\mathcal{L}(A)$, there are at most $n-1$ distinct preference relations~$\pref$ such that ${\pref'} = {\pref^{b\rightarrow a}}$. Furthermore, we say that $\{a,b\}$ is a \emph{component in $\pref$} if $X_\pref = \emptyset$.
	
	We first show that composition-consistency implies Condorcet-consistency for a particular type of profiles. 
	For ${\pref}\in\mathcal{L}(A)$, let $S\in\mathcal{R}|_A$ such that $S({\pref}) + S({\pref}^{b\rightarrow a}) = S({\pref^{-1}}) = \nicefrac{1}{2}$. 
	We have that $S(a,x) = \nicefrac{1}{2}$ for all $x\in A\setminus \{a\}$ and hence, $a$ is a Condorcet winner in $S$. We prove that $a\in f(S)$ by induction over $n$. 
	For $n = 2$, this follows from \lemref{lem:binary2}. 
	For $n>2$, let $x\in A\setminus\{b\}$ such that $x\pref y$ for all $y\in A$ or $y\pref x$ for all $y\in A$.
	This is always possible since $n > 2$. 
	Notice that $A\setminus\{x\}$ is a component in $S$ and $S(x,y) = \nicefrac{1}{2}$ for all $y\in A\setminus\{x\}$. 
	If $x = a$, it follows from composition-consistency and \lemref{lem:binary2} that $a\in f(S)$. 
	If $x\neq a$, it follows from the induction hypothesis that $a\in f(S|_{A\setminus\{x\}})$.
	\lemref{lem:binary2} implies that $a\in f(S|_{\{a,x\}})$ as $S(a,x) = \nicefrac{1}{2}$.
	Then, it follows from composition-consistency that $a\in f(S|_{\{a,x\}}) \times_a f(S|_{A\setminus\{x\}}) = f(S)$.

	Now, for every ${\pref}\in\mathcal{L}(A)$ such that $\{a,b\}$ is not a component in $\pref$ and $0 < R({\pref}) \le R({\pref}^{-1})$, let $S^\pref\in\mathcal{R}|_A$ such that 
	\[S^\pref({\pref}) + S^\pref({\pref^{b\rightarrow a}}) = S^\pref({\pref^{-1}}) = \nicefrac{1}{2} \qquad\text{and}\qquad \nicefrac{S^\pref({\pref})}{S^\pref({\pref^{-1}})} = \nicefrac{R({\pref})}{R({\pref^{-1}})}\text.
	\]
From what we have shown before, it follows that $a\in f(S^\pref)$ for all ${\pref}\in\mathcal{L}(A)$.
	
	The rest of the proof proceeds as follows. We show that $R$ can be written as a convex combination of profiles of the type $S^\pref$ and a profile $R'$ in which $\{a,b\}$ is a component and $a$ is a Condorcet winner. 
Since $R$ is close to the uniform profile, $R({\pref'})$ is almost identical for all preference relations ${\pref'}$. Hence $S^\pref({\pref'})$ is close to $0$ for all preference relations $\pref'$ in which $\{a,b\}$ is a component.  
	As a consequence, $R'({\pref'})$ is almost identical for all preference relations $\pref'$ in which $\{a,b\}$ is a component and
	$R'|_{A\setminus\{b\}}$ is close to the uniform profile for $n-1$ alternatives, \ie $\uni(\mathcal{L}(A\setminus\{b\}))$. 
	By the induction hypothesis, $a\in f(R'|_{A\setminus\{b\}})$. 
	Since $\{a,b\}$ is a component in $R'$, it follows from composition-consistency that $a\in f(R')$.
	
We define
	\[
	S = 2 \sum_{\pref} R(\pref^{-1}) S^\pref\text,
	\]
	where the sum is taken over all $\pref$ such that $\{a,b\}$ is not a component in $\pref$ and $0 < R({\pref}) \le R({\pref^{-1}})$ (in case $R({\pref}) = R({\pref^{-1}})$ we pick one of $\pref$ and $\pref^{-1}$ arbitrarily).
	Now, let $R'\in\mathcal{R}|_A$ such that
\[R = (1 - \|S\|)R' + S\text.\]

	Note that, by definition of $S$, $R'(\pref) = 0$ for all ${\pref}\in\mathcal{L}(A)$ such that $\{a,b\}$ is not a component in $\pref$. Hence, $\{a,b\}$ is a component in $R'$.
	By the choice of $R$, we have that \[
	\|S\| = \sum_{\pref\in\mathcal{L}(A)} S(\pref) \le \frac{n! - 2(n-1)!}{n!} + \epsilon_n = 1- \frac{2}{n} + \epsilon_n\text.
	\]
	Using this fact, a simple calculation shows that
	\[
	R'(\pref) \le \frac{R(\pref) - S(\pref)}{\frac{2}{n} - \epsilon_n} \le \frac{\frac{1}{n!} + \epsilon_n}{\frac{2}{n} - \epsilon_n} \le \frac{1}{2(n-1)!} + \frac{\epsilon_{n-1}}{4(n-1)!}\text.
	\]
Since, for every preference relation $\pref$ where $\{a,b\}$ is a component, there is exactly one other preference relation identical to $\pref$ except that $a$ and $b$ are swapped, we have that
	\[
	R'|_{A\setminus\{b\}}(\pref)\le \frac{1}{(n-1)!} + \frac{\epsilon_{n-1}}{2(n-1)!}\text,
	\]
	for every ${\pref}\in\mathcal{L}(A\setminus\{b\})$.
	By the above calculation, we have that
	\[
	\left\|R'|_{A\setminus\{b\}} - \uni(\mathcal{L}(A\setminus\{b\}))\right\| \le \epsilon_{n-1}\text.
	\]
	Since $S^\pref(a,x) = \nicefrac{1}{2}$ for all $x\in A\setminus\{a\}$ and ${\pref}\in\mathcal{L}(A)$, we have that $R'(a,x)\ge\nicefrac{1}{2}$ for all $x\in A\setminus\{a\}$. 
	Thus, $a$ is a Condorcet winner in $R'|_{A\setminus\{b\}}$. From the induction hypothesis it follows that $a\in f(R'|_{A\setminus\{b\}})$. Using the fact that $R'(a,b)\ge\nicefrac{1}{2}$, \lemref{lem:binary2} implies that $a\in f(R'|_{\{a,b\}})$. 
	Finally, composition-consistency entails $a\in f(R'|_{A\setminus\{b\}}) \times_a f(R'|_{\{a,b\}}) = f(R')$.
	
	In summary, $a\in f(S^\pref)$ for all ${\pref}\in\mathcal{L}(A)$ and $a\in f(R')$. Since~$R$ is a convex combination of profiles of the type $S^\pref$ and~$R'$, it follows from population-consistency that~$a\in f(R)$.
	\end{proof}

	We now give an example for $A = \{a,b,c\}$ which illustrates the proof of \lemref{lem:pccond}. Consider the following profile $R$, where $0\le\epsilon\le \epsilon_3$.
\begin{center}
		\newlength{\w}
		\setlength{\w}{\widthof{$\nicefrac{(1+2\epsilon)}{6}$}}
	  	\begin{tabular}{P{\w} P{\w} P{\w} P{\w} P{\w} P{\w}}
			$\nicefrac{(1+2\epsilon)}{6}$ & $\nicefrac{1}{6}$ & $\nicefrac{1}{6}$ & $\nicefrac{(1-\epsilon)}{6}$ & $\nicefrac{(1-\epsilon)}{6}$ & $\nicefrac{1}{6}$\\
			\midrule
			$a$&$a$&$b$&$b$&$c$&$c$\\
			$b$&$c$&$a$&$c$&$a$&$b$\\
			$c$&$b$&$c$&$a$&$b$&$a$\\[2ex]
			\multicolumn{6}{c}{$R$}\\	 		
		\end{tabular}
	\end{center}
	Then, we have that $\|R - \uni(\mathcal{L}(A))\| \le \epsilon_3$. 
	Now consider ${\pref}$ with $b\succ c\succ a$, which yields $S^\pref$ as depicted below.	
\begin{center}
\setlength{\w}{\widthof{$\nicefrac{(1-\epsilon)}{2}$}}
	  	\begin{tabular}{P{\w} P{\w} P{\w}}
			$\nicefrac{1}{2}$ & $\nicefrac{(1-\epsilon)}{2}$ & $\nicefrac{\epsilon}{2}$\\
			\midrule
			$a$&$b$&$c$\\
			$c$&$c$&$b$\\
			$b$&$a$&$a$\\[2ex]
			\multicolumn{3}{c}{$S^\pref$}\\	 		
		\end{tabular}
	\end{center}
	Here, $y\pref a$ for all $y\in A$. Hence, it follows from what we have shown before that $a\in f(S^\pref)$. No other profiles of this type need to be considered, as $\pref$ and $\pref^{-1}$ are the only preference relations in which $\{a,b\}$ is not a component. Thus $S = \nicefrac{1}{3}\, S^\pref$.
	
	Then, we have $R'$, $R'|_{\{a,c\}}$, and $R'|_{\{a,b\}}$ as follows.
	\begin{center}
\setlength{\w}{\widthof{$\nicefrac{(1+2\epsilon)}{4}$}}
	  	\begin{tabular}{P{\w} P{\w} P{\w} P{\w}}
			$\nicefrac{(1+2\epsilon)}{4}$ & $\nicefrac{1}{4}$ & $\nicefrac{(1-\epsilon)}{4}$ & $\nicefrac{(1-\epsilon)}{4}$\\
			\midrule
			$a$&$b$&$c$&$c$\\
			$b$&$a$&$a$&$b$\\
			$c$&$c$&$b$&$a$\\[2ex]
			\multicolumn{4}{c}{$R'$}\\	 		
		\end{tabular}
		\qquad
	  	\begin{tabular}{cc}
			$\nicefrac{(1+\epsilon)}{2}$ & $\nicefrac{(1-\epsilon)}{2}$\\
			\midrule
			$a$&$c$\\
			$c$&$a$\\
			\\[2ex]
			\multicolumn{2}{c}{$R'|_{\{a,c\}}$}\\	 		
		\end{tabular}
		\qquad
	  	\begin{tabular}{cc}
			$\nicefrac{(2+\epsilon)}{4}$ & $\nicefrac{(2-\epsilon)}{4}$\\
			\midrule
			$a$&$b$\\
			$b$&$a$\\
			\\[2ex]
			\multicolumn{2}{c}{$R'|_{\{a,b\}}$}\\	 		
		\end{tabular}
	\end{center}
It follows from \lemref{lem:binary2} that $a\in f(R'|_{\{a,c\}})$ and $a\in f(R'|_{\{a,b\}})$. Then, composition-consistency implies that $a\in f(R') = f(R'|_{\{a,c\}})\times_a f(R'|_{\{a,b\}})$.
	
	In summary, we have that
	\[
	R = \nicefrac{2}{3}\, R' + \nicefrac{1}{3}\, S^\pref\text,
	\]
	$a\in f(R')$, and $a\in f(S^\pref)$. Thus, population-consistency implies that $a\in f(R)$.

\begin{lemma}\label{lem:cancel}
Every PSCF that satisfies population-consistency and composition-consistency returns the uniform lottery over all Condorcet winners for all profiles in a neighborhood of the uniform profile $\uni(\mathcal{L}(A))$. 
\end{lemma}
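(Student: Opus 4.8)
The plan is to derive \lemref{lem:cancel} quickly from \lemref{lem:pccond} together with the convexity that is built into the definition of a PSCF. \lemref{lem:pccond} asserts that $f$ is Condorcet-consistent throughout some neighborhood $N$ of $\uni(\mathcal{L}(A))$; by the definition of Condorcet-consistency this means that, for every $R\in N$, the degenerate lottery on each weak Condorcet winner of $R$ lies in $f(R)$. The neighborhood $N=B_{\epsilon_n}(\uni(\mathcal{L}(A)))$ furnished by the proof of \lemref{lem:pccond} has radius $\epsilon_n$ depending only on $n=|A|$, so a single neighborhood serves all alternatives simultaneously, regardless of which one happens to be a Condorcet winner.

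First I would fix this $N$ and take an arbitrary $R\in N$ whose set of weak Condorcet winners $W=\{x\in A\colon M_R(x,y)\ge 0 \text{ for all } y\in A\}$ is non-empty; if $W=\emptyset$ there is nothing to prove, since the uniform lottery over the Condorcet winners is then undefined. For each $a\in W$, \lemref{lem:pccond} directly yields that the degenerate lottery on $a$ belongs to $f(R)$.

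Since $f$ is a PSCF, the set $f(R)$ is convex and hence closed under convex combinations of its elements. Taking the equal-weight combination of the degenerate lotteries on the alternatives in $W$ gives $\uni(W)=\nicefrac{1}{|W|}\sum_{a\in W} a\in f(R)$, which is precisely the uniform lottery over all Condorcet winners of $R$. As $R\in N$ was arbitrary, this establishes the lemma.

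There is no genuine obstacle here beyond two bookkeeping points. The first is that the neighborhood of \lemref{lem:pccond} can be chosen uniformly in the Condorcet winner, which is immediate because $\epsilon_n$ is defined purely in terms of $n$. The second is to keep in mind that the conclusion is the \emph{membership} $\uni(W)\in f(R)$ and not the equality $f(R)=\{\uni(W)\}$: at this stage nothing rules out additional lotteries in $f(R)$, and in fact the later route to $f\subseteq\ml$ exploits precisely the coexistence of $\uni(W)$ with other lotteries near strict-Condorcet profiles in order to contradict decisiveness.
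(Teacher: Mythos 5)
Your proof is correct and follows essentially the same route as the paper: invoke \lemref{lem:pccond} to place each degenerate lottery on a Condorcet winner into $f(R)$ for all $R$ in the $\epsilon_n$-neighborhood of $\uni(\mathcal{L}(A))$, then use convexity of $f(R)$ to obtain the uniform lottery over the Condorcet winners (the paper states the slightly stronger conclusion $\Delta(A')\subseteq f(R)$, which your convexity step yields as well). Your two bookkeeping remarks, on the uniformity of the neighborhood and on membership versus equality, are both accurate.
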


\begin{proof}
	Let $f$ be a PSCF that satisfies population-consistency and composition-consistency and $A\in\fone(U)$ with $|A| = n$.
	Moreover, let $R\in\mathcal{R}|_A$ such that $\|R-\uni(\mathcal{L}(A))\|\le\epsilon_n$ and $A'\subseteq A$ be the set of Condorcet winners in $R$.
We actually prove a stronger statement, namely that $\Delta(A')\subseteq f(R)$.
	Every alternative in $A'$ is a Condorcet winner in $R$.
	Thus, it follows from \lemref{lem:pccond} that $x\in f(R)$ for every $x\in A'$.
	Since $f(R)$ is convex, $\Delta(A')\subseteq f(R)$ follows.
\end{proof}

For the remainder of the proof, we need to define two classes of profiles that are based on regularity conditions imposed on the corresponding majority margins.
Let $A\in \mathcal{F}(U)$ and $A'\subseteq A$. A profile $R\in\mathcal{R}|_A$ is
\begin{align*}
	\text{\emph{regular on }} A' &\text{ if } \sum_{y\in A'} M_R(x,y) = 0 \text{ for all } x\in A'\text{, and}\\
	\text{\emph{strongly regular on }} A' &\text{ if } M_R(x,y) = 0 \text{ for all } x,y\in A'\text.
\end{align*}
By $\mathcal{R}|_A^{A'}$ and $\mathcal{S}|_A^{A'}$ we denote the set of all profiles in $\mathcal{R}|_A$ that are regular or strongly regular on $A'$, respectively.

In the following five lemmas we show that, for every $A'\subseteq A$, every profile on $A$ can be affinely decomposed into profiles of three different types: profiles that are strongly regular on $A'$, certain regular profiles, and profiles that admit a strict Condorcet winner in $A'$.\footnote{Similar decompositions of majority margin matrices have been explored by \citet{Zwic91a} and \citet{Saar95a}.} 
Lemmas \ref{lem:str-regular}, \ref{lem:skew-dim}, and \ref{lem:hamilton} do not make any reference to population-consistency, composition-consistency, or maximal lotteries and may be of independent interest.

First, we determine the dimension of the space of all profiles that are strongly regular on $A'$.

\begin{lemma}\label{lem:str-regular}
	Let $A'\subseteq A\in\fone(U)$. Then, $\dim(\mathcal{S}|_A^{A'})=|A|!-\binom{|A'|}{2} - 1$.
\end{lemma}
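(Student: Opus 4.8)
The plan is to regard strong regularity as a system of homogeneous linear equations cutting the profile simplex and then to count how many of these equations are independent. Write $n=|A|$ and $N=|A|!$, and identify $\mathcal{R}|_A$ with the standard simplex $\Delta(\mathcal{L}(A))\subseteq\mathbb{Q}^{N}$, whose affine hull is the hyperplane $H=\{R\colon \sum_{\succcurlyeq}R(\succcurlyeq)=1\}$ of dimension $N-1$. For each unordered pair $\{x,y\}\subseteq A'$ define the linear functional $\ell_{xy}(R)=M_R(x,y)=\sum_{\succcurlyeq}\mathrm{sgn}_{\succcurlyeq}(x,y)\,R(\succcurlyeq)$, where $\mathrm{sgn}_{\succcurlyeq}(x,y)=1$ if $x\succ y$ and $-1$ otherwise. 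By skew-symmetry each unordered pair contributes a single equation, so $\mathcal{S}|_A^{A'}$ is exactly $\mathcal{R}|_A$ intersected with the $\binom{|A'|}{2}$ hyperplanes $\{\ell_{xy}=0\}$. The whole statement then reduces to two claims: that the intersection is full-dimensional inside the resulting affine subspace, and that the defining functionals contribute exactly $\binom{|A'|}{2}$ independent constraints.

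For full-dimensionality I would use the uniform profile $u=\uni(\mathcal{L}(A))$. Swapping any two alternatives is a sign-reversing bijection of $\mathcal{L}(A)$ that fixes $u$, so every margin of $u$ vanishes and $u\in\mathcal{S}|_A^{A'}$; moreover $u$ lies in the relative interior of $\mathcal{R}|_A$, since it assigns the strictly positive weight $1/N$ to every order. Intersecting a small simplex-neighbourhood of $u$ with the linear subspace $L=\bigcap_{\{x,y\}\subseteq A'}\{\ell_{xy}=0\}$ yields a relatively open neighbourhood of $u$ inside $H\cap L$ that is contained in $\mathcal{S}|_A^{A'}$. Hence $\aff(\mathcal{S}|_A^{A'})=H\cap L$ and $\dim(\mathcal{S}|_A^{A'})=\dim(H\cap L)$, so it remains to compute the latter.

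The crux is therefore to show that the $\binom{|A'|}{2}$ equations $\ell_{xy}=0$, together with the normalisation $\mathbf{1}^{T}R=1$, form an independent system. Here I would first observe that each coefficient vector $c^{xy}=(\mathrm{sgn}_{\succcurlyeq}(x,y))_{\succcurlyeq}$ has coefficient sum $0$, because exactly half the orders on $A$ rank $x$ above $y$; since the normalisation functional $\mathbf{1}^{T}$ has coefficient sum $N\neq 0$, any linear independence of the $\ell_{xy}$ on $\mathbb{Q}^{N}$ automatically implies independence of the augmented family $\{\ell_{xy}\}\cup\{\mathbf{1}^{T}\}$. Thus everything comes down to proving that the $\ell_{xy}$ are linearly independent on $\mathbb{Q}^{N}$, which I would establish by the McGarvey construction \citep{McGa53a}: for a fixed pair $\{x,y\}$ the two linear orders $x\succ y\succ z_1\succ\cdots\succ z_{n-2}$ and $z_{n-2}\succ\cdots\succ z_1\succ x\succ y$ reinforce the comparison of $x$ with $y$ while every other pairwise comparison cancels, so their sum $w^{xy}\in\mathbb{Q}^{N}$ satisfies $\ell_{x'y'}(w^{xy})=2\,\delta_{\{x,y\},\{x',y'\}}$ over all pairs.

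Evaluating the $\ell_{x'y'}$ on the vectors $w^{xy}$ for pairs in $A'$ thus gives the matrix $2I$ of size $\binom{|A'|}{2}$, which is nonsingular, so the $\ell_{xy}$ are independent and the augmented system has rank $\binom{|A'|}{2}+1$. Consequently $\dim(H\cap L)=N-\bigl(\binom{|A'|}{2}+1\bigr)=|A|!-\binom{|A'|}{2}-1$, and combined with the relative-interior argument this yields the claimed value of $\dim(\mathcal{S}|_A^{A'})$; the boundary cases $|A'|\le 1$ give $\binom{|A'|}{2}=0$ and reduce to $\dim(\mathcal{R}|_A)=N-1$, consistent with the formula. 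The only point requiring care is the bookkeeping that separates the margin constraints from the normalisation constraint, and this is precisely what the coefficient-sum-zero observation handles cleanly; once the McGarvey gadget is in hand, the remainder is routine linear algebra.
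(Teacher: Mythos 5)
Your proof is correct, and its skeleton is the same as the paper's: both regard $\mathcal{S}|_A^{A'}$ as the profile simplex cut by the $\binom{|A'|}{2}$ homogeneous margin constraints $M_R(x,y)=0$. The difference is in how far each goes. The paper derives only the lower bound $\dim(\mathcal{S}|_A^{A'})\ge |A|!-\binom{|A'|}{2}-1$ from the constraint count and explicitly declines to prove equality, noting that the inequality suffices for the later lemmas; you prove the equality in full. The two ingredients you supply are precisely what the paper omits: first, the observation that the uniform profile $\uni(\mathcal{L}(A))$ is a relative-interior point of $\mathcal{R}|_A$ lying in every constraint hyperplane, which both justifies the lower bound (the cut is full-dimensional in the affine solution space, something the bare ``$k$ constraints drop the dimension by at most $k$'' slogan does not give for an intersection with a simplex) and identifies $\aff(\mathcal{S}|_A^{A'})$ with $H\cap L$; second, the McGarvey-style dual family $w^{xy}$, on which the margin functionals evaluate to a nonsingular diagonal matrix, combined with the coefficient-sum-zero remark that decouples them from the normalisation $\mathbf{1}^T R=1$. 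Your gadget is sound --- the two orders reinforce only the $\{x,y\}$ comparison and cancel on every other pair, and linear independence of functionals may be tested on arbitrary vectors of $\mathbb{Q}^{|A|!}$ rather than on genuine profiles --- so you obtain the exact dimension where the paper settles for a one-sided bound.
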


\begin{proof}
We will characterize $\mathcal{S}|_A^{A'}$ using a set of linear constraints.
By definition, $\mathcal{S}|_A^{A'} = \{R\in\mathcal{R}|_A\colon M_R(x,y) = 0 \text{ for all } x,y\in A'\}$.
	Recall that $M_R(x,y) = \sum_{\pref\colon x\pref y} R({\pref}) - \sum_{\pref\colon y\pref x} R({\pref})$.
	Since $M_R(x,x) = 0$ for all $R\in\mathcal{R}|_A$ and $x\in A$, $\mathcal{S}|_A^{A'}$ can be characterized by ${\phantom{'}|A|'\choose {2}}$ homogeneous linear constraints in the $(|A|!-1)$-dimensional space $\mathcal{R}|_A$, which implies that $\dim(\mathcal{S}|_A^{A'}) \geq |A|!-{\phantom{'}|A|'\choose {2}} - 1$. Equality holds but is not required for the following arguments. We therefore omit the proof.
\end{proof}

Second, we determine the dimension of the space of all skew-symmetric $n\times n$ matrices that correspond to regular profiles and vanish outside their upper left $n'\times n'$ sub-matrix, \ie
	\[
	\mathcal{M}_{n'} = \big\{M\in \mathbb{Q}^{n\times n}\colon M = -M^T\text{, } \sum_{j=1}^{n} M(i,j) = 0 \text{ if } i\in[n] \text{, and } M(i,j) = 0 \text{ if } \{i,j\}\not\subseteq [n']\big\} \text{.}
	\]

In \lemref{lem:hamilton}, we then proceed to show that every matrix of this type can be decomposed into matrices induced by a subset of regular profiles for which we know that every PSCF has to return the uniform lottery over the first $n'$ alternatives (possibly among other lotteries).

\begin{lemma}\label{lem:skew-dim}
$\dim(\mathcal{M}_{n'})={\phantom{'}n'\choose 2} - (n'-1)$.
\end{lemma}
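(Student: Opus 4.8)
The plan is to reduce $\mathcal{M}_{n'}$ to a space of skew-symmetric $n'\times n'$ matrices subject to row-sum constraints, and then count how many of those constraints are independent. First I would observe that every $M\in\mathcal{M}_{n'}$ is completely determined by its upper-left $n'\times n'$ block, since $M(i,j)=0$ whenever $\{i,j\}\not\subseteq[n']$. For a row index $i>n'$ the regularity condition $\sum_{j=1}^n M(i,j)=0$ is vacuous, because the entire row vanishes; for $i\in[n']$ it reduces to $\sum_{j=1}^{n'}M(i,j)=0$. Hence $\mathcal{M}_{n'}$ is linearly isomorphic to the space of skew-symmetric $n'\times n'$ matrices all of whose row sums vanish.

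The ambient space of skew-symmetric $n'\times n'$ matrices has dimension $\binom{n'}{2}$, with the strictly upper-triangular entries $M(i,j)$, $i<j$, serving as free coordinates. It therefore remains to determine the rank $r$ of the family of linear functionals $\ell_i(M)=\sum_{j=1}^{n'}M(i,j)$, $i\in[n']$, on this space, since $\dim(\mathcal{M}_{n'})=\binom{n'}{2}-r$.

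The key step is to express an arbitrary combination of these functionals in the free coordinates. Using $M(i,j)=-M(j,i)$ and $M(i,i)=0$, one obtains $\sum_{i=1}^{n'}c_i\,\ell_i(M)=\sum_{i<j}(c_i-c_j)\,M(i,j)$. Taking all $c_i=1$ shows $\sum_i\ell_i\equiv 0$, so there is at least one dependency and $r\le n'-1$. Conversely, if $\sum_i c_i\ell_i$ vanishes identically on skew-symmetric matrices, then, since the entries $M(i,j)$ with $i<j$ may be chosen freely and independently, each coefficient $c_i-c_j$ must be zero, forcing all $c_i$ equal. Thus the space of dependencies is exactly one-dimensional and $r=n'-1$.

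Combining these observations yields $\dim(\mathcal{M}_{n'})=\binom{n'}{2}-(n'-1)$. The only part requiring care is the second half of the rank argument, namely verifying that the single obvious dependency (the row sums of a skew-symmetric matrix always total zero) is the \emph{only} one; the rewriting into upper-triangular coordinates is precisely what makes this transparent, and it is the step I would expect to be the main obstacle to phrase rigorously.
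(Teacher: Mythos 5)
Your proof is correct, and it in fact establishes slightly more than the paper's own argument does. The paper works in the ambient $n^2$-dimensional space of all $n\times n$ matrices, lists $(n^2-(n')^2)+\bigl({n'\choose 2}+n'\bigr)+(n'-1)$ homogeneous constraints cutting out $\mathcal{M}_{n'}$ (using, as you do, that the $n'$-th row-sum condition is implied by skew-symmetry and the first $n'-1$ row sums), and concludes only the lower bound $\dim(\mathcal{M}_{n'})\ge{n'\choose 2}-(n'-1)$; the reverse inequality, i.e.\ the linear independence of these constraints, is asserted but explicitly left unproved since only the lower bound is used later (in \lemref{lem:regular}, to extract enough linearly independent cycle matrices). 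You instead first pass to the isomorphic space of skew-symmetric $n'\times n'$ matrices with vanishing row sums, take the strictly upper-triangular entries as free coordinates, and compute the exact rank of the row-sum functionals via the identity $\sum_i c_i\ell_i(M)=\sum_{i<j}(c_i-c_j)M(i,j)$, showing the only dependency is $\sum_i\ell_i\equiv 0$. This pins down the dimension exactly and thereby supplies the omitted half of the paper's proof; the reduction to upper-triangular coordinates is precisely the device that makes the independence of the constraints transparent. Both arguments are elementary linear algebra and are compatible; yours is simply the completed version.
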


\begin{proof}
	First note that the space of all $n\times n$ matrices has dimension $n^2$.
	We show that $\mathcal{M}_{n'}$ can be characterized by a set of $(n^2 - (n')^2) + ({\phantom{'}n'\choose 2} + n') + (n'-1)$ homogeneous linear constraints.
	Let $M\in\mathbb{Q}^{n\times n}$ and observe that $(n^2 - (n')^2)$ constraints are needed to ensure that $M$ vanishes outside of $[n']\times[n']$, ${\phantom{'}n'\choose 2} + n'$ constraints are needed to ensure skew-symmetry of $M$ on $[n']\times[n']$, \ie $M(i,j) = -M(j,i)$ for all $i,j\in[n']$, $j\ge i$, and $(n'-1)$ constraints are needed to ensure that the first $n'$ rows (and hence also the columns) of $M'$ sum up to $0$, \ie $\sum_{j=1}^{n} M(i,j) = 0$ for all $i\in [n'-1]$. 
	It follows from skew-symmetry and the latter $n'-1$ constraints that the $n'$th row of $M$ sums up to $0$, since
	\[\sum_{j = 1}^{n} M(n',j) = \sum_{i,j = 1}^{n} M(i,j) - \sum_{i = 1}^{n'-1}\sum_{j = 1}^{n} M(i,j) = 0\text.\] 
	The last $n - n'$ rows of $M$ trivially sum up to $0$.
	Hence, $\dim(\mathcal{M}_{n'})\ge (n')^2 -({\phantom{'}n'\choose 2} + n') - (n'-1) = {\phantom{'}n'\choose 2} - (n'-1)$. 
	Equality holds but is not required for the following arguments. We therefore omit the proof.
\end{proof}

Let $\Pi_{B}^\circ([n])$ be the set of all permutations that are cyclic on $B$ and coincide with the identity permutation outside of $B$.\footnote{A permutation $\pi\in\Pi([n])$ is cyclic on $B$ if $\pi^{|B|}$ is the smallest positive power of $\pi$ that is the identity function on $B$.} We denote by $\mathcal{M}_{n'}^\circ$ the space of all matrices in $\mathcal{M}_{n'}$ induced by a permutation in $\Pi_{B}^\circ([n])$ for some $B\subseteq[n']$, \ie
\[\mathcal{M}_{n'}^\circ = \big\{M\in \mathcal{M}_{n'}\colon M(i,j) = 
	\begin{cases}
		1 &\text{if } j = \pi(i)\text{, } i\in B\text,\\
		-1 &\text{if } i = \pi(j)\text{, } j\in B\text,\\
		0 &\text{otherwise,}
	\end{cases}
	\quad\text{for some } \pi\in\Pi_B^\circ([n]),\text{ } B\subseteq[n']\big\}\text,\]
	with the convention that $\mathcal{M}^\circ_{2} = \{0\}$. We now show that the linear hull of $\mathcal{M}_{n'}^\circ$ is $\mathcal{M}_{n'}$.

\begin{lemma}\label{lem:hamilton}
$\lin(\mathcal{M}_{n'}^\circ)=\mathcal{M}_{n'}$.
\end{lemma}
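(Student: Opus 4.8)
The plan is to establish the two inclusions separately. The inclusion $\lin(\mathcal{M}_{n'}^\circ)\subseteq\mathcal{M}_{n'}$ is immediate: each generator of $\mathcal{M}_{n'}^\circ$ is by construction skew-symmetric, supported on $[n']\times[n']$, and has vanishing row sums (a vertex lying on a single directed cycle carries exactly one incoming and one outgoing unit), so it lies in the linear subspace $\mathcal{M}_{n'}$, and hence so does its whole linear span. The real content is the reverse inclusion $\mathcal{M}_{n'}\subseteq\lin(\mathcal{M}_{n'}^\circ)$.

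The guiding picture is graph-theoretic: a skew-symmetric matrix $M$ on $[n']$ is a flow on the complete directed graph on $[n']$, and the constraint $\sum_j M(i,j)=0$ says this flow is a \emph{circulation} (zero divergence at every vertex). The matrices in $\mathcal{M}_{n'}^\circ$ are precisely the unit circulations around single directed cycles, so the claim is the familiar fact that the circulation space of $K_{n'}$ is spanned by its directed cycles. I would avoid invoking this abstractly, and I would also avoid a dimension count, since \lemref{lem:skew-dim} only supplies the inequality $\dim\mathcal{M}_{n'}\ge {n'\choose 2}-(n'-1)$ (equality there was omitted). Instead I would give an explicit decomposition using only the triangles $(1\,i\,j)$ with $2\le i<j\le n'$, of which there are exactly ${n'\choose 2}-(n'-1)$.

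Concretely, for $2\le i<j\le n'$ let $C^{ij}\in\mathcal{M}_{n'}^\circ$ be the cycle matrix of the $3$-cycle $(1\,i\,j)$, whose only nonzero entries (up to skew-symmetry) are the unit flows $1\to i$, $i\to j$, and $j\to 1$. Given an arbitrary $M\in\mathcal{M}_{n'}$, I would set $M' = M - \sum_{2\le i<j\le n'} M(i,j)\,C^{ij}$ and show $M'=0$. The key point is that, among all these triangles, only $C^{ij}$ contributes to the coordinate $M(i,j)$ with $i,j\ge 2$, so by construction $M'(i,j)=0$ for all $2\le i<j\le n'$; the support of $M'$ is therefore confined to the pairs $\{1,j\}$. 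But $M'$ still lies in $\mathcal{M}_{n'}$, so its row sums vanish, and the row-sum condition at each vertex $j\ge 2$ collapses to $M'(j,1)=0$, forcing the remaining entries to vanish too. Hence $M=\sum_{2\le i<j\le n'}M(i,j)\,C^{ij}\in\lin(\mathcal{M}_{n'}^\circ)$, which together with the first inclusion gives equality. The degenerate case $n'=2$, where the sum is empty and $\mathcal{M}_2=\{0\}=\mathcal{M}_2^\circ$ by convention, is handled automatically.

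The main obstacle I anticipate is purely bookkeeping: fixing a consistent orientation convention for the off-diagonal entries and verifying that subtracting the $C^{ij}$ cancels exactly the ``interior'' coordinates $M(i,j)$ with $i,j\ge 2$ and nothing else. Once the orientations are pinned down, the zero-row-sum condition does all the remaining work, so the only delicate step is confirming that no triangle other than $C^{ij}$ touches the $\{i,j\}$ coordinate.
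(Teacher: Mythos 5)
Your proof is correct, but it takes a genuinely different route from the paper's. The paper establishes the nontrivial inclusion $\mathcal{M}_{n'}\subseteq\lin(\mathcal{M}_{n'}^\circ)$ by a norm-reduction induction: after scaling so that $\kappa M$ has integer entries, it locates a directed cycle of length at least $3$ inside $[n']$ along edges of positive weight, subtracts the minimal edge weight times the corresponding cycle matrix, and recurses on the resulting circulation of strictly smaller norm. You instead exhibit the explicit decomposition $M=\sum_{2\le i<j\le n'}M(i,j)\,C^{ij}$ over the triangles through the distinguished vertex $1$, justified by the observation that only $C^{ij}$ touches the coordinate $(i,j)$ for $i,j\ge 2$, after which the zero-row-sum condition annihilates the residual matrix supported on the pairs $\{1,j\}$. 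Both arguments are sound. The paper's induction produces a conic decomposition into cycles carrying positive weight in $M$ (a flow-decomposition statement), though nothing downstream uses that positivity; your version is shorter, avoids the integrality and termination bookkeeping, and as a by-product shows that the $\binom{n'-1}{2}=\binom{n'}{2}-(n'-1)$ triangles form a basis of $\mathcal{M}_{n'}$, which pins down the dimension equality that \lemref{lem:skew-dim} asserts only as an inequality and which \lemref{lem:regular} needs only in the form of a linearly independent spanning subset of $\mathcal{M}_{n'}^\circ$ of that cardinality---so your lemma slots into the remainder of the argument without change.
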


\begin{proof}
The idea underlying the proof is as follows: every matrix $M\in\mathcal{M}_{n'}$ corresponds to a weighted directed graph with vertex set $[n]$ where the weight of the edge from $i$ to $j$ is $M(i,j)$. If $M\neq 0$, there exists a cycle along edges with positive weight of length at least~3 in the subgraph induced by $[n']$. We obtain a matrix $M'$ with smaller norm than $M$ by subtracting the matrix in $\mathcal{M}_{n'}^\circ$ from $M$ that corresponds to the cycle identified before.
	
	Let $M\in\mathcal{M}_{n'}$ and $\kappa\in\mathbb{Q}_{>0}$ such that $\kappa M\in\mathbb{Z}^{n\times n}$.
We show, by induction over the nonnegative integer $\kappa\|M\|$, where $\|M\|=\sum_{i,j}|M(i,j)|$, that $M = \sum_{i = 1}^{\ell}\lambda_i M^i$ for some $\lambda_i\in\mathbb{Q}$ and $M^i\in\mathcal{M}_{n'}^\circ$ for all $i\in [\ell]$ for some $\ell\in\mathbb{N}$.
If $\kappa\|M\| = 0$ then $M = 0$. Hence, the induction hypothesis is trivial.
	
	If $\kappa\|M\|\neq 0$, \ie $M\neq 0$, we can find $B\subseteq[n']$ with $|B|\ge 3$ and $\pi\in\Pi_B^\circ([n])$ such that $M(i,j) > 0$ if $\pi(i) = j$ and $i\in B$. 
	Note that $\pi$ defines a cycle of length at least~3 in the graph that corresponds to $M$.
	We define $M^1\in\mathcal{M}_{n'}^\circ$ by letting
	\[M^1(i,j) = 
	\begin{cases}
		1 &\text{if } \pi(i) = j\text{ and } i\in B\text,\\
		-1 &\text{if } \pi(j) = i\text{ and } j\in B\text{, and}\\
		0 &\text{otherwise.}
	\end{cases}
	\]
	Let $\lambda = \min\{M(i,j)\colon i,j\in[n] \text{ and } M^1(i,j) > 0\}$ and $M' = M - \lambda M^1$.
	By construction, we have that $M'(i,j) = M(i,j) - \lambda$ if $\pi(i) = j$ and $i\in B$, $M'(i,j) = M(i,j) + \lambda$ if $\pi(j) = i$ and $j\in B$, and $M'(i,j) = M(i,j)$ otherwise.
	Note that $M(i,j)\ge\lambda$ if $\pi(i) = j$ and $i\in B$ and $M(i,j)\le-\lambda$ if $\pi(j) = i$ and $j\in B$ by definition of $\lambda$.
	Recall that $\kappa M\in\mathbb{Z}^{n\times n}$ and, in particular, $\kappa\lambda\in\mathbb{N}$.
	Hence, $\kappa M'\in\mathbb{Z}^{n\times n}$.
	Moreover, $\kappa\|M'\| = \kappa\|M\| - 2\kappa \lambda |B|\le \kappa\|M\| - 1$.
	From the induction hypothesis we know that $M' = \sum_{i = 2}^{\ell} \lambda_i M^i$ with $\lambda_i\in\mathbb{Q}$ and $M^i\in\mathcal{M}_{n'}^\circ$ for all $i\in[\ell]$ for some $\ell\in\mathbb{N}$.
	By construction of $M'$, we have that $M = \sum_{i = 1}^{\ell} \lambda_i M^i$ with $\lambda_1 = \lambda$.
\end{proof}

\lemref{lem:regular} leverages Lemmas~\ref{lem:cancel},~\ref{lem:str-regular},~\ref{lem:skew-dim}, and~\ref{lem:hamilton} to show two statements. First, it identifies the dimension of the space of all profiles that are regular on~$A'\subseteq A$. Second, it proves that there is a full-dimensional subset of the space of all profiles that are regular on~$A'$ for which every PSCF that satisfies population-consistency and composition-consistency returns the uniform lottery over $A'$.

\begin{lemma}\label{lem:regular}
Let $f$ be a PSCF that satisfies population-consistency and composition-consistency and $A'\subseteq A\in\fone(U)$. Then, there is $\mathcal{X}\subseteq\mathcal{R}|_A^{A'}$ of dimension $|A|!-|A'|$ such that $\uni(A')\in f(R)$ for every $R\in \mathcal{X}$.
\end{lemma}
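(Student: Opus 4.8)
The plan is to realize $\mathcal{X}$ as the convex hull of $|A|!-|A'|+1$ carefully chosen affinely independent profiles, all lying in the set $\mathcal{Y}:=\{R\in\mathcal{R}|_A\colon \uni(A')\in f(R)\}$. Write $n=|A|$ and $n'=|A'|$. The first observation is that $\mathcal{Y}$ is \emph{convex}: if $\uni(A')\in f(R')\cap f(R'')$, then population-consistency gives $\uni(A')\in f(R)$ for every convex combination $R$ of $R'$ and $R''$. Since regularity on $A'$ is a system of homogeneous linear constraints, $\mathcal{R}|_A^{A'}$ is an affine subspace and convex combinations of regular profiles are regular, so $\mathcal{Y}\cap\mathcal{R}|_A^{A'}$ is convex; it therefore suffices to exhibit enough affinely independent regular profiles in $\mathcal{Y}$ and take their convex hull. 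I would also record that $\dim(\mathcal{R}|_A^{A'})=|A|!-|A'|$, so that the asserted dimension is the full dimension of the ambient regular space: the linear map $R\mapsto M_R|_{A'\times A'}$ has kernel $\mathcal{S}|_A^{A'}$, which by \lemref{lem:str-regular} has codimension $\binom{n'}{2}$, so the map is onto the skew-symmetric matrices on $A'$; the regular profiles are the preimage of the zero-row-sum subspace, of codimension $n'-1$ there, giving $\dim(\mathcal{R}|_A^{A'})=(|A|!-1)-(n'-1)=|A|!-|A'|$.

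The heart of the argument is the dimension decomposition $\dim(\mathcal{R}|_A^{A'})=\dim(\mathcal{S}|_A^{A'})+\dim(\mathcal{M}_{n'})$, which by \lemref{lem:str-regular} and \lemref{lem:skew-dim} reads $|A|!-|A'|=(|A|!-\binom{n'}{2}-1)+(\binom{n'}{2}-(n'-1))$. I would supply affinely independent points of $\mathcal{Y}$ in two families matching these two summands. For the first family, take strongly regular profiles near the uniform profile $\uni(\mathcal{L}(A))$ in which every alternative of $A'$ is a weak Condorcet winner (small perturbations of a profile in which $A'$ sits on top as a regular tie, so that all the relevant margins can be made strictly positive). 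For such profiles \lemref{lem:cancel} yields $\Delta(A')\subseteq f(R)$, hence $\uni(A')\in f(R)$; their margins vanish on $A'$ and fill out all of $\mathcal{S}|_A^{A'}$, contributing $|A|!-\binom{n'}{2}$ affinely independent points.

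The second, more delicate, family supplies the remaining $\binom{n'}{2}-(n'-1)$ directions, and is where \lemref{lem:hamilton} enters. For each cyclic matrix in $\mathcal{M}_{n'}^\circ$, say the one induced by a cyclic permutation $\pi$ of a set $B\subseteq A'$ with $|B|=m\ge 3$, I would build a profile $R_\pi$ in which $B$ is a component, $R_\pi|_B$ is a $\pi$-invariant profile whose margin is a small positive multiple of the cyclic matrix, and the reduced profile on $A^-=(A\setminus B)\cup\{b\}$ is strongly regular and near uniform with every alternative of $A''=(A'\setminus B)\cup\{b\}$ a weak Condorcet winner. Neutrality (\lemref{lem:neutrality}) applied to the $\pi$-invariant $R_\pi|_B$, together with convexity, forces the unique $\pi$-invariant lottery $\uni(B)$ into $f(R_\pi|_B)$, while \lemref{lem:cancel} on the reduced agenda gives $\Delta(A'')\subseteq f(R_\pi|_{A^-})$, which contains the lottery $p^{\ast}$ placing mass $m/n'$ on $b$ and $1/n'$ on each element of $A'\setminus B$. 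Composition-consistency then yields $p^{\ast}\times_b\uni(B)\in f(R_\pi)$, and a direct computation shows $p^{\ast}\times_b\uni(B)=\uni(A')$, so $R_\pi\in\mathcal{Y}$. By construction the majority margin of $R_\pi$ restricted to $A'$ equals a positive multiple of the chosen cyclic matrix, so these profiles realize margins that, by \lemref{lem:hamilton}, span $\mathcal{M}_{n'}$.

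Finally I would take $\mathcal{X}$ to be the convex hull of the two families. Affine independence across the families is immediate because the first-family margins vanish on $A'$ whereas the second-family margins are linearly independent elements of $\mathcal{M}_{n'}$, so the linear map $R\mapsto M_R|_{A'\times A'}$ separates the added directions from the fiber $\mathcal{S}|_A^{A'}$; together they give $(|A|!-\binom{n'}{2})+(\binom{n'}{2}-(n'-1))=|A|!-|A'|+1$ affinely independent profiles, all regular on $A'$ and all in $\mathcal{Y}$. Their convex hull lies in $\mathcal{Y}\cap\mathcal{R}|_A^{A'}$ and has dimension $|A|!-|A'|$, as required. I expect the main obstacle to be the third step: certifying $\uni(A')\in f(R_\pi)$ for the cyclic profiles, where the alternatives of $A'$ are \emph{not} all Condorcet winners and \lemref{lem:cancel} is unavailable directly. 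The device that resolves this is to hide the cycle inside a component, so that neutrality pins down the uniform lottery \emph{within} the component while \lemref{lem:cancel} governs the reduced profile, and composition-consistency recombines the two pieces into exactly $\uni(A')$. A secondary point requiring care is to keep each reduced profile inside the neighborhood of the uniform profile on which \lemref{lem:cancel} and \lemref{lem:pccond} apply, which is arranged by scaling all perturbations sufficiently small.
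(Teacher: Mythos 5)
Your proposal is correct and follows essentially the same route as the paper's proof: the same split of $\mathcal{R}|_A^{A'}$ into a strongly regular family (dimension from \lemref{lem:str-regular}, handled via \lemref{lem:cancel} near the uniform profile) and a family of cyclic profiles whose margins span $\mathcal{M}_{n'}$ (\lemref{lem:skew-dim} and \lemref{lem:hamilton}), with affine independence of the union certified through the linear map $R\mapsto M_R|_{A'\times A'}$ exactly as in the paper. The only deviation is in certifying $\uni(A')\in f(R_B^\pi)$ for the cyclic profiles: the paper obtains $\uni(B)$ by symmetry and each alternative outside $B$ individually through a chain of composition-consistency and binary-agenda (\lemref{lem:binary2}) arguments before averaging, whereas you collapse $B$ to a single alternative, apply \lemref{lem:cancel} to a near-uniform strongly regular reduced profile to extract the lottery placing mass $\nicefrac{|B|}{|A'|}$ on the collapsed alternative, and recombine by composition-consistency---an equivalent and arguably cleaner recombination of the same two ingredients.
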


\begin{proof}
	To simplify notation, we assume without loss of generality that $A = [n]$ and $A'=[n']$. For $M\in \mathbb{Q}^{n\times n}$ and $\pi\in\Pi(A)$, let $\pi(M)$ be the matrix that results from $M$ by permuting the rows and columns of $M$ according to $\pi$, \ie $(\pi(M))(i,j) = M(\pi(i),\pi(j))$.
	
	From \lemref{lem:str-regular} we know that we can find a set $\mathcal{S} =\{S^1,\dots,S^{n!-{\phantom{'}n'\choose 2}}\}\subseteq\mathcal{R}|_{[n]}^{[n']}$ of affinely independent profiles.
	Since $\mathcal{S}$ can be chosen such that every $S\in\mathcal{S}$ is close to $\uni(\mathcal{L}([n]))$, it follows from \lemref{lem:cancel} that $\uni([n'])\in f(S)$ for all $S\in\mathcal{S}$.
	Therefore, it suffices to find a set of profiles $\mathcal{T}=\{R^1,\dots,R^{{\phantom{'}n'\choose 2}-(n'-1)}\}\subseteq\mathcal{R}|_{[n]}^{[n']}$ such that $\uni([n'])\in f(R)$ for every $R\in\mathcal{T}$ and $\mathcal{S}\cup\mathcal{T}$ is a set of affinely independent profiles.
	If $n' = 2$, we can choose $\mathcal{T} = \emptyset$.
	For $n' \ge 3$ we construct a suitable set of profiles as follows.

	For every $B\subseteq [n']$ with $|B| = k\ge 3$ and $\pi\in\Pi_B^\circ([n])$, let $[n]\setminus B = \{a_1,\dots, a_{n-k}\}$ and $R_B^\pi$ be defined as follows: $R_B^\pi(\pref) = \nicefrac{1}{(2k)}$ if
	\begin{align*}
		&\pi^0(i)\pref\pi^1(i)\pref\pi^2(i)\pref\dots\pref\pi^{k-1}(i)\pref a_1\pref\dots\pref a_{n-k} \quad\text{or}\\
		&a_{n-k}\pref\dots\pref a_1\pref \pi^{k-1}(i)\pref\dots\pref\pi^2(i)\pref\pi^0(i)\pref\pi^1(i)\text,
	\end{align*}
	for some $i\in B$.
	Note that $R_B^\pi$ is regular on $[n']$, since
	\[
	M_{R_B^\pi}(i,j) = 
	\begin{cases}
	\lambda \quad&\text{if } \pi(i) = j \text{ and } i\in B\text,\\
	-\lambda \quad&\text{if } \pi(j) = i \text{ and } j\in B\text{, and}\\
	0 \quad&\text{otherwise,}
	\end{cases}
	\]
	where $\lambda = \nicefrac{1}{k} > 0$.
	Hence, for every $M\in\mathcal{M}_{n'}^\circ$, there are $B\subseteq[n']$ and $\pi\in\Pi_B^\circ([n])$ such that $\lambda M = M_{R_B^\pi}$.
	Notice that $B$ and $[n]\setminus B$ are components in $R_B^\pi$.
	For $j\in B$, we have by construction that $R_B^\pi(j,a_1) = 0$.
	Hence, it follows from \lemref{lem:binary2} that $j\in f(R_B^\pi|_{\{j,a_1\}})$ and $a_1\in f(R_B^\pi|_{\{j,a_1\}})$.
	Moreover, neutrality, convexity, and composition-consistency imply that $\uni(B)\in f(R_B^\pi)$ by the symmetry of $R_B^\pi$ with respect to $B$.
	Now let $a_i\in \{a_1,\dots,a_{n-k}\}$.
	Observe that $\{a_1,\dots,a_{i-1}\}$ is a component in $R_B^\pi$ and $R_B^\pi(a_1,a_{i}) = 0$.
	Thus, composition-consistency and \lemref{lem:binary2} imply that 
	\[
	a_i\in f(R_B^\pi|_{\{a_1,a_i\}})\times_{a_1} f(R_B^\pi|_{\{a_1,\dots,a_{i-1}\}}) = f(R_B^\pi|_{\{1,\dots,i\}})\text.
	\]
	Furthermore, $\{a_{i+1},\dots,a_{n-k}\}$ is a component in $R_B^\pi$ and $R_B^\pi(a_{i},a_{n-k}) = 0$.
	As before, we get
	\[
	a_i\in f(R_B^\pi|_{\{a_i,a_{n-k}\}})\times_{a_{n-k}} f(R_B^\pi|_{\{a_{i+1},\dots,a_{n-k}\}}) = f(R_B^\pi|_{\{i,\dots,n-k\}})\text.
	\]
	Also $\{a_i,\dots,a_{n-k}\}$ is a component in $R_B^\pi$ and thus,
	\[
	a_i\in f(R_B^\pi|_{\{a_1,\dots,a_i\}})\times_{a_i} f(R_B^\pi|_{\{a_i,\dots,a_{n-k}\}}) = f(R_B^\pi|_{[n]\setminus B})\text.
	\]
	As $B$ is a component in $R_B^\pi$ and $R_B^\pi(j,a_1) = 0$, we get 
	\[
	a_i\in f(R_B^\pi|_{\{j,a_1,\dots,a_i\}})\times_{j} f(R_B^\pi|_{B}) = f(R_B^\pi)\text.
	\]
	Then, it follows from convexity of $f(R_B^\pi)$ that
	\[
	\uni([n']) = \frac{k}{n'}\uni(B) + \frac{1}{n'}\sum_{a_i\in [n']\setminus B} a_i\in f(R_B^\pi)\text,
	\]
	since $\uni(B)\in f(R_B^\pi)$ and $a_i\in f(R_B^\pi)$ for every $i\in [n-k]$.

	We know from \lemref{lem:hamilton} that $\lin(\mathcal{M}_{n'}^\circ) = \mathcal{M}_{n'}$ and, by \lemref{lem:skew-dim}, $\dim(\mathcal{M}_{n'}^\circ) \ge {\phantom{'}n'\choose 2} - (n'-1)$.
	Thus, we can find a basis $\{M^1,\dots, M^{{\phantom{'}n' \choose 2} - (n'-1)}\}$ of $\mathcal M_{n'}^\circ$ and a set of corresponding profiles
	\[
	\mathcal{T} = \{R^1,\dots, R^{{\phantom{'}n' \choose 2} - (n'-1)} \}\subseteq\{R_B^\pi\colon B\subseteq [n']\text{ and } \pi\in\Pi_B^\circ\}\text.
	\]
	We claim that $\mathcal{S}\cup \mathcal{T}$ is a set of affinely independent profiles. 
	Let $S^1,\dots,S^l\in \mathcal{S}$ and $R^1,\dots,R^m\in\mathcal{T}$ be pairwise disjoint. 
	Assume that $\sum_i \lambda_i S^i + \sum_j \mu_j R^j = 0$ for some $\lambda^i,\mu^j\in\mathbb{Q}$ such that $\sum_i\lambda_i + \sum_j\mu_j = 0$. 
	This implies that $\sum_j \mu_j M^j = 0$, which in turn implies $\mu_j=0$ for all $j\in[m]$, since the $M^j$'s are linearly independent. 
	Hence, $\sum_i\lambda_i S^i = 0$ and $\sum_i\lambda_i = 0$, which implies that $\lambda_i = 0$ for all $i\in[l]$, since $S^1,\dots,S^{n!-{\phantom{'}n'\choose 2}}$ are affinely independent. 
	Thus, $\mathcal{S}\cup\mathcal{T}$ is a set of affinely independent profiles and $\dim(\mathcal{S}\cup\mathcal{T}) = |\mathcal{S}\cup\mathcal{T}|-1 = n!-n'$. 
	The above stated fact that $\uni([n'])\in f(R_B^\pi)$ for every $B\subseteq [n']$ and $\pi\in\Pi_B^\circ([n'])$ finishes the proof.
\end{proof}

We now consider PSCFs that may return a lottery that is \emph{not} maximal. The following lemma shows that for every such PSCF there is a set of profiles with a strict Condorcet winner for which it returns the uniform lottery over a fixed subset of alternatives if we additionally require population-consistency and composition-consistency. Furthermore, this set of profiles has only one regular profile in its linear hull. Later this statement is leveraged to show that every 
population-consistent and composition-consistent PSCF returns a subset of maximal lotteries.

\begin{lemma}\label{lem:cond-dim}
Let $f$ be a PSCF that satisfies population-consistency and composition-consistency. 
If $f\not\subseteq \ml$, there are $A'\subseteq A\in\fone(U)$ with $|A'|\ge 2$ such that, for every $\epsilon>0$, there is $\mathcal{Y}\subseteq \mathcal{R}|_A$ of dimension $|A'|-1$ whose profiles are within distance $\epsilon$ of $\uni(\mathcal L(A))$ and have a strict Condorcet winner in $A'$, such that $\uni(A')\in f(R)$ for every $R\in \mathcal{Y}$ and $\dim(\lin(\mathcal{Y})\cap\lin(\mathcal{R}|_A^{A'})) = 1$.
\end{lemma}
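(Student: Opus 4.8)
The plan is to convert the hypothesis $f\not\subseteq\ml$ into a concrete family of profiles that carry a strict Condorcet winner and are nevertheless assigned $\uni(A')$ by $f$ --- behaviour that $\ml$ never exhibits and which is sustained precisely by the strict failure of maximality. First I would fix a witness $p\in f(R)\setminus\ml[R]$ on some agenda; since the minimum of $q\mapsto p^{T}M_{R}q$ over the simplex is negative and attained at a vertex, there is a pure alternative $a^{*}$ with $p^{T}M_{R}\,a^{*}<0$, i.e.\ $a^{*}$ strictly beats $p$. Using \lemref{lem:neutrality}, \lemref{lem:binary2}, and composition-consistency I would clone every alternative of $\supp(p)$ into a cyclic (regular) component, so that $f$ returns the uniform lottery over the clones, and reduce to an agenda $A=A'\cup\{a^{*}\}$ in which $A'$ (of size $n'=|A'|$) is the set of clones. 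The outcome is a profile, again denoted $R$, with $\uni(A')\in f(R)$ in which $a^{*}$ beats $\uni(A')$ on average, i.e.\ $\sum_{x\in A'}M_{R}(a^{*},x)>0$.

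Next I would symmetrise over the permutations $\sigma$ of $A'$ (fixing $a^{*}$): by \lemref{lem:neutrality} each $\sigma(R)$ also returns $\uni(A')$, so population-consistency yields $\uni(A')\in f(R^{0})$ for $R^{0}=\frac{1}{n'!}\sum_{\sigma}\sigma(R)$. By construction $R^{0}$ is strongly regular on $A'$ while $a^{*}$ beats every element of $A'$ by the same positive margin; as $A=A'\cup\{a^{*}\}$, this makes $a^{*}$ a strict Condorcet winner in $R^{0}$. For each relevant $\sigma$ I would then form $R_{\sigma}=(1-t)R^{0}+t\,\sigma(R)$ for a small rational $t>0$: population-consistency preserves $\uni(A')\in f(R_{\sigma})$, a small $t$ keeps the strictly positive $a^{*}$-margins and hence the Condorcet winner, and $R_{\sigma}$ is non-regular on $A'$ whenever $\sigma(R)$ is.

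To pin down the dimension I would track the vector of internal row sums $r=\big(\sum_{y\in A'}M_{R}(x,y)\big)_{x\in A'}$: regularity on $A'$ is exactly the condition $r=0$, and the row-sum vector of $\sigma(R)$ is the permuted vector $\sigma(r)$. Assuming $r\neq0$, it lies in the $(n'-1)$-dimensional hyperplane $\{v\colon\sum_{x}v_{x}=0\}$, on which $S_{n'}$ acts irreducibly (the standard representation), so the orbit $\{\sigma(r)\}$ spans the whole hyperplane and I can pick $\sigma_{1},\dots,\sigma_{n'-1}$ with $\sigma_{1}(r),\dots,\sigma_{n'-1}(r)$ linearly independent. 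Setting $\mathcal{Y}=\{R^{0},R_{\sigma_{1}},\dots,R_{\sigma_{n'-1}}\}$, the directions $R_{\sigma_{i}}-R^{0}$ carry the linearly independent internal row sums $t\,\sigma_{i}(r)$, whereas $R^{0}$ has vanishing internal row sums; hence $\mathcal{Y}$ is affinely independent of dimension $n'-1$, and since $\lin(\mathcal{R}|_{A}^{A'})$ is the space of vectors with zero internal row sums, an element of $\lin(\mathcal{Y})$ is regular on $A'$ exactly when it is a multiple of $R^{0}$. This gives $\dim(\lin(\mathcal{Y})\cap\lin(\mathcal{R}|_{A}^{A'}))=1$, as required, with $\uni(A')\in f(R)$ for every $R\in\mathcal{Y}$.

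The step I expect to be most delicate is guaranteeing the transverse (non-regular) directions, i.e.\ that the witness can be taken non-regular on $A'$ (so $r\neq0$) while $a^{*}$ stays a strict Condorcet winner. When the beating vertex lies in $\supp(p)$ this non-regularity is automatic, because then $(M_{R}p)_{a^{*}}>0$ with $a^{*}\in\supp(p)$ forces a nonzero internal row sum; but when $\uni(A')$ is beaten only from outside and equalises internally (for example a cyclic support), every profile reachable by neutrality and population-consistency remains regular, and one must either choose $A'$ adaptively or inject internal asymmetry by a further cloning argument. This is exactly where $f\not\subseteq\ml$ must be used in full strength, since for $\ml$ a strict Condorcet winner would force the degenerate lottery and forbid $\uni(A')$ entirely; calibrating the perturbation $t$ small enough to retain simultaneously the Condorcet winner and the membership $\uni(A')\in f(R_{\sigma})$ is the crux of the argument.
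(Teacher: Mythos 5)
Your overall strategy---blow up the support of a non-maximal witness lottery into components so that $f$ returns a uniform lottery, symmetrize over permutations of $A'$, and separate regular from non-regular directions via the internal row sums of the majority matrix---is essentially the paper's, and your dimension count (the $S_{n'}$-orbit of a nonzero row-sum vector $r$ spans the hyperplane $\sum_x v_x=0$ by irreducibility of the standard representation, so $\lin(\mathcal{Y})$ meets $\lin(\mathcal{R}|_A^{A'})$ only in the line through the symmetrized profile) is a clean substitute for the paper's explicit computation with the permuted ``star'' matrices $M^y$. But the step you yourself flag as delicate is a genuine gap, and it is not a corner case: it occurs whenever the beating alternative $a^*$ lies outside $\supp(p)$, \eg when $p$ is the uniform lottery over a majority three-cycle each of whose elements is beaten by a fourth alternative. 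Then the majority matrix restricted to your $A'$ (the clones of $\supp(p)$) is regular on $A'$, so $r=0$, every profile obtainable from the witness by neutrality and by population-consistent mixing with other such profiles stays regular on $A'$, and your $\mathcal{Y}$ satisfies $\lin(\mathcal{Y})\subseteq\lin(\mathcal{R}|_A^{A'})$, making the required intersection dimension $1$ unattainable. Proposing to ``choose $A'$ adaptively or inject internal asymmetry by a further cloning argument'' does not close this: the difficulty is not producing non-regular profiles but producing non-regular profiles on which $f$ \emph{provably} returns $\uni(A')$, and cloning, neutrality, and population-consistency applied to the witness alone never leave the regular subspace.

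The paper resolves exactly this point by enlarging $A'$ to include the beating alternative $x$ and by forcing $x$ itself into the chosen set: it mixes the symmetrized witness with the uniform profile and with an auxiliary profile so that the resulting profile $R^x$ lies in the neighborhood where \lemref{lem:pccond} applies, has $x$ as a strict Condorcet winner, and still yields the uniform lottery over the clones by population-consistency; convexity of $f(R^x)$ then gives $\uni(A')\in f(R^x)$ for the enlarged $A'$, and the majority matrix of $R^x$ on $A'$ is by construction non-regular (supported on the $x$-th row and column). The permuted copies of $R^x$ then play the role of your $R_{\sigma_i}$. So your plan needs \lemref{lem:pccond}, the mixing toward the uniform profile, and convex-valuedness in an essential way before the row-sum argument can be run---none of which appear in your outline. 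A smaller point: the ``reduction to the agenda $A'\cup\{a^*\}$'' is not licensed by any axiom, since the discarded alternatives need not form a component; it is harmless only because it is unnecessary, as the lemma permits $A'\subsetneq A$ and your row-sum computation ignores the remaining alternatives anyway.
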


\begin{proof}
	If $f\not\subseteq \ml$, there are $A\in\fone(U)$, $R\in\mathcal{R}|_A$, and $p\in f(R)$ such that $p\not\in \ml(R)$. 
	Since $p$ is not a maximal lottery, by definition, there is $q\in \Delta(A)$ such that $q^TM_Rp > 0$.
	Linearity of matrix multiplication implies that there is $x\in A$ such that $(M_Rp)_x > 0$, where $(M_Rp)_x$ is the entry of $M_Rp$ corresponding to $x$.
	We first use composition-consistency to ``blow up'' alternatives such that the resulting lottery is uniform.
	Let $\gcd$ be the greatest common divisor of $\{p_y\colon y\in A\}$, \ie $\gcd = \max\{s\in\mathbb{Q}\colon \nicefrac{p_y}{s}\in\mathbb{N} \text{ for all } y\in A\}$.
	For every $y\in A$, let $A_y\in\fone(U)$ such that $|A_y| = \max\{1,\nicefrac{p_y}{\kappa}\}$, $A_y\cap A = \{y\}$, and all $A_y$ are pairwise disjoint.
	Moreover, let $A^u = \bigcup_{y\in A} A_y$.
	Now, choose $R^u\in\mathcal{R}|_{A^u}$ such that $R^u|_A = R$, $A_y$ is a component in $R^u$ for every $y\in A$, and $R^u|_{A_y} = \uni(\mathcal{L}(A_y))$ for every $y\in A_y$.	
	Hence, $\uni(A_y)\in f(R^u|_{A_y})$ for all $y\in A$ as $f$ is neutral and $f(R^u|_{A_y})$ is convex.
	To simplify notation, let $A^p = \bigcup_{y\in\supp(p)} A_y$.
	By composition-consistency,
it follows that $p' = \uni(A^p)\in f(R^u)$.
	Observe that
	\[
	(M_{R^u}p')_x = \sum_{y\in \supp(p)\setminus\{x\}} \frac{|A_y|}{|A^p|}M_{R^u}(x,y) = \sum_{y\in A\setminus\{x\}} p_yM_R(x,y) > 0\text.
	\] 
	
	We now construct a profile $R'\in\mathcal{R}|_{A^u}$ such that $x$ is a strict Condorcet winner in $R'$ and $\uni(A^p) \in f(R')$. 
	To this end, let $R'\in\mathcal{R}|_{A^u}$ be the uniform mixture of all profiles that arise from $R^u$ by permuting all alternatives in $A^p\setminus\{x\}$, \ie
	\[
	R' = \frac{1}{|A^p\setminus\{x\}|!} \sum_{\substack{\pi\in\Pi(A^u)\colon\pi(y) = y\\\text{for all } y\in A^u\setminus A^p \cup \{x\}}} \pi(R^u)\text.
	\] 
	Then, $M_{R'}(x, y) = M_{R'}(x, z) > 0$ for all $y,z\in A^p\setminus\{x\}$. 
	Neutrality and population-consistency imply that $p'\in f(R')$.
	
	Let $R^\uni = \uni(\mathcal{L}(A^u))$ and define, for $\lambda\in [0,1]$, 
	\[
	R^\lambda = \lambda R' + (1-\lambda) R^\uni\text.
	\]
	It follows from \lemref{lem:pccond} that $y\in f(R^\uni)$ for all $y\in A^u$. Convexity of $f(R^\uni)$ implies that $f(R^\uni) = \Delta(A^u)$. Hence, by population-consistency, $p'\in f(R^\lambda)$ for all $\lambda\in [0,1]$.
	
	Now, let $S\in\mathcal{R}|_{A^u}$ such that $M_S(y,z) = 0$ for all $y,z\in A^p\cup\{x\}$ and $M_S(y,z) = 1$ for all $y\in A^p\cup\{x\}$, $z\in A^u\setminus (A^p\cup\{x\})$. 
	For $\lambda\in [0,1]$, let
	\[S^\lambda = \lambda S + (1-\lambda) R^\uni\text.\]
	Note that every $y\in A^p\cup\{x\}$ is a Condorcet winner in $S^\lambda$.
	It follows from population-consistency and \lemref{lem:pccond} that, for small $\lambda > 0$, $y\in f(S^\lambda)$ for all $y\in A^p\cup\{x\}$ and, by convexity, $\Delta(A^p\cup\{x\})\subseteq f(S^\lambda)$.
	In particular, $p'\in f(S^\lambda)$ for small $\lambda > 0$.
	
	Finally, let
	\[R^x = \nicefrac{1}{3}\, R^\lambda + \nicefrac{2}{3}\, S^\lambda\text,\]
	for some small $\lambda > 0$. Population-consistency implies that $p'\in f(R^x)$.
	Moreover, $M_{R^x}(x,y) > 0$ for all $y\in A^u\setminus\{x\}$, \ie $x$ is a strict Condorcet winner in $R^x$, and hence, it follows from \lemref{lem:pccond} that $x\in f(R^x)$.
	
	If $p_x > 0$ then, by construction, $p' = \uni(A^p\cup\{x\})\in f(R^x)$.
	If $p_x = 0$ then $p' = \uni(A^p)\in f(R^x)$.
	In this case it follows from convexity of $f(R^x)$ that $\uni(A^p\cup\{x\}) = \nicefrac{1}{(|A^p|+1)}\, x + \nicefrac{|A^p|}{(|A^p|+1)}\, \uni(A^p)\in f(R^x)$.
	
	Hence, in either case, we get a profile $R^x$ such that $\uni(A^p\cup\{x\})\in f(R^x)$ and $M^x = M_{R^x}$ restricted to $A^p\cup\{x\}$ takes the form
	\begin{align*}
		\begin{array}{cccc}
				M^x &=& \lambda& \cdot
		\end{array}
		&
		\left(
		\begin{array}{c c c c c c c}
		0 & \dots & 0 & -1 & 0 & \dots & 0 \\
		\vdots & \ddots & \vdots & \vdots & \vdots & \ddots & \vdots \\
		0 & \dots & 0 & -1 & 0 & \dots & 0 \\
		1 & \dots & 1 & 0 & 1 & \dots & 1 \\
		0 & \dots & 0 & -1 & 0 & \dots & 0 \\
		\vdots & \ddots & \vdots & \vdots & \vdots & \ddots & \vdots \\
		0 & \dots & 0 &-1 & 0 & \dots & 0 \\
		\end{array}
		\right)
	\end{align*}
	for some $\lambda > 0$ where all entries except the $x$th row and column are zero.
	Let $n' = |A^p\cup\{x\}|$.
	For every $y\in A^p\cup\{x\}$, let $M^y\in\mathbb{Q}^{n'\times n'}$ such that $M^y(y,z) = - M^y(z,y) = \lambda$ for all $z\neq y$ and $0$ otherwise.
	Let $\pi^y\in\Pi(A^u)$ such that $\pi^y(x)=y$ and $\pi^y(z) = z$ for all $z\in A^u\setminus (A^p\cup\{x\})$ and $R^y = \pi^y(R^x)$. 
	Then, for every $y\in A^p\cup\{x\}$, the restriction of $M_{R^y}$ to $A^p\cup\{x\}$ is $M^y$ and, by neutrality, $\uni(A^p\cup\{x\})\in f(R^y)$.
	
	Let $\mathcal{Y} = \{R^y\colon y\in A^p\cup\{x\}\}$. 
The profiles in $\mathcal Y$ are affinely independent. Indeed, if $\sum_y\alpha_yR^y=0$ and $\sum_y\alpha_y=0$, then the row sums of $\sum_y\alpha_yM^y=0$ imply that $n'\alpha_y=0$ for every $y$. Thus, $\dim(\mathcal{Y}) = |A^p\cup\{x\}|-1$.
	Now we determine  $\dim(\lin(\mathcal{Y})\cap\lin(\mathcal{R}|_{A^u}^{A^p\cup\{x\}}))$.
	To this end, let $Z=\sum_{z\in A^p\cup\{x\}}\alpha_z R^z\in\lin(\mathcal{Y})\cap\lin(\mathcal{R}|_{A^u}^{A^p\cup\{x\}})$.
	By linearity of majority margins, the restriction of the majority margin matrix of $Z$ to $A^p\cup\{x\}$ is $\sum_{z\in A^p\cup\{x\}}\alpha_z M^z$.
	Since $Z$ lies in the linear hull of profiles that are regular on $A^p\cup\{x\}$, the row sums of this matrix vanish. Thus,
	\[
	(n'-1)\alpha_y=\sum_{z\in A^p\cup\{x\}\setminus\{y\}}\alpha_z
	\]
	for all $y\in A^p\cup\{x\}$. Hence, all coefficients $\alpha_y$ are equal.
	Conversely, $\nicefrac{1}{n'}\sum_{y\in A^p\cup\{x\}}R^y$ is a regular profile, so
	\[
	\lin(\mathcal{Y})\cap\lin(\mathcal{R}|_{A^u}^{A^p\cup\{x\}}) = \left\{\alpha \sum_{y\in A^p\cup\{x\}}R^y\colon \alpha\in\mathbb{Q}\right\}.
	\]
	The parameter $\lambda$ in the construction of $R^x$ can be chosen arbitrarily small. Since every $R^y$ is obtained from $R^x$ by renaming alternatives, $\mathcal{Y}$ can therefore be chosen in any neighborhood of $\uni(\mathcal L(A^u))$, and every $R^y$ has $y$ as a strict Condorcet winner.
	Finally, $|A^p\cup\{x\}|\ge2$: this is immediate if $x\notin\supp(p)$, while $p=e_x$ would contradict $(M_Rp)_x>0$ if $x\in\supp(p)$ and $A^p\cup\{x\}$ were a singleton.
	Choosing $A=A^u$ and $A'=A^p\cup\{x\}$ proves the claim.
\end{proof}

	In \lemref{lem:fml}, we finally show that every PSCF that satisfies population-consistency and composition-consistency has to yield maximal lotteries. The structure of the proof is as follows. We assume for contradiction that a PSCF satisfies population-consistency and composition-consistency, but returns a lottery that is not maximal. Then we can find a full-dimensional convex set of profiles for which the uniform lottery over a fixed subset of at least two alternatives is returned. This set contains a profile in its interior that is close to the uniform profile and has a strict Condorcet winner. For every profile in an $\epsilon$-ball around this profile, the function has to return the uniform lottery over a non-singleton subset as well as the lottery with probability~1 on the Condorcet winner, which contradicts decisiveness.

\begin{lemma}\label{lem:fml}
Every PSCF $f$ that satisfies population-consistency and composition-consistency has to yield maximal lotteries, \ie $f\subseteq \ml$.
\end{lemma}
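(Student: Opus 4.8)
The plan is to derive a contradiction with decisiveness by assuming $f\not\subseteq\ml$ and combining the two preceding lemmas into a full-dimensional family of profiles on which $f$ is forced to be multi-valued. Concretely, suppose there is a profile on which $f$ returns a non-maximal lottery. \lemref{lem:cond-dim} then supplies (after the ``blow-up'' to an enlarged universe $A^u$) a subset $A'\subseteq A$ with $|A'|\ge 2$ and a set $\mathcal{Y}\subseteq\mathcal{R}|_A$ of affine dimension $|A'|-1$ such that $\uni(A')\in f(R)$ for all $R\in\mathcal{Y}$ and $\dim(\lin(\mathcal{Y})\cap\lin(\mathcal{R}|_A^{A'}))=1$. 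Applying \lemref{lem:regular} to the same $A'\subseteq A$ yields a set $\mathcal{X}\subseteq\mathcal{R}|_A^{A'}$ of affine dimension $|A|!-|A'|$, which I would choose close to the uniform profile and on which $f$ again returns $\uni(A')$.

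The key step is a dimension count showing that $\mathcal{Z}=\conv(\mathcal{X}\cup\mathcal{Y})$ has full dimension. Since the regularity constraints defining $\mathcal{R}|_A^{A'}$ amount to $|A'|-1$ independent homogeneous equations (the $|A'|$ row-sum conditions sum to zero by skew-symmetry), $\lin(\mathcal{R}|_A^{A'})$ has dimension $|A|!-|A'|+1$; because $\mathcal{X}$ has affine dimension $|A|!-|A'|$ and lies in $\mathcal{R}|_A^{A'}$, its linear hull coincides with $\lin(\mathcal{R}|_A^{A'})$. With $\dim\lin(\mathcal{Y})=|A'|$ and $\dim(\lin(\mathcal{X})\cap\lin(\mathcal{Y}))=1$, the dimension formula gives $\dim\lin(\mathcal{X}\cup\mathcal{Y})=(|A|!-|A'|+1)+|A'|-1=|A|!$. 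As every profile lies in the hyperplane of vectors summing to $1$, it follows that $\mathcal{Z}$ has full dimension $|A|!-1$ and hence nonempty interior, and by population-consistency $\uni(A')\in f(R)$ for every $R\in\mathcal{Z}$.

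It remains to exhibit an open ball of profiles on which $f$ is multi-valued. The essential observation is that $|f(R)|>1$ requires two \emph{distinct} lotteries in $f(R)$, so returning the non-degenerate lottery $\uni(A')$ alone does not suffice; I would pair it with the degenerate lottery on a strict Condorcet winner. To this end, take a vertex $R^y\in\mathcal{Y}$ with strict Condorcet winner $y$ and push it slightly toward the centroid of $\mathcal{Z}$ to obtain an interior point $\tilde R$ that still has $y$ as a strict Condorcet winner and, by arranging $\mathcal{X}$ and the parameter $\lambda$ from \lemref{lem:cond-dim} close enough to $\uni(\mathcal{L}(A))$, lies within distance $\epsilon_n$ of it. A sufficiently small ball $B$ around $\tilde R$ then lies in the interior of $\mathcal{Z}$, stays inside the Condorcet-consistency neighborhood of \lemref{lem:pccond}, and has $y$ as a strict Condorcet winner throughout. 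Hence for every $R\in B$ both $\uni(A')\in f(R)$ (from $\mathcal{Z}$) and $y\in f(R)$ (from \lemref{lem:pccond}), and since $|A'|\ge 2$ these two lotteries differ, so $|f(R)|>1$ on the open set $B$, contradicting decisiveness.

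The main obstacle is the dimension count establishing full-dimensionality of $\mathcal{Z}$, which is precisely what forces the preparatory \lemref{lem:regular} and \lemref{lem:cond-dim} to be engineered with matching dimensions (and why the transverse intersection in \lemref{lem:cond-dim} is pinned down to be exactly one-dimensional). The second delicate point is the simultaneous requirement that the witnessing profile be interior to $\mathcal{Z}$, close to the uniform profile, and strictly Condorcet, so that population-consistency, \lemref{lem:pccond}, and the openness needed for the decisiveness contradiction all apply on the same ball.
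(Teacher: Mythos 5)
Your proposal is correct and follows essentially the same route as the paper's proof: combine \lemref{lem:cond-dim} and \lemref{lem:regular}, use the transversality of $\lin(\mathcal{Y})$ with $\lin(\mathcal{R}|_A^{A'})$ to show $\conv(\mathcal{X}\cup\mathcal{Y})$ has full dimension $|A|!-1$, propagate $\uni(A')$ by population-consistency, and contradict decisiveness on a ball around an interior profile with a strict Condorcet winner near the uniform profile. Your version is in fact slightly more explicit than the paper's about where the strict Condorcet profile comes from and why the two returned lotteries are distinct.
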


\begin{proof}
	Let $f$ be a PSCF that satisfies population-consistency and composition-consistency.
	For agendas of size $2$, the statement follows from \lemref{lem:binary2}. Assume for contradiction that $f\not\subseteq \ml$. 
	By \lemref{lem:cond-dim}, there are $A'\subseteq A\in\fone(U)$ with $|A'|\ge2$ such that the conclusion of that lemma holds for every $\epsilon>0$.
	By \lemref{lem:pccond}, there is a neighborhood $N$ of $\uni(\mathcal L(A))$ in which $f$ is Condorcet-consistent. Choose $\delta>0$ such that $B_\delta(\uni(\mathcal L(A)))\cap\mathcal R|_A\subseteq N$, and choose the set $\mathcal Y$ from \lemref{lem:cond-dim} within $B_{\delta/2}(\uni(\mathcal L(A)))$.
	By \lemref{lem:regular}, there is $\mathcal{X}\subseteq\mathcal{R}|_A^{A'}$ of dimension $|A|!-|A'|$ such that $\uni(A')\in f(R)$ for every $R\in\mathcal{X}$.
	Since every profile has coordinate sum~$1$, neither $\aff(\mathcal X)$ nor $\aff(\mathcal Y)$ contains~$0$. Hence, $\lin(\mathcal{X})$ has dimension $|A|! - |A'| + 1$ and $\lin(\mathcal{Y})$ has dimension $|A'|$.
	Moreover, $\lin(\mathcal X)\subseteq\lin(\mathcal R|_A^{A'})$, so \lemref{lem:cond-dim} implies that $\lin(\mathcal X)\cap\lin(\mathcal Y)$ has dimension at most~$1$. The dimension formula gives the reverse inequality because the two linear hulls have dimensions summing to $|A|!+1$ in $\mathbb Q^{|A|!}$. Their intersection therefore has dimension~$1$, and $\lin(\mathcal{X}\cup\mathcal{Y})$ has dimension $|A|!$.
	This implies that $\mathcal{X}\cup\mathcal{Y}$ has dimension $|A|!-1$.

	Furthermore, it follows from population-consistency that $\uni(A')\in f(R)$ for every $R\in\conv(\mathcal{X}\cup \mathcal{Y})$. 
	Let $C=\conv(\mathcal{X}\cup\mathcal{Y})$. Since $C$ is full-dimensional in $\mathcal R|_A$, choose $R^I\in\int_{\mathcal R|_A}(C)$.
	Let $R^0\in\mathcal Y$ and let $x\in A'$ be a strict Condorcet winner in $R^0$. For every sufficiently small positive rational $\eta$, the profile $R^x=(1-\eta)R^0+\eta R^I$ lies in $\int_{\mathcal R|_A}(C)$ and $B_\delta(\uni(\mathcal L(A)))$, and still has $x$ as a strict Condorcet winner.
Hence, there is $\epsilon>0$ such that, for every $R'\in B_\epsilon(R^x)\cap\mathcal{R}|_A$, $R'\in C\cap B_\delta(\uni(\mathcal L(A)))\subseteq C\cap N$ and $x$ is a strict Condorcet winner in $R'$. 
	Then, we get that $x\in f(R')$ and $\uni(A')\in f(R')$ for every $R'\in B_\epsilon(R^x)\cap\mathcal{R}|_A$.
	Since $|A'|\ge2$, these lotteries are distinct.
	Thus, $\{R'\in\mathcal{R}|_A\colon |f(R',A)| = 1\}$ is not dense in $\mathcal{R}|_A$ at $R^x$.
	This contradicts decisiveness of $f$.
\end{proof}

\subsection{$\ml\subseteq f$}

In this section we show that every PSCF $f$ that satisfies population-consistency and composition-consistency has to yield \emph{all} maximal lotteries. To this end, we first prove an auxiliary lemma.
It was shown by \citet{McGa53a} that every complete and anti-symmetric relation is the majority relation of some profile with a bounded number of voters. We show an analogous statement for skew-symmetric matrices and fractional preference profiles. 

\begin{lemma}\label{lem:mcgarvey}
	Let $M\in\mathbb{Q}^{n\times n}$ be a skew-symmetric matrix. Then, there are $R\in\mathcal{R}|_{[n]}$ and $c\in \mathbb{Q}_{>0}$ such that $c M = M_R$.
	Furthermore, if there is $\pi\in\Pi([n])$ such that $M(i,j) = M(\pi(i),\pi(j))$ for all $i,j\in[n]$, then $R = \pi(R)$.
\end{lemma}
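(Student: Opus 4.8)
The plan is to adapt McGarvey's classical construction---which realizes any tournament as a majority relation---to skew-symmetric \emph{rational} matrices, and then to enforce the invariance property by a symmetrization step. The basic tool is the standard cancelling gadget. Fix $i\neq j$ and let $z_1,\dots,z_{n-2}$ enumerate $[n]\setminus\{i,j\}$. Consider the two linear orders $i\succ j\succ z_1\succ\dots\succ z_{n-2}$ and $z_{n-2}\succ\dots\succ z_1\succ i\succ j$. A pair of voters holding these two orders agrees on $i\succ j$ but reverses every pairwise comparison not equal to $\{i,j\}$; hence such a pair contributes exactly $+2$ to the $(i,j)$ entry, $-2$ to the $(j,i)$ entry, and $0$ to all remaining entries of the margin matrix.

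Next I would assemble these gadgets to hit a prescribed multiple of $M$. Choose $D\in\mathbb{N}$ with $N:=DM\in\mathbb{Z}^{n\times n}$; note that $N$ is again skew-symmetric. If $N=0$, take $R=\uni(\mathcal{L}([n]))$ and $c=1$, which already gives $M_R=0=cM$ and is invariant under every permutation. Otherwise, for each $i<j$ include $|N(i,j)|$ copies of the gadget favoring $i$ over $j$ when $N(i,j)>0$, and favoring $j$ over $i$ when $N(i,j)<0$. Since each gadget affects only its own pair, the resulting (multiset) profile on $T=\sum_{i\neq j}|N(i,j)|$ voters has majority-margin counts exactly $2N$. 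Normalizing the voter weights to a distribution $R_0\in\mathcal{R}|_{[n]}$ scales all margins by $1/T$, so $M_{R_0}=(2/T)N=(2D/T)M$; setting $c=2D/T>0$ yields $cM=M_{R_0}$, which proves the first claim.

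For the invariance claim, suppose $\pi\in\Pi([n])$ satisfies $M(i,j)=M(\pi(i),\pi(j))$ for all $i,j$, i.e. $\pi(M)=M$ in the notation introduced in the proof of \lemref{lem:regular}. Let $m$ be the order of $\pi$ and symmetrize by setting $R=\tfrac1m\sum_{k=0}^{m-1}\pi^k(R_0)$. Applying $\pi$ merely permutes the summands cyclically, so $\pi(R)=R$. Moreover, permuting the alternatives of a profile permutes its margin matrix accordingly, and since $M$ is fixed by $\pi$ and hence by every power of $\pi$, each summand satisfies $M_{\pi^k(R_0)}=cM$; therefore $M_R=\tfrac1m\sum_{k=0}^{m-1}cM=cM$. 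Thus this $R$ is the profile demanded by the lemma.

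The gadget bookkeeping is routine and mirrors McGarvey. The only genuinely new point---and the step to handle with care---is reconciling the \emph{exact} margin identity $cM=M_R$ with the symmetry requirement $R=\pi(R)$; the key observation is that averaging over the cyclic group $\langle\pi\rangle$ preserves the margin matrix precisely because $M$ is already $\pi$-invariant, so symmetrization costs nothing. The sole degenerate case, $M=0$, where the gadget collection is empty and cannot be normalized, is absorbed by using the uniform profile, which is simultaneously margin-free and invariant under every permutation.
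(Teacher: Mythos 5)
Your proof is correct and follows essentially the same strategy as the paper: decompose $M$ into single-pair contributions and realize each by a small profile whose majority margins vanish outside that pair, then rescale. The only substantive difference is the gadget---you use McGarvey's two-voter gadget and therefore need the explicit averaging over the cyclic group generated by $\pi$ to secure $R=\pi(R)$, whereas the paper takes the uniform mixture of all $(n-1)!$ orders in which $\{i,j\}$ is a component with $i$ above $j$, which makes the construction canonical and yields the invariance for free.
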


\begin{proof}
	For all $i,j\in [n]$ with $i\neq j$, let $R^{ij}\in\mathcal{R}|_{[n]}$ be the profile such that $R^{ij}(\pref) = \nicefrac{1}{(n-1)!}$ if $i\succ j$ and $\{i,j\}$ is a component in $R^{ij}$ and $R^{ij}(\pref) = 0$ otherwise.
	By construction, we have that $M_{R^{ij}}(i,j) = 1$ and $M_{R^{ij}}(x,y) = 0$ for all unordered pairs $\{x,y\}\neq\{i,j\}$.
	Let $c = 1/\sum_{i,j\colon M(i,j)>0} M(i,j)$ and $R = c\sum_{i,j\colon M(i,j)>0} M(i,j) R^{ij}$.
	Then, we have that $M_R = cM$.
	The second part of the lemma follows from the symmetry of the construction.
\end{proof}

For profiles which admit a unique maximal lottery, it follows from \lemref{lem:fml} that $f = \ml$. It turns out that every maximal lottery that is a vertex of the set of maximal lotteries in one of the remaining profiles is the limit point of a sequence of maximal lotteries of a sequence of profiles with a unique maximal lottery converging to the original profile. The proof of \lemref{lem:mlf} heavily relies on the continuity of $f$.

\begin{lemma}\label{lem:mlf}
	Let $f$ be a PSCF that satisfies population-consistency and composition-consistency. Then, $\ml\subseteq f$.
\end{lemma}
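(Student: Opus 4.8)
The plan is to exploit convexity of $f(R)$ together with the inclusion $f\subseteq\ml$ from \lemref{lem:fml} and the (upper hemi-) continuity of $f$. Fix $A\in\fone(U)$ and $R\in\mathcal{R}|_A$. Since $\ml(R)$ is the solution set of finitely many rational linear inequalities inside $\Delta(A)$, it is a rational polytope and hence the convex hull of its finitely many vertices. As $f(R)$ is convex, it therefore suffices to prove that every vertex $p$ of $\ml(R)$ lies in $f(R)$; the inclusion $\ml(R)\subseteq f(R)$ then follows by taking convex combinations. So fix a vertex $p$ of $\ml(R)$.

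First I would approximate $R$ by profiles that have a \emph{unique} maximal lottery which is forced to converge to $p$. Because $p$ is a vertex, it is an exposed point of $\ml(R)$, so there is a rational $c\in\mathbb{Q}^A$ with $c^T p> c^T q$ for all $q\in\ml(R)\setminus\{p\}$. Set $D=c\mathbf{1}^T-\mathbf{1}c^T$, which is skew-symmetric, and consider the perturbed margin matrices $M_R+\delta D$ for small $\delta>0$. Using \lemref{lem:mcgarvey} together with the linearity of $R\mapsto M_R$, I would realise these perturbations as $M_{R^k}$ for a sequence $R^k\to R$ in $\mathcal{R}|_A$: mixing $R$ with a McGarvey profile pointing in direction $D$ moves $M_{R^k}$ off $M_R$ along $D$ while keeping $R^k$ inside the simplex and converging to $R$. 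Since the set of profiles with a unique maximal lottery is open and dense (\secref{sec:mlsat}), I can choose each $R^k$ inside it, so that $\ml(R^k)=\{p^k\}$ is a singleton.

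The heart of the argument is to show $p^k\to p$, i.e. that the vanishing perturbation $\delta D$ singles out the vertex $p$ among all of $\ml(R)$. Intuitively, as $\delta\to 0^+$ the maximin strategies of $M_R+\delta D$ accumulate only on the maximin strategies of the secondary symmetric zero-sum game with matrix $D$, played on the optimal face $\ml(R)$ of the primary game. A direct computation gives $q^T D q'=c^T q-c^T q'$ for $q,q'\in\Delta(A)$, so in this secondary game $p$ is the unique maximin strategy of the row player, being the unique $c$-maximiser on $\ml(R)$. I expect this selection step to be the main obstacle: turning ``the vanishing perturbation selects the optimal face'' into a proof requires a two-scale comparison, since along $R^k$ the first-order gain $\delta_k(p^k)^T D p$ must be weighed against the residual $(p^k)^T M_R p$, which tends to $0$ (as $q^T M_R p=0$ for all $q,p\in\ml(R)$ by skew-symmetry), and a naive bound does not control their ratio. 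The exposed-point (vertex) structure of $p$ is exactly what is needed to carry out this comparison and force $c^T p^k\to c^T p$; combined with continuity of $\ml$, which sends every accumulation point of $(p^k)$ into $\ml(R)$, this identifies the limit uniquely as $p$.

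Finally, for each $k$ we have $\emptyset\neq f(R^k)\subseteq\ml(R^k)=\{p^k\}$ by \lemref{lem:fml}, so $p^k\in f(R^k)$. Upper hemicontinuity of $f$ applied to $R^k\to R$ and $p^k\to p$ yields $p\in f(R)$. Since $p$ was an arbitrary vertex of $\ml(R)$ and $f(R)$ is convex, we conclude $\ml(R)=\conv\{\text{vertices of }\ml(R)\}\subseteq f(R)$, which is the claimed inclusion.
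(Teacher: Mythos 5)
Your high-level skeleton---reduce to the vertices of the polytope $\ml(R)$, perturb $R$ so that $f\subseteq\ml$ (from \lemref{lem:fml}) pins down the value of $f$ near $R$, and finish by continuity and convexity---is the same as the paper's, but your perturbation is genuinely different. The paper constructs, for each vertex $p$ with $|\supp(p)|$ odd, an explicit cyclic profile $S$ satisfying $p\in\ml(S)$, so that population-consistency keeps $p\in\ml\bigl((1-\epsilon)R+\epsilon S\bigr)$ \emph{exactly}, and then establishes uniqueness of that maximal lottery by a determinant/rank computation (with a separate construction that enlarges the support by one alternative when $|\supp(p)|$ is even). You instead perturb by an exposing functional $c$ of the vertex $p$ via the skew-symmetric matrix $D=c\mathbf{1}^T-\mathbf{1}c^T$. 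The difficulty is that your write-up stops exactly at the step you yourself flag as the main obstacle: you assert, but do not prove, that the maximin strategies of $M_R+\delta D$ converge to $p$ as $\delta\to 0^{+}$. As written this is a genuine gap---nothing in the proposal excludes that the perturbed optima accumulate at some other point of $\ml(R)$, and ``the vanishing perturbation selects the $c$-maximal face'' is precisely the content of the lemma in your approach.

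The gap does close, and more easily than you anticipate: no two-scale comparison is needed because the residual you worry about has a definite \emph{sign}. Let $q^\delta$ be any maximin strategy of the skew-symmetric matrix $M_R+\delta D$ (a symmetric zero-sum game, hence of value $0$), and let $q\in\ml(R)$ be arbitrary. Then
\[
0\;\le\;(q^\delta)^T\bigl(M_R+\delta D\bigr)q\;=\;(q^\delta)^T M_R\,q+\delta\bigl(c^Tq^\delta-c^Tq\bigr)\text{,}
\]
and $(q^\delta)^T M_R\,q=-q^TM_R\,q^\delta\le 0$ because $q\in\ml(R)$; hence $c^Tq^\delta\ge c^Tq$ for every $\delta>0$ and every $q\in\ml(R)$. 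Every accumulation point of $(q^\delta)_{\delta\to 0^+}$ lies in $\ml(R)$ by upper hemicontinuity of $\ml$ and maximizes $c$ there, so it equals $p$; by compactness the whole set $\ml(R^\delta)$ converges to $\{p\}$, where $R^\delta=(1-\mu)R+\mu S_D$ is the McGarvey realization (note $M_{R^\delta}$ is only a \emph{positive multiple} of $M_R+\delta D$, which is harmless since positive scaling preserves maximin strategies). Two further remarks: you must test $q^\delta$ against arbitrary $q\in\ml(R)$, not only against $p$, and the relevant feature of $(q^\delta)^TM_Rq$ is its nonpositivity, not that it tends to $0$. Once this is inserted, your detour through the dense set of profiles with a unique maximal lottery becomes unnecessary: any selection $p^\delta\in f(R^\delta)\subseteq\ml(R^\delta)$ already converges to $p$, and continuity of $f$ yields $p\in f(R)$.
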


\begin{proof}
	Let $f$ be a PSCF that satisfies population-consistency and composition-consistency, $A\in\fone(U)$, and $R\in\mathcal{R}|_A$.
	If follows from \lemref{lem:fml} that $f\subseteq\ml$.
	By neutrality, we can assume without loss of generality that $A = [n]$ and for simplicity $M = M_R$.
	We want to show that $f(R) = \ml(R)$. 
	If $\ml(R)$ is a singleton, it follows from $f\subseteq\ml$ that $f(R) = \ml(R)$.
	Hence, consider the case where $\ml(R)$ is not a singleton.
Let $p\in\ml(R)$ and assume without loss of generality that $\supp(p) = [k]$.

	We first consider the case where $k$ is odd. 
By \lemref{lem:mcgarvey}, there are $S\in\mathcal{R}|_A$ and $c\in\mathbb{Q}_{>0}$ such that
	\[
	M_S = c
		\left(
	      	\begin{array}{cccccc:ccc}
				0 & -\frac{1}{p_1p_2} & 0 & \dots & 0 & \frac{1}{p_kp_1} & 1 & \dots & 1\\
				\frac{1}{p_1p_2} & & & \multicolumn{2}{c}{\multirow{2}{*}{$\ddots$}} & 0 & \multirow{4}{*}{\vdots} & \multirow{4}{*}{$\ddots$} & \multirow{4}{*}{\vdots}\\
				0 & & \multicolumn{2}{c}{\multirow{2}{*}{$\ddots$}} & & \vdots\\
				\vdots & \multicolumn{2}{c}{\multirow{2}{*}{$\ddots$}} & & & 0\\
				0 & & & & & -\frac{1}{p_{k-1}p_{k}}\\
				-\frac{1}{p_{k}p_{1}} & 0 & \dots & 0 & \frac{1}{p_{k-1}p_k} & 0 & 1 & \dots & 1\\
				\hdashline
				-1 & \multicolumn{4}{c}{\dots} & -1 & 0 & \dots & 0\\
				\vdots & \multicolumn{4}{c}{\ddots} & \vdots & \vdots & \ddots & \vdots\\
				-1 & \multicolumn{4}{c}{\dots} & -1 & 0 & \dots & 0\\
	       	\end{array}
		\right)
	\]
	Intuitively, $M_S$ defines a weighted cycle on $[k]$.
	Note that $(p^T M_S)_i = 0$ for all $i\in\supp(p)$ and $(p^T M_S)_i > 0$ for all $i\in A\setminus\supp(p)$, \ie $p$ is a quasi-strict maximin strategy in $M_S$ in the sense of \citet{Hars73b}.
Since $p$ is a maximin strategy in $M_S$, it follows that $p\in\ml(S)$.
	For $\epsilon\in[0,1]$, we define $R^\epsilon = (1-\epsilon) R + \epsilon S$ and $M^\epsilon = M_{R^\epsilon}$.
	Population-consistency implies that $p\in\ml(R^\epsilon)$ for all $\epsilon\in[0,1]$.
Observe that $p$ is a quasi-strict maximin strategy in $M^{\epsilon}$ for every $\epsilon\in(0,1]$.
	Hence, for every maximin strategy $q$ in $M^{\epsilon}$, it follows that $(q^T M^{\epsilon})_i = 0$ for every $i\in[k]$ and $q_i = 0$ for every $i\not\in [k]$.
	It follows from basic linear algebra that
\[\det\left(\left(M_S(i,j)\right)_{i,j\in[k-1]}\right) = c^{k-1}\prod_{i = 1}^{k-1} \left(\frac{1}{p_i}\right)^2\neq 0\text,\]
and hence, $(M_S(i,j))_{i,j\in[k]}$ has rank at least $k-1$.
	More precisely, $(M_S(i,j))_{i,j\in[k]}$ has rank $k-1$, since skew-symmetric matrices of odd size cannot have full rank.\footnote{\label{note:1}A skew-symmetric matrix $M$ of odd size cannot have full rank, since $\det(M) = \det(M^T) = \det(-M) = (-1)^n\det(M) = -\det(M)$ and, hence, $\det(M) = 0$.}
	Furthermore, $\det((M^{\epsilon}(i,j))_{i,j\in[k-1]})$ is a polynomial in $\epsilon$ of order at most $k-1$ and hence, has at most $k-1$ zeros.
	Thus, we can find a sequence $(\epsilon_l)_{l\in\mathbb{N}}$ which converges to zero such that $(M^{\epsilon_l}(i,j))_{i,j\in[k]}$ has rank $k-1$ for all $l\in\mathbb{N}$.
	In particular, if $(q^TM^{\epsilon})_i = 0$ for all $i\in[k]$, then $q = p$.
	This implies that $p$ is the unique maximin strategy in $M^{\epsilon_l}$ for all $l\in\mathbb{N}$ and hence, $\{p\} = \ml(R^{\epsilon_l}) \subseteq f(R^{\epsilon_l})$ for all $l\in\mathbb{N}$.
	It follows from continuity of $f$ that $p\in f(R)$.
	
	Now we consider the case where $k$ is even. 
	$\ml(R)$ is a polytope because it is the solution space of a linear feasibility program.
	Assume that $p$ is a vertex of $\ml(R)$.
	We first show that $p$ is not quasi-strict.\footnote{The proof of this statement does not make use of the fact that $k$ is even and therefore also holds (but is not needed) for odd $k$.}
	Assume for contradiction that $p$ is quasi-strict, \ie $(p^T M)_i > 0$ for all $i\not\in [k]$. 
	Then, $\supp(q)\subseteq [k]$ for every maximin strategy $q$ of $M$.
	But then $(1+\epsilon) p - \epsilon q$ is also a maximin strategy in $M$ for small $\epsilon > 0$ as $p$ is a quasi-strict maximin strategy in $M$.
	This contradicts the assumption that $p$ is a vertex of $\ml(R)$.
	
	Hence, we may assume without loss of generality that $(p^T M)_{k+1} = 0$. Let $e_1 = M(k+1,1)/p_2$ and $e_i = (M(k+1,i) + p_{i-1}e_{i-1})/p_{i+1}$ for $i\in\{2,\dots,k-1\}$.
By \lemref{lem:mcgarvey}, there are $S\in\mathcal{R}|_A$ and $c\in\mathbb{Q}_{>0}$ such that
	 
	\[
	M_S = c
		\left(
	      	\begin{array}{ccccc:c:ccc}
				0 & e_1 & 0 & \dots & 0 & 0 & 1 & \dots & 1\\
				 -e_1 & & & \ddots & \vdots & \multirow{3}{*}{\vdots} & \multirow{4}{*}{\vdots} & \multirow{4}{*}{$\ddots$} & \multirow{4}{*}{\vdots}\\
				0 & & \ddots & & 0 & \\
				\vdots & \ddots & & & e_{k-1} & & & &\\
				0 & \dots & 0 &  -e_{k-1} & 0 & 0 & \\
				\cdashline{1-6}
				0 & \multicolumn{3}{c}{\dots} & 0 & 0 & 1 & \dots & 1\\
				\hdashline
				-1 & \multicolumn{4}{c}{\dots} & -1 & 0 & \dots & 0\\
				\vdots & \multicolumn{4}{c}{\ddots} & \vdots & \vdots & \ddots & \vdots\\
				-1 & \multicolumn{4}{c}{\dots} & -1 & 0 & \multicolumn{1}{c}{\dots} & 0\\
	       	\end{array}
		\right)
	\]
	Note that $M_S(1,k) = M_S(k,1) = 0$.
For $\epsilon>0$, let $R^\epsilon = (1-\epsilon) R + \epsilon S$ and $M^\epsilon = M_{R^\epsilon}$.
We claim that $p^\epsilon$ defined as follows is a maximin strategy in $M^{\epsilon}$. To this end, let $s_\epsilon = \frac{\epsilon c}{1 - \epsilon + \epsilon c}$.
	\[p^\epsilon_i =
	\begin{cases}
		(1-s_\epsilon)p_i &\text{if } i\in [k]\text,\\
		s_\epsilon &\text{if } i = k+1\text{, and}\\
		0 &\text{otherwise.}
	\end{cases}
	\]
	Note that $\nicefrac{1}{c}\,(p^TM_S)_1 = -p_2e_1 = -M(k+1,1)$ and, for $i\in\{2,\dots,k-1\}$,
	\[
	\frac{1}{c}(p^TM_S)_i = p_{i-1}e_{i-1} -p_{i+1}e_i = p_{i-1}e_{i-1} - (M(k+1,i) + p_{i-1}e_{i-1}) = -M(k+1,i)\text.
	\]
	To determine $(p^TM_S)_k$, we first prove inductively that $p_ie_i = \nicefrac{1}{p_{i+1}}\,\sum_{j=1}^{i} M(k+1,j)p_j$ for all $i\in[k-1]$. For $i = 1$, this follows from the definition of $e_1$. 
	Now, let $i\in\{2,\dots,k-1\}$. 
	Then,
	\begin{align*}
	p_ie_i &= \frac{p_i}{p_{i+1}}(M(k+1,i) + p_{i-1}e_{i-1}) = \frac{p_i}{p_{i+1}}(M(k+1,i) + \frac{1}{p_{i}}\sum_{j=1}^{i-1} M(k+1,j)p_{j})\\
	&= \frac{1}{p_{i+1}}\sum_{j=1}^{i} M(k+1,j)p_{j}\text,
	\end{align*}
	where the second equality follows from the induction hypothesis. 
	Now,
	\[
	\frac{1}{c}(p^TM_S)_{k} = p_{k-1}e_{k-1} = \frac{1}{p_k} \sum_{j=1}^{k-1} M(k+1,j) p_j = -\frac{1}{p_k} M(k+1,k)p_k = -M(k+1,k)\text,
	\]
	where the third equality follows from the fact that $(p^TM)_{k+1} = 0$.
	
	For $i\in[k]$, it follows from $(p^TM)_i = 0$ that $((p^\epsilon)^TM)_i = s_\epsilon M(k+1,i)$.
	Then, for $i\in[k]$,
	\[((p^\epsilon)^TM^\epsilon)_i = (1-\epsilon)s_\epsilon M(k+1,i) + \epsilon c (1-s_\epsilon) (-M(k+1,i)) = 0\text.\]
	Furthermore, it follows from $(p^TM)_{k+1} = 0$ that $((p^\epsilon)^TM^\epsilon)_{k+1} = 0$ as $M(k+1,k+1) = 0$, and, for $i\in A\setminus[k+1]$,
	\[((p^\epsilon)^TM^\epsilon)_i \ge (1-\epsilon) s_\epsilon M(k+1,i) + \epsilon c \ge -(1-\epsilon) s_\epsilon + \epsilon c > 0\text.\]
This shows that $p^\epsilon$ is a maximin strategy in $M^{\epsilon}$ and hence, $p^\epsilon\in\ml(R^\epsilon)$.
	Since $|\supp(p^\epsilon)|$ is odd, it follows from the first case that $p^\epsilon\in f(R^\epsilon)$.
	Note that $s_\epsilon$ goes to $0$ as $\epsilon$ goes to $0$.
	Hence, $p^\epsilon$ goes to $p$ as $\epsilon$ goes to $0$.
	It now follows from continuity of $f$ that $p\in f(R)$.

	Together, we have that $p\in f(R)$ for every vertex $p$ of $\ml(R)$. Since every lottery in $\ml(R)$ can be written as a convex combination of vertices, convexity of $f(R)$ implies that $f(R) = \ml(R)$.
 \end{proof}

\thmref{thm:fisml} then follows directly from Lemmas~\ref{lem:fml} and~\ref{lem:mlf}.

\begin{theorem}
	A PSCF $f$ satisfies population-consistency and composition-consistency if and only if $f=\ml$.
\end{theorem}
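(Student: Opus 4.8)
The plan is to prove the biconditional by treating its two directions separately, assembling the lemmas developed earlier in this section. Since the claim is the set equality $f(R)=\ml(R)$ at every agenda $A\in\fone(U)$ and every profile $R\in\mathcal{R}|_A$, I would argue profile-by-profile throughout.

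For the ``if'' direction I would invoke the groundwork directly. \lemref{lem:mlpscf} establishes that $\ml$ is a well-defined PSCF, \ie that it is continuous, convex-valued, unanimous, and decisive; and \lemref{lem:mlpc} establishes that $\ml$ satisfies both population-consistency and composition-consistency. Taken together, these show that $f=\ml$ is indeed a PSCF meeting both axioms, which is all the ``if'' direction requires. The reasoning here is the easy half, reducing to elementary linear algebra on the expected majority margin $p^TM_Rq$ and on the block form that $M_R$ takes when $B$ is a component.

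For the ``only if'' direction I would fix an arbitrary PSCF $f$ satisfying population-consistency and composition-consistency and prove $f=\ml$ by double inclusion. The inclusion $f\subseteq\ml$ is exactly \lemref{lem:fml}, and the reverse inclusion $\ml\subseteq f$ is exactly \lemref{lem:mlf}. As both hold at every agenda and profile, I obtain $f(R)=\ml(R)$ for all $R$, \ie $f=\ml$, and the theorem follows by combining the two directions.

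The hard part lies entirely in the ``only if'' direction, and within it chiefly in \lemref{lem:fml} (the inclusion $f\subseteq\ml$), which consumes the bulk of the work. Its proof proceeds by contradiction: supposing $f$ returns a non-maximal lottery somewhere, one combines binary choice (\lemref{lem:binary2}), Condorcet-consistency near the uniform profile (Lemmas~\ref{lem:pccond} and~\ref{lem:cancel}), and the dimensional decomposition of majority margins (Lemmas~\ref{lem:str-regular}--\ref{lem:regular}) to exhibit a full-dimensional family of profiles on which $f$ returns a non-singleton uniform lottery, with the uniform profile in its interior; this produces a strict Condorcet profile together with an $\epsilon$-ball of nearby profiles on which $f$ must be multi-valued, contradicting decisiveness. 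A secondary obstacle is \lemref{lem:mlf}, which handles $\ml\subseteq f$ via a McGarvey-type realization (\lemref{lem:mcgarvey}) and a perturbation of $M_R$ making each vertex of $\ml(R)$ the unique maximin strategy of nearby profiles, after which continuity of $f$ transfers membership back to the original profile.
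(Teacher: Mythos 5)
Your proposal is correct and follows exactly the paper's own route: the ``if'' direction from \lemref{lem:mlpscf} and \lemref{lem:mlpc}, and the ``only if'' direction by the double inclusion $f\subseteq\ml$ (\lemref{lem:fml}) and $\ml\subseteq f$ (\lemref{lem:mlf}). Your sketches of the internal structure of those two lemmas also match the paper's arguments.
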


\end{document}